\numberwithin{equation}{section}
\newcommand{\ket}[1]{\ensuremath{|#1  \rangle }}
\newcommand{\set}[1]{\ensuremath{ \{ #1 \}  }}
\newtheorem{lemma}{Lemma}
\begin{document}

\title{Disentangling $(2+1)$d Topological States of Matter with the Entanglement Negativity}
\author{Pak Kau Lim} 
\author{Hamed Asasi} 
\affiliation{Department of Physics and Astronomy, University of California, Riverside, CA 92511, USA.}
\author{Jeffrey C.Y. Teo}
\affiliation{Department of Physics, University of Virginia, Charlottesville, VA22904, USA.}
\author{Michael Mulligan}
\affiliation{Department of Physics and Astronomy, University of California, Riverside, CA 92511, USA.}

\begin{abstract}
We use the entanglement negativity, a bipartite measure of entanglement in mixed quantum states, to study how multipartite entanglement constrains the real-space structure of the ground state wavefunctions of $(2+1)$-dimensional topological phases.
We focus on the (Abelian) Laughlin and (non-Abelian) Moore-Read states at filling fraction $\nu=1/m$.
We show that a combination of entanglement negativities, calculated with respect to specific cylinder and torus geometries, determines a necessary condition for when a topological state can be disentangled, i.e., factorized into a tensor product of states defined on cylinder subregions.
This condition, which requires the ground state to lie in a definite topological sector, is sufficient for the Laughlin state.
On the other hand, we find that a general Moore-Read ground state cannot be disentangled even when the disentangling condition holds.
\end{abstract}

\maketitle

\tableofcontents

\section{Introduction}

\subsection{Background}

One of the defining characteristics of a topological phase of matter is the sensitivity of its ground state to the topology of the space on which it's placed \cite{wen-niu-1990} (see \cite{2007AnPhy.322.1477O} for a review).
For instance, the Laughlin state at filling fraction $\nu=1/m$ has ground state degeneracy $m^g$ where $g$ is the genus of space.
Topological phases with robust ground state degeneracy, such as the Laughlin state at $m > 1$, are said to be {\it long-range entangled} \cite{2012arXiv1210.1281W}.
On the other hand, short-range entangled (topological) states \cite{PhysRevB.87.155114, PhysRevB.86.125119}, which occur in the integer quantum Hall effect, have a unique ground state when placed on any closed manifold, but share other defining topological characteristics such as protected gapless boundary modes \cite{wengaplessboundary}. 

The entanglement entropy is a useful diagnostic for these two classes of states.
The entanglement entropy between subsystems $A$ and $B$ of a state $\rho \in {\cal H}_A \otimes {\cal H}_B$ equals the von Neumann entropy $S_A = -\trace_A{ \rho_A\ln \rho_A }$ of its reduced density matrix $\rho_A = \trace_{B} \rho$.
(Here we are denoting pure and mixed states by $\rho$.)
In a topological phase, the entanglement entropy scales with the linear size $L \rightarrow \infty$ of region $A$ as \cite{hamma-2005-kitaev, levin-wen-2006, kitaev-preskill-2006}
\begin{align}
\label{entropyscaling}
S_A = \alpha L - \gamma.
\end{align}
The coefficient $\alpha$ is nonuniversal and UV divergent, while the topological entanglement entropy $\gamma$ is a universal, geometry-dependent constant that characterizes the phase.\footnote{For notational simplicity, we do not indicate the dependence of $\gamma$ on $A$, $B$, or the state.}

For instance, if $A$ is a disk, $\gamma = \frac{1}{2} \log{\sum_a d_a^2}$, where the sum is over all superselection sectors of the phase and $d_a \geq 1$ is the quantum dimension of quasiparticle $a$ \cite{kitaev-preskill-2006,levin-wen-2006}.\footnote{$d_a$ controls the Hilbert space dimension $d_a^N$ of $N$ quasiparticles $a$ as $N \rightarrow \infty$. Abelian phases have $d_a = 1$ for all $a$; non-Abelian phases have at least one quasiparticle with $d_a > 1$.} 
Short-range entangled phases have a single superselection sector (corresponding to its unique ground state) with $d_0 = 1$; long-range entangled phases, which include both Abelian states like the toric code \cite{kitaev-z2} and non-Abelian states like the Moore-Read state \cite{Moore1991}, have at least two superselection sectors and, consequently, $\gamma > 0$.
There can be other ``boundary" contributions to $\gamma$ due to interactions localized along the border of $A$ in both short-range and long-range entangled states \cite{cano-2015, santos-2018} (see also \cite{2015JSMTE..04..010O, ChandranKhemaniSondhi}).
Importantly, for long-range entangled states {\it only} and when $A$ is non-contractible, $\gamma$ can receive an additional contribution---that we generally refer to as the {\it topological sector correction}---that depends on the amplitude $\psi_a$ to be in the sector $a$ degenerate ground state \cite{2008JHEP...05..016D, 2012PhRvB..85w5151Z}.
For example, consider the ground state of a topological phase on the torus: $| \Psi \rangle = \sum_a \psi_a |\overline{\Psi}_a \rangle$, where $\psi_a$ is the amplitude to be in the ground state $|\overline{\Psi}_a \rangle$ of sector $a$. 
If the torus is divided into two cylinders $A$ and $B$, then the topological entanglement entropy of region $A$ is $\gamma = \log{\sum_a d_a^2} - \sum_a |\psi_a|^2 \log{\frac{|\psi_a|^2}{d_a^2}}$.

To better understand the distinct forms of entanglement that these different contributions to $\gamma$ reflect in a topological ground state, Lee and Vidal \cite{lee-vidal-2013}, Castelnovo \cite{castel-2013}, and Wen {\it et~al.} \cite{wen-matsuura-ryu-2016} employed the {\it entanglement negativity} \cite{vidal-werner-2002}.
Unlike the entanglement entropy, which only quantifies the quantum correlations between a subsystem and its complement when $\rho$ is pure \cite{vidal-werner-2002, 2005quant.ph..4163P, PhysRevA.72.032317}, the entanglement negativity is a mixed state entanglement measure \cite{peres-1996} that can thereby distinguish multipartite features of entanglement (e.g., \cite{dur-vidal-2000}), for instance if $\rho = \tr_C |\Psi_{ABC} \rangle \langle \Psi_{ABC} |$ obtains by tracing out degrees of freedom in a third subsystem $C$.

The entanglement negativity\footnote{This quantity is also known as the logarithmic negativity. See below for the definition of the negativity.} is motivated by Peres's \cite{peres-1996} necessary condition for a mixed state $\rho \in {\cal H}_A \otimes {\cal H}_B$ to be separable.
This criterion says that a separable state $\rho$ has positive partial transpose $\rho^{T_A}$ with respect to subsystem $A$, where
\begin{align}
\langle i_A j_B | \rho^{T_A} | k_A l_B \rangle = \langle k_A j_B | \rho | i_A l_B \rangle,
\end{align}
and $| i_A \rangle, | k_A \rangle$ ($| j_B \rangle, | l_B \rangle$) are basis states for ${\cal H}_A$ (${\cal H}_B$).
The negativity ${\cal N}_{A:B}(\rho) = (|| \rho^{T_A} ||_1 - 1)/2$ sums (the absolute value of) any negative eigenvalues of $\rho^{T_A}$ and thereby measures the degree of nonseparability of $\rho$.  
Here, $||{\cal \rho}||_1 \equiv \tr \sqrt{\rho^\dagger \rho}$ is the trace norm of $\rho$.
The entanglement negativity ${\cal E}_{A:B}(\rho)$ is a closely related measure defined as
\begin{align}
\label{entanglementnegativitydef}
{\cal E}_{A:B}(\rho) = \log|| \rho^{T_A} ||_1 = \log \big( 1 + 2 {\cal N}_{A:B}(\rho) \big) .
\end{align}
In contrast to ${\cal N}_{A:B}(\rho)$, the entanglement negativity has an operational meaning as an upper bound to the amount of pure state entanglement contained in a general mixed state \cite{vidal-werner-2002}.
For pure states, ${\cal E}_{A:B}(\rho)$ reduces to the $q=1/2$ Renyi entropy of $\rho$ \cite{lee-vidal-2013}. 
Other situations in which the entanglement negativity has been measured include conformal field theory \cite{calabrese-2012}, holography \cite{2014JHEP...10..060R, 2021arXiv210111029D}, thermal phase transitions \cite{2015JPhA...48a5006C, 2019JSMTE..04.3106S, PhysRevResearch.2.043345}, topological systems with symmetry \cite{PhysRevA.98.032302} or at nonzero temperature, \cite{PhysRevB.97.144410, PhysRevLett.125.116801} non-equilibrium systems \cite{Coser_2014, Eisler_2014, HOOGEVEEN201578, PhysRevB.92.075109, 2020arXiv201101277S}, and recently at measurement-driven phase transitions \cite{2020arXiv201200031S, 2020arXiv201200040S}.

In this paper, we use the entanglement negativity to study how multipartite entanglement constrains the structure of the manybody wave function of a topological phase.
In particular, we show how topological degeneracy can prevent the {\it disentanglement} \cite{he-vidal-2015} of a topological ground state.

In general, a state $\rho \in {\cal H}_A \otimes {\cal H}_B \otimes {\cal H}_C$ is said to satisfy the {\it disentangling condition}\footnote{He and Vidal \cite{he-vidal-2015} introduced an equality like \eqref{squaremonogamy} in terms of the negativity $\mathcal{N}$ instead of entanglement negativity ${\cal E}$. 
These two forms are equivalent when ${\cal N}_{A:C}(\rho_{AC}) = 0$.} with respect to ${\cal H}_A$ and ${\cal H_C}$ if
\begin{align}
\label{disentangling}
{\cal E}_{A:B C}(\rho) = {\cal E}_{A:B}(\rho_{AB}),
\end{align}
where $\rho_{AB} = \trace_C \rho$.
Notice that $\rho_{AB}$ is necessarily mixed ($\rho$ could also be a mixed state) and so the entanglement negativity is an appropriate measure to use to compare the the quantum correlations in $\rho_{AB}$ and $\rho$.
To appreciate \eqref{disentangling}, we can heuristically view it as a special case of the monogamy-like relation,\footnote{Monogamy-like relations such as these depend on the entanglement measure and aren't generally satisfied for all states in a given Hilbert (sub-)space.
For example, this inequality isn't satisfied generally if ${\cal N}^2$ is replaced by ${\cal N}$ \cite{he-vidal-2015}.}
\begin{align}
\label{squaremonogamy}
{\cal N}^2_{A:B C}(\rho) \geq {\cal N}^2_{A:B}(\rho_{AB}) + {\cal N}^2_{A:C}(\rho_{AC}),
\end{align}
which expresses how entanglement is shared between $A$, $B$, and $C$ subsystems \cite{coffman-2000,osborne-2006}.
Since the entanglement negativity is a monotonic function of the negativity \eqref{entanglementnegativitydef}, the disentangling condition obtains when ${\cal N}_{A:C}(\rho_{AC}) = {\cal E}_{A:C}(\rho_{AC}) = 0$, i.e., there are no quantum correlations between degrees of freedom in $A$ and $C$.
In three-qubit systems, for instance, only product states such as $\ket{\Psi_{ABC}}=\ket{\Psi_{AB}}\otimes \ket{\Psi_C}$ satisfy the disentangling condition \cite{ou-fan-2007}. 
When the Hilbert space of subsystem $B$ further factorizes as
$\mathcal{H}_B=\mathcal{H}_{B_L}\otimes \mathcal{H}_{B_R}$, pure states satisfying \eqref{disentangling} can be disentangled as
\begin{align}\label{eq-disent-decomp-pure}
  \ket{\Psi_{ABC}} = \ket{\Psi_{AB_L}} \otimes \ket{\Psi_{B_R C}},
\end{align}
a result known as the disentangling theorem \cite{he-vidal-2015}.
A more general set of states that fulfill the disentangling condition are those that saturate the strong subadditivity of the entanglement entropy
\cite{gour-guo-2018}, i.e., $I_{A: B C} = I_{A:B}$, where the mutual information $I_{A: B} = S_A + S_B - S_{A B}$.
For such states, Hayden {\it et~al.}~\cite{hayden-2004} showed there exists a decomposition of the Hilbert space as
\begin{align}
  \mathcal{H}_B = \bigoplus_j \mathcal{H}_{B_L^j}\otimes \mathcal{H}_{B_R^j}
\end{align}
such that $\rho$ is separable:
\begin{align}\label{eq-ssa-struct}
\rho = \sum_j p_j \rho_{AB_L^j}\otimes \rho_{B_R^j C}.
\end{align}
Here $\set{p_j}$ are probabilities. 

\subsection{Summary of Results}
\begin{figure}[t]
  (a)\includegraphics[width=.2\textwidth]{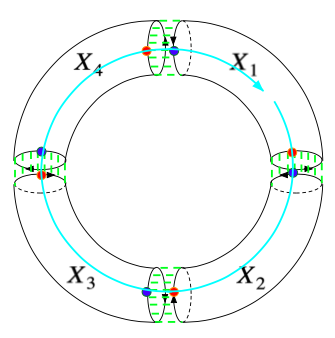}%
    \hspace{0.2cm}
  (b)\includegraphics[width=.2\textwidth]{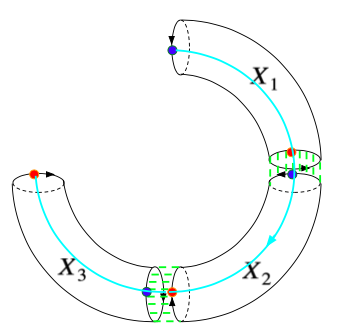}%
\caption{(a) ({\it Torus geometry}) Decomposition of a torus $X$ into $2M = 4$ cylinders $X_1, X_2, X_3, X_4$ with $X_{\rm odd} = X_1 \cup X_3$ and $X_{\rm even} = X_2 \cup X_4$; (b) ({\it Cylinder geometry}) Degrees of freedom in cylinder $\bar{Y} = X_4$ have been traced over; the remaining cylinders $Y = Y_{\rm odd} \cup Y_{\rm even}$ with $Y_{\rm odd} = X_1 \cup X_3$ and $Y_{\rm even} = X_2$ have $R=2$ shared interfaces. Interactions between low-energy boundary modes at cylinder interfaces are indicated by dashed green lines. Superselection sector $a$ is represented by the blue (quasiparticle) threading the center of the (solid) torus.}
\label{4cylinders}
\end{figure}
We study the disentangling condition \eqref{disentangling} for the (Abelian) Laughlin and (non-Abelian) Moore-Read states at filling fractions $\nu = 1/m$.
We show explicitly how Laughlin states satisfying this condition can be disentangled according to either Eqs.~\eqref{eq-disent-decomp-pure} or \eqref{eq-ssa-struct}.
Interestingly, we find that a general Moore-Read ground state cannot be disentangled even when \eqref{disentangling} is satisfied. 

To do this, we use the cut and glue construction of these states \cite{Elitzur:1989nr, doi:10.1142/S0217979292000840, Qi-Katsura-Ludwig-2012, lundgrenentanglement, Teo-Kane-2014} to calculate the entanglement negativity in two related geometries (see Fig.~\ref{4cylinders}).
(When there is overlap, our results agree with \cite{lee-vidal-2013, castel-2013, wen-matsuura-ryu-2016}.)
In the first, we partition a torus into $2M$ cylinders $X_i$ ($i \in \{1, \ldots, 2M$\}) and perform partial transposition with respect to degrees of freedom on the ``odd" cylinders $X_{\rm odd} \equiv X_1 \cup X_3 \cup \cdots \cup X_{2M-1}$ (i.e., cylinders $X_1$ and $X_3$ in Fig.~\ref{4cylinders}a).
We find that the entanglement negativity is
\begin{align}
\label{torusgeneralstatement}
    {\cal E}_{X_{\rm odd}: X_{\rm even}} = 2 \log \sum_a |\psi_a| \zeta_a^{2M},
\end{align}
where $X_{\rm even} \equiv X_2 \cup X_4 \cup \cdots \cup X_{2M}$ and $\psi_a$ is the unit-normalized amplitude to be in the sector $a$ torus ground state. 
$\zeta_a$ is a ratio of sector $a$ edge state partition functions at inverse ``temperatures" $\beta = 1/2$ and $\beta = 1$: 
\begin{align}
\label{zetadef}
    \zeta_a = \frac{\tr e^{-H_a/2}}{\sqrt{\tr e^{-H_a}}}
\end{align}
with entanglement Hamiltonian $H_a$.
This dependence of the entanglement negativity on the spectrum of the entanglement Hamiltonian is reminiscent of a similar dependence ($\rho_A \propto e^{- H_a}$) of the entanglement entropy, e.g., \cite{lihaldane, Regnault2009, Thomale2010, lauchli2010, papic2011, chandran2011, hermanns2011, Simon2013, Pollman2010, Fidkowski2010, Prodan2010, fang2013}.
In contrast to the entanglement entropy, the entanglement negativity measures the system at two different ``temperatures" \cite{2014JHEP...10..060R}. 
For the fully chiral topological phases that we study, i.e., when all the edge modes move in the same direction, $H_a$ is proportional to the edge state Hamiltonian.
In general, there can be a different $H_a$ for each of the $2M$ interfaces \cite{cano-2015}; here we only consider torus states where the interactions are the same at each interface.

The second geometry that we consider is obtained by tracing over the degrees of freedom on $N \leq M$ cylinders $\bar{Y} \subset X$ [for example, $X_4$ in Fig.~\ref{4cylinders}(b)].
We show that the entanglement negativity of the resulting state is
\begin{align}
\label{cylindergeneralstatement}
    {\cal E}_{Y_{\rm odd}:Y_{\rm even}} = \log \sum_a \left( |\psi_a| \zeta_a^{R} \right)^2,
\end{align}
where $R$ is the number of shared interfaces between the remaining cylinders $Y_{\rm odd}$ and $Y_{\rm even}$ whose degrees of freedom have not been traced over [e.g., $R=2$ in Fig.~\ref{4cylinders}(b)] and $\zeta_a$ is again given in \eqref{zetadef}.

Thus, the entanglement negativities \eqref{torusgeneralstatement} and \eqref{cylindergeneralstatement} are determined by ratios of entanglement Hamiltonian partition functions.
For the Laughlin and Moore-Read states, we show that the above entanglement negativities take the form:
\begin{align}
\label{torusresult}
    {\cal E}_{X_{\rm odd}: X_{\rm even}} & = M \alpha L - M \log {\cal D}^2 + 2 \log \sum_a |\psi_a| d^M_a, \\
    \label{cylinderresult}
        {\cal E}_{Y_{\rm odd}: Y_{\rm even}} & = {\frac{R}{2}} \alpha L - {\frac{R}{2} } \log {\cal D}^2 + \log \sum_a |\psi_a|^2 d^{R}_a,
\end{align}
where $\alpha$ is nonuniversal, $d_a$ is the quantum dimension of quasiparticle $a$, and ${\cal D} = \sqrt{\sum_a d_a^2}$ is the total quantum dimension of the phase.
The  Laughlin state has $m$ Abelian anyons each with quantum dimension $d_a = 1$; the Moore-Read state has $2m$ Abelian anyons ($d_a = 1$) and $m$ non-Abelian anyons with quantum dimensions $d_a = \sqrt{2}$, corresponding to the Majorana quasiparticle.

We use these entanglement negativities \eqref{torusresult} and \eqref{cylinderresult} to test the disentangling condition \eqref{disentangling} for the geometries in Fig.~\ref{4cylinders}.
For a general topological state on the torus, we find\footnote{In \eqref{torusresult} we set $M = 1$ for the two cylinders $A = X_2$ and $B \cup C =(X_1 \cup X_3)\cup X_4$; in \eqref{cylinderresult} we set $R=2$ for the two cylinders $A = X_2$ and $B = X_1 \cup X_3$; and we use $\mathcal{E}_{A:B} = {\cal E}_{B:A}$.}
\begin{align}
\label{longrangeentanglementgeneral}
{\cal E}_{A:B C}(\rho) - {\cal E}_{A:B}(\rho_{AB}) = 
\log \frac{\big(\sum_a |\psi_a| d_a \big)^2}{\sum_a |\psi_a|^2 d_a^2}.
\end{align}
Thus, the disentangling condition is only satisfied when the torus state lies in a specific topological sector with $\psi_a = 1$ for some $a$ and all other amplitudes equal to zero.
For topological states on the cylinder, the disentangling condition is always satisfied.

We find the disentangling condition \eqref{disentangling} is generally only a necessary condition to allow the disentanglement of a topological state.
Specifically, we show that Laughlin and untwisted sector Moore-Read states
can be disentangled according to Eqs.~\eqref{eq-disent-decomp-pure} and \eqref{eq-ssa-struct} when \eqref{disentangling} holds;
on the other hand, twisted sector Moore-Read states cannot be disentangled even when the disentangling condition is satisfied.
(As we review later, the Moore-Read state decomposes into so-called untwisted and twisted sectors, associated to Abelian and non-Abelian bulk quasiparticles.)
These results provide a precise illustration for how entanglement and non-Abelian topological order constrain a manybody wave function.

The remainder of this paper is organized as follows.
In \S \ref{sectiontwo}, we review the edge-state theories for the Laughlin and Moore-Read states at filling fraction $\nu = 1/m$ and how the torus or cylinder ground state is built out of topological states on sub-cylinders (e.g., according to the geometry in Fig.~\ref{4cylinders}).
In \S \ref{sectionthree}, we derive the entanglement negativities in Eqs.~\eqref{torusresult} and \eqref{cylinderresult}.
In \S \ref{sectionfour}, we discuss the implications of these results for disentangling topological states.
In \S \ref{sectionfive}, we conclude and discuss possible directions of future study.

\section{Cut and Glue Approach to Torus Ground States}
\label{sectiontwo}

In this section we review the edge-state theories for the Laughlin and Moore-Read states and how topological states on the torus can be decomposed into states on the sub-cylinders using the corresponding edge states.
In the next section we study the multipartite entanglement properties of these torus and cylinder states.

\subsection{Laughlin Interface Ground State}
\label{SecIIB}

We start by discussing the construction of the Laughlin state at filling fraction $\nu=1/m$ on the torus. 
One approach is to ``glue" together a collection of parallel 1d wires each hosting a single, nonchiral electron by suitable sine-Gordon inter-wire couplings \cite{Teo-Kane-2014}. 
An equivalent approach \cite{Elitzur:1989nr, doi:10.1142/S0217979292000840, Qi-Katsura-Ludwig-2012, lundgrenentanglement, Teo-Kane-2014}, which we follow here, is to construct the torus state by ``gluing" together a collection of cylinder states in the target phase of interest along their shared boundaries  by appropriate edge-state interactions.

In the Laughlin phase, each cylinder $X_i$ with $i \in \{1, \ldots, 2M \}$ hosts a pair of ${\rm U}(1)_{\rm m}$ chiral edge modes $\phi_i^\sigma$ with Lagrangian density,
\begin{align}
\label{laughlinlagrangian}
\mathcal{L}^\sigma_i = \frac{m}{4\pi}\partial_x\phi^\sigma_i\left(\sigma\partial_t-v_c\partial_x\right)\phi^\sigma_i.
\end{align}   
Here, $\phi^\sigma_i \sim \phi^\sigma_i + 2\pi \mathbb{Z}$ with $\sigma = L(R) = +1(-1)$ is a real, boson field that takes values on a circle of unit radius and $v_c > 0$ is the common\footnote{This simplification does not affect our conclusions; it merely simplifies the presentation.} velocity of the edge modes.
The charge density on each edge is $\rho^\sigma_i=\partial_x\phi^\sigma_i/(2\pi)$ in units where $e=1$.
The Lagrangian implies the equal-time commutation relations, 
\begin{align} \left[\phi^\sigma_i(x),\partial_{x'}\phi^\sigma_i(x')\right]=\frac{2\pi i \sigma}{m}\delta(x-x').
\label{ETCRLaughlin}\end{align} 
The primary fields of the theory are the vertex operators $e^{i r \phi^\sigma_i}$ for $r \in \{0, 1, \ldots, m-1\}$.
They carry charge $\sigma r/m$ and spin\footnote{The spin of an operator with left and right scaling dimensions $(h_L, h_R)$ equals $|h_R - h_L|$.} $r^2/2m$.
For $r > 0$, these operators create/destroy for $\sigma = L/R$ fractionally-charged Laughlin quasiparticles at a point along the edge.
The monodromy braiding phase between bulk quasiparticles, corresponding to operators $e^{ir\phi}$ and $e^{ir'\phi}$, equals $e^{2 \pi irr'/m}$. 
Local quasiparticles correspond to products of the fundamental electronic operator $e^{i m \phi^\sigma_i}$ carrying unit charge and integer (half-integer) spin when $m$ is even (odd). 
The braiding phase between mutually local quasiparticles is trivial, i.e., equal to one.
(For example, when $m$ is odd, $e^{i m \phi^\sigma_i}$ creates/destroys an electron on the edge.)

Take the boundary circles on each cylinder to have circumference $L$.
Then $\phi^\sigma_i$ has the mode expansion: 
\begin{align}
\begin{split} \phi^R_i &= \phi^R_{i,0} + 2\pi N^{RX_i}\frac{x}{L} \\
&\;\;\;\;+ \sum_{k>0}\sqrt{\frac{2\pi}{mL|k|}}\left(a_{i,k}e^{ikx}+(a_{i,k})^\dagger e^{-ikx}\right), \\
\phi^L_i &= \phi^L_{i,0} + 2\pi N^{LX_i}\frac{x}{L} \\
&\;\;\;\;+ \sum_{k<0}\sqrt{\frac{2\pi}{mL|k|}}\left( a_{i,k}e^{ikx} + 
(a_{i,k})^\dagger e^{-ikx} \right)\end{split}\label{ModeExpansionMRboson}
\end{align} with integer quantized momenta $k=2\pi j/L$ and $j\in \mathbb{Z}\backslash \{0\}$. 
Here, $k>0$ ($k<0$) corresponds to a right (left) mover.
The superscript $RX_i$ ($LX_i$) refers both to the right (left) edge and the right-moving (left-moving) edge mode of cylinder $X_i$.
The equal-time commutation relations imply the mode operators in \eqref{ModeExpansionMRboson} satisfy the following commutation relations:
\begin{align}\begin{split}
\left[a_{i,k}, (a_{i,k'})^\dagger\right] &= \delta_{k,k'}, \quad \big[a_{i,k}, a_{i,k'}\big]=0,\\
\left[\phi^R_{i,0},N^{RX_i}\right]&=-\left[\phi^L_{i,0}, N^{LX_i}\right]=-\frac{i}{m}.
\end{split}
\end{align} 

The winding number $N^{\sigma X_i}$ measures the total charge of the $\sigma X_i$ edge state since
\begin{align}N^{\sigma X_i} = \int_0^L \frac{\partial_x\phi_i^\sigma}{2\pi}dx.
\label{Laughlinwindingnumberdef}
\end{align}
The local operator $e^{im\phi^\sigma_i}$ obeys periodic boundary conditions (in the absence of any additional fields). 
For this requirement to be consistent with Eq.~\eqref{Laughlinwindingnumberdef},
\begin{align}
e^{im\phi_i^\sigma(x+L)} = e^{im\phi_i^\sigma(x)}e^{im2\pi N^{\sigma X_i}},
\end{align} the winding number must be quantized as $N^{\sigma X_i} -\sigma\frac{a}{m}\in\mathbb{Z}$ \cite{sohal-2020}. 
Thus, $a=0,1,\ldots,m-1$ (mod m) specifies $m$ inequivalent boundary conditions for $\phi_i^\sigma$.
As the notation suggests, these boundary conditions are in 1:1 correspondence with the different anyon types.
In particular, boundary condition $a$ can be viewed as resulting from threading the flux of anyon $a$ through the cylinder (see Fig.~\ref{fig:cylinder}). 
Each of these boundary conditions corresponds to a Wilson line of type $a$ connecting the two edges, obtained by the creation of an anyon of type $a$ on, say, the left edge and its subsequent destruction on the right edge.
\begin{figure}[t]
\centering
\includegraphics[width=0.3\textwidth]{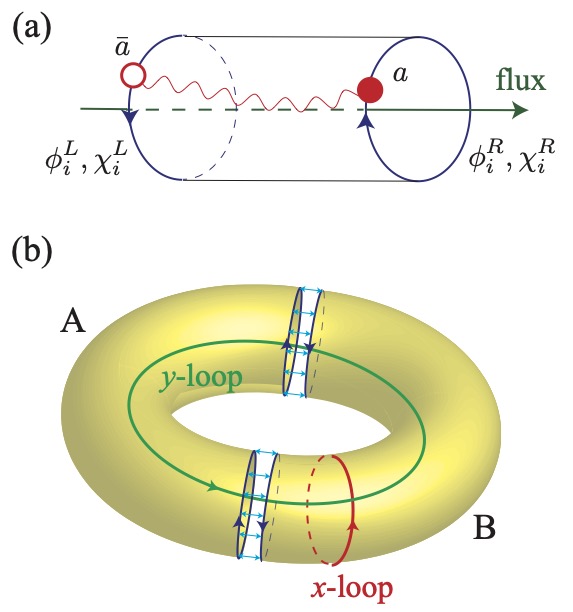}\caption{(a) Anyon flux threading continuously across a cylinder with no bulk excitation. $\phi_i^{L,R}$ refer to bosonic edge modes of the Laughlin and Moore-Read states; $\chi_i^{L,R}$ refer to fermionic edge modes only in the Moore-Read state. 
(b) Wilson string operators in the $x$ or $y$ directions parallel or perpendicular to interfaces between cylinders $A$ and $B$.
}\label{fig:cylinder}
\end{figure}

We are interested in ``gluing" together the right edge states of cylinder $X_{i-1}$ to the left edge states of cylinder $X_i$ to form the torus state.
(The subscripts are $2M$ periodic: $X_0 \equiv X_{2M}$ and therefore $X_{2M + 1} \equiv X_1$.)
This means we want to add a suitable interaction between edge modes on the right edge of cylinder $X_{i-1}$ and the left edge of cylinder $X_i$ that results in a gapped state along their shared interface $i$.
According to \eqref{laughlinlagrangian}, before the interaction is added, the relevant edge modes are controlled by the Hamiltonian,
\begin{align} H^{(0)}_{i} = \frac{mv_c}{4\pi}\int_0^L dx\left[(\partial_x\phi^R_{i-1})^2 + (\partial_x\phi^L_i)^2\right].
\end{align} 
The edges are glued together by an interaction that tunnels a local boson or fermion between nearby edges.
This is accomplished by the sine-Gordon interaction, 
\begin{align} H^{(1)}_{i} = -\frac{2g}{\pi}\int_0^L dx \cos\left[m\left(\phi^R_{i-1} +\phi^L_i \right)\right].
\end{align} 
We take coupling of the interaction $g>0$ to be independent of $i$. 
The total Hamiltonian at interface $i$ is therefore
\begin{align} 
H_{i} = H^{(0)}_{i}+H^{(1)}_{i}.
\end{align} 
The resulting torus Hamiltonian is then $H = \sum_i H_i$.
Upon projecting each cylinder $X_i$ into the same topological sector $a$, i.e., all edge modes obey the same boundary conditions around $L$, these decoupled $H_i$ may be considered independently.

For large coupling $g\rightarrow \infty$, we approximate the sine-Gordon potential at quadratic order in an expansion in $(\phi^R_{i-1} +\phi^L_i )$ \cite{lundgrenentanglement, Teo-Kane-2014}.
This is a dramatic simplification that enables the following exact solution to the approximated $H_i$; it relies on the ability of the sine-Gordon potential to generate a gapped interface ground state.
(We will denote and refer to the approximated Hamiltonian by $H_i$.)
Using the mode expansion \eqref{ModeExpansionMRboson} for the bosons, the total Hamiltonian decouples into zero and oscillation (osc) mode sectors: 
\begin{align} 
H_i = H^{\mathrm{zero}}_{i,b} + H^{\mathrm{osc}}_{i,b}.\label{ApproximateLaughlin}
\end{align} 
Defining $X_i = m\left(N^{RX_{i-1}}-N^{LX_i}\right)/2$ and $P_i=\phi^R_{i-1,0}+\phi^L_{i,0}$ such that $[X_i,P_i]=i$, the zero mode Hamiltonian is
\begin{align}
\label{zeromodeLaughlin}
H^{\mathrm{zero}}_{i,b} = \frac{2\pi v_c }{mL}X_i^2 + \frac{\pi \lambda v_c L}{2}P_i^2,
\end{align} 
where $\lambda \equiv 2gm^2/\pi^2v_c>0$.
This has the form of an harmonic oscillator Hamiltonian and a corresponding ground state, 
\begin{widetext}
\begin{align}
\label{GSLzm}
    |b^{\mathrm{zero}}_{a,i} \rangle & = \sum_{N_{a,i} \in \mathbb{Z} - {\frac{a}{m}}} e^{- {v_e \pi m \over 2 L} N_{a,i}^2} |N^{RX_{i-1}} = N_{a,i} \rangle_{RX_{i-1}} \otimes |N^{LX_{i}} = - N_{a,i} \rangle_{LX_{i}}.
\end{align}
\end{widetext}
$RX_{i-1}$ ($LX_{i}$) labels the Hilbert space of edge modes on the right (left) boundary of cylinder $X_{i-1}$ ($X_{i}$) with $X_{0} \equiv X_{2M}$ and $v_e = \frac{2}{\pi}\sqrt{\frac{m}{\lambda}}$ is the entanglement velocity.

The oscillation mode Hamiltonian is
\begin{align}
\begin{split} H^{\mathrm{osc}}_{i,b} &= v_c\sum_{k>0}\begin{bmatrix}
(a_{i-1,k})^\dagger & a_{i,-k}\end{bmatrix}\begin{bmatrix}
A_k & B_k \\ B_k & A_k\end{bmatrix}\begin{bmatrix}
a_{i-1,k} \\ (a_{i,-k})^\dagger\end{bmatrix} \end{split}\label{OscBH}
\end{align} with $A_k\equiv |k|+\frac{2\lambda\pi^2}{m|k|}$ and $B_k\equiv\frac{2\lambda\pi^2	}{m|k|}$. 
Using the Bogoliubov transformation,
\begin{align}
\begin{bmatrix}\beta_{i,k} \\ (\gamma_{i,k})^\dagger\end{bmatrix}
=\begin{bmatrix}\cosh\theta_k & \sinh\theta_k \\ \sinh\theta_k & \cosh\theta_k\end{bmatrix}\begin{bmatrix}
a_{i-1,k} \\ (a_{i,-k})^\dagger
\end{bmatrix},
\end{align} 
where $\cosh 2\theta_k = A_k/E_k$ and $\sinh 2\theta_k = B_k/E_k$ with $E_k=\sqrt{|k|^2+4\lambda\pi^2/m}$, the oscillation mode Hamiltonian is diagonal, \begin{align}H^{\mathrm{osc}}_{i,b}=v_c\sum_{k>0}E_k\left(\beta_{i,k}^\dagger\beta_{i,k}+\gamma_{i,k}^\dagger\gamma_{i,k}+1\right).
\end{align} 
The ground state of the diagonalized Hamiltonian is given by the coherent state \citep{sohal-2020}, 
\begin{align}
|b^{\mathrm{osc}}_i\rangle &=\prod_{k>0}e^{-\Omega_k (a_{i-1,k})^\dagger(a_{i,-k})^\dagger}|0\rangle,
\label{LCoherentState}
\end{align} 
where $|0\rangle$ is the vacuum state annihilated by all $a_{i-1, k}$ and $a_{i,-k}$. 
$|b^{\mathrm{osc}}_{i}\rangle$ satisfies $\beta_{i,k}|b^{\mathrm{osc}}_{i}\rangle = \gamma_{i,k}|b^{\mathrm{osc}}_{i}\rangle=0$ for $k > 0$ with $\Omega_k = \tanh\theta_k$. 
In the limit $|k| \ll \lambda$, $\tanh\theta_k\approx v_e k/2$.
Upon expanding the exponential in \eqref{LCoherentState},  the oscillation ground state can be rewritten as 
\begin{widetext}
\begin{align}
\label{GSLosc}
|b^{\mathrm{osc}}_{i} \rangle & = \sum_{\{n_{i,k} \in \mathbb{Z}^+ \}} e^{-\sum_{k>0} {v_e k \over 2} (n_{i,k} + 1/2)} |\{n^{RX_{i-1}}_{b, k}= n_{i,k} \}_{k>0} \rangle_{RX_{i-1}} \otimes |\{n^{LX_{i}}_{b,- k} = n_{i,k} \}_{k>0} \rangle_{LX_{i}}.
    \end{align}
    \end{widetext}
Here, $n^{RX_{i-1}}_{b,k}$ is the eigenvalue of the right-moving number operator $(a_{i-1,k})^\dagger a_{i-1,k}$ on cylinder $X_{i-1}$ and $n^{LX_{i}}_{b,k}$ the eigenvalue of the left-moving number operator $(a_{i,-k})^\dagger a_{i,-k}$ on cylinder $X_i$.
The coherent state form for $|b^{\mathrm{osc}}_i\rangle$ in \eqref{LCoherentState} ensures these two eigenvalues coincide in each interface oscillator state.

Putting together these results, we find the unnormalized torus state in sector $a$ equals
\begin{align}
|\Psi_a\rangle = \bigotimes_i |b^{\mathrm{zero}}_{a,i}\rangle\otimes|b^{\mathrm{osc}}_i\rangle.
\label{LaughlinGroundState}
\end{align} 
The topological sector label $a=0,1,\ldots,m-1$ coincides with the $m$-fold ground state degeneracy of the Laughlin phase on the torus.
Notice that each cylinder is in the same topological sector $a$.
This follows from our assumption that there are no bulk excitations inside any cylinder. 
Consequently, {\em all} cylinders are threaded by the same anyon flux $a$ and $N^{RX_i} = -N^{LX_i} =a/m$ mod 1 for all $X_i$ (see Fig.~\ref{fig:cylinder}). 
Using \eqref{GSLzm}, $N_{a,i}=-N^{LX_i} = N^{RX_i}=N_{a,i+1}$ mod $1$ and therefore $N_{a,i}\equiv a/m$ mod 1 for all cylinder $i$. 
A general (unnormalized) ground state on the torus is the linear combination of states $|\Psi_a\rangle$ with different anyon fluxes $a$.

\subsection{Moore-Read Interface Ground State}

The Moore-Read 
state at filling fraction $\nu=1/m$ has  $({\rm U}(1)_{\rm m}\times \text{Ising})/\mathbb{Z}_2$ topological order. 
The $\mathbb{Z}_2$ symmetry couples together the ${\rm U}(1)_{\rm m}$ and $\text{Ising}$ topological orders.
The ${\rm U}(1)_{\rm m}$ sector edge states are described by the same bosonic fields $\phi^\sigma_i$ used in the construction in the Laughlin state.
In particular, the commutation relations \eqref{ETCRLaughlin} and mode expansions \eqref{ModeExpansionMRboson} still hold.  
The Ising sector, which has electrically-neutral Majorana fermion edge states,
supports bulk quasiparticles $1,\chi$, and $\xi$.
Here, $1$ labels the identity sector containing the vacuum; $\chi=\chi^\dagger$ is the neutral Majorana fermion; and $\xi$ is the non-Abelian Ising twist field. 
The Ising anyon and the Majorana fermion have mutual semionic statistics, so that the monodromy braiding phase between $\chi$ and $\xi$ is $-1$.

\begin{widetext}
\centering
\begin{table*}[htbp]
\begin{tabular}{c|cccccccccc}
&1&$e^{1/2}$&$e^1$&$e^{3/2}$&$\ldots$&$e^r$&$e^{r+1/2}$&$\ldots$&$e^{m-1}$&$e^{m-1/2}$\\\hline
1&0&$\ast$&$\frac{1}{2m}$&$\ast$&$\ldots$&$\frac{r^2}{2m}$&$\ast$&$\ldots$&$\frac{m^2+1}{2m}$&$\ast$\\
$\chi$&0&$\ast$&$\frac{1}{2}+\frac{1}{2m}$&$\ast$&$\ldots$&$\frac{1}{2}+\frac{r^2}{2m}$&$\ast$&$\ldots$&$\frac{1}{2}+\frac{m^2+1}{2m}$&$\ast$\\
$\xi$&$\ast$&$\frac{1}{16}+\frac{1}{8m}$&$\ast$&$\frac{1}{16}+\frac{9}{8m}$&$\ldots$&$\ast$&$\frac{1}{16}+\frac{(2r+1)^2}{8m}$&$\ldots$&$\ast$&$\frac{1}{16}+\frac{(2m-1)^2}{8m}$
\end{tabular}
\caption{The $3m$ anyon types of the Moore-Read topological order. Occupied entries are the spins (mod 1) of distinct (deconfined) anyons, $I^r=e^r \equiv e^{ir\phi}$, $\chi^r=\chi e^r$ and $\xi^{r+1/2}=\xi e^{r+1/2}$, for $r=0,1,\ldots,m-1$. Empty entries ($\ast$) are confined fields disallowed by electron locality.}\label{tab:MRanyons}
\end{table*}
\end{widetext}

We set the notion of locality in the Moore-Read edge-state theory by taking the fundamental electronic operator to be $\psi_{\rm el}=\chi e^{im\phi}$.
When $m$ is even, $\psi_{\rm el}$ is a fermion; when $m$ is odd, $\psi_{\rm el}$ is a boson.
Integral combinations of the fundamental electronic operator, such as $e^{\pm 2im\phi}$, belong to the identity sector. 
They are mutually local in the sense that the corresponding bulk quasiparticles have trivial monodromy braiding phases with one another.

The remaining anyons in the Moore-Read
theory correspond to the operators $I^r=e^{ir\phi}$, $\chi^r=\chi e^{ir\phi}$ and $\xi^{r+1/2}=\xi e^{i(r+1/2)\phi}$, where $r \in \{0, 1, \ldots, m -1 \}$.
The corresponding anyons have trivial braiding monodromy with linear combinations of $\psi_{\rm el}$. 
The anyons obey the fusion rules: 
\begin{align} \begin{split}&I^r\times I^{r'}=\chi^r\times\chi^{r'}=I^{r+r'},\quad I^r\times\chi^{r'}=\chi^{r+r'}\\&I^r\times\xi^{r'+1/2}=\chi^r\times\xi^{r'+1/2}=\xi^{r+r'+1/2},\\&\xi^{r+1/2}\times\xi^{r'+1/2}=I^{r+r'+1}+\chi^{r+r'+1}.\end{split}
\label{Isingfusionrules}
\end{align} 
The fusion rules imply the quantum dimensions $d_{I^r}=d_{\chi^r}=1$ and $d_{\xi^{r+1/2}}=\sqrt{2}$. 
The locality of the electronic operator dictates that fields that differ by $\psi_{\rm el}$ belong in the same anyon class, $a\times\psi_{\rm el}\equiv a$. 
Hence, the anyon types have a $m$-fold (i.e., charge $e$) periodicity \begin{align}\chi^{r+m}\equiv I^r,\quad I^{r+m}\equiv\chi^r,\quad\xi^{r+m+1/2}\equiv\xi^{r+1/2}.\label{MRanyonperiodicity}\end{align} 
In total, there are $3m$ distinct anyon classes; they are listed in Table~\ref{tab:MRanyons}.

Bulk anyonic quasiparticles are non-local excitations that must come in conjugate pairs in real space, i.e., the total anyon charge contained in a region is conserved.
Anyons in the physical Hilbert space are identified by equivalence classes of particles. Two anyons belong to the same class if they differ by a multiple of the electronic operator.
Different topological sectors on the torus are obtained by imagining a process in which an anyon-anti-anyon pair is nucleated at a point and then each is dragged around the $y$-loop in Fig.~\ref{fig:cylinder} in opposite directions until they meet again and annihilate.
Decomposing the torus into cylinders, edges of adjacent cylinders must therefore carry conjugate anyon charge (see Fig.~\ref{4cylinders}).
This constraint was imposed implicitly when we considered the Laughlin state by requiring each cylinder to lie in sector $a$; in the present case, the presence of non-Abelian quasiparticles makes this more delicate, as we discuss. 

The Moore-Read edge-states on cylinder $X_i$ are described by the Lagrangian density \cite{Milovanovic-Read-1996},   
\begin{align} \begin{split}
\mathcal{L}^\sigma_{i} &=\frac{i}{2}\chi^\sigma_i
\left(\partial_t-\sigma v_m\partial_x\right)\chi^\sigma_i\\
&\;\;\;\;+\frac{m}{4\pi}\partial_x\phi^\sigma_i\left(\sigma\partial_t -v_c\partial_x\right)\phi^\sigma_i. \end{split}\label{MRLagrangian}
\end{align} 
As before, $\phi_i^\sigma$ is a real boson with unit compactification radius and $\sigma \in \{L,R\} = \{\pm 1\}$, $\chi_i^\sigma$ is a Majorana fermion, and $v_c$ ($\tilde{v}_c$) is the velocity of the the boson (Majorana fermion). 
$\chi_i^\sigma$ satisfies the anti-commutation relations,
\begin{align}
\{\chi^\sigma_i(x),\chi^\sigma_i(x')\}&=\delta(x-x').
\label{ChargednNeutralSectors}
\end{align} 
The mode expansions \citep{sohal-2020} of the Majorana fermion fields are 
\begin{align} 
\chi^R_i &= \frac{1}{\sqrt{L}} \sum_k e^{ikx} c^R_{i,k},\quad \chi^L_i = \frac{1}{\sqrt{L}} \sum_k e^{ikx} c^L_{i,k}.
\label{ModeExpansionMajorana}
\end{align} 
The fermionic mode operators $c^\sigma_{i,k}$ obey $(c^\sigma_{i,k})^\dagger = c^\sigma_{i,-k}$ since $\chi_i^\sigma$ is real, and the anti-commutation relations \begin{align}
\{c^\sigma_{i,k}, c^{\sigma'}_{i',k'}\} = \delta_{k,-k'}\delta_{ii'}\delta^{\sigma\sigma'}.
\end{align} 

The Moore-Read
state is classified into untwisted and twisted sectors \cite{sohal-2020}. 
In the untwisted sector, the Majorana fermions obey anti-periodic boundary conditions [$\chi^\sigma_i(x+L) = - \chi^\sigma_i(x)$]. 
Consequently, the fermionic momenta are quantized in half integers: $k=\frac{2\pi}{L}(j+1/2)$ with $j\in\mathbb{Z}$. 
This sector consists of Abelian quasiparticles that correspond to vertex operators $\{e^{ir\phi^\sigma_i}, \chi^\sigma_ie^{ir\phi^\sigma_i}\}$. 
The boson winding number is quantized as $N^{\sigma X_i}-\sigma r/m\in\mathbb{Z}$. 

In the twisted sector, the Majorana fermion is periodic ($\chi_i^\sigma(x+L) = \chi_i^\sigma(x)$) and the fermionic momenta are integrally quantized: $k=\frac{2\pi j}{L}, j\in\mathbb{Z}$.
The change in boundary conditions is effected by inserting a $\pi$ flux through the cylinder. 
In addition to the fermion oscillation modes with nonzero momenta ($k > 0$), there is an additional Majorana zero mode ($k=0$) $c^\sigma_{i,0}$ due to the integral quantization of momenta in the twisted sector.
The boson winding number also changes its quantization to $N^{\sigma X_i}-\sigma\frac{r+1/2}{m}\in \mathbb{Z}$ in response to the added $\pi$ flux.
The non-Abelian bulk quasiparticles are associated to the vertex operators $\{e^{i(r+1/2)\phi^\sigma_i} \}$.

While the boson and fermion modes are decoupled in the Lagrangian \eqref{MRLagrangian}, physical states must be 
invariant under a $\mathbb{Z}_2$ internal symmetry.
This neutrality requirement introduces correlations between the bosonic and fermionic components of a physical state.
To see how this works, we first observe that the local electronic operator $(\psi_{\rm el})_i^\sigma=\chi^\sigma_ie^{im\phi^\sigma_i}$ is neutral under the following $\mathbb{Z}_2$ transformation, which is local to a given cylinder $X_i$: \begin{align}\mathbb{Z}_2(i):\quad\chi^\sigma_{i'}\to(-1)^{\delta_{ii'}}\chi^\sigma_{i'},\quad\phi^\sigma_{i'}\to\phi^\sigma_{i'}+\frac{i\pi\sigma}{m}\delta_{ii'}.
\label{internalZ2symm}
\end{align} 
Consequently, any integral combination of electron operators, such as a Wilson string that creates a conjugate pair of anyons on the two ends of $X_i$, must be even under the local $\mathbb{Z}_2$ symmetry. 
Assuming there are no bulk excitations inside any of the cylinders, the artificially extended Hilbert space in which the bosons and fermions are decoupled where \eqref{MRLagrangian} acts must be restricted to the physical Hilbert space that is invariant under all $\mathbb{Z}_2(i)$ symmetries. 
The restriction can be achieved by the projection operator $\mathcal{P}=\prod_i\mathcal{P}_{X_i}$, where \begin{align}\mathcal{P}_{X_i}=\frac{1}{2}\left(1+(-1)^{N^{LX_i}+N^{RX_i}}(-1)^{F^{LX_i}+F^{RX_i}}\right)\label{cylinderprojection}\end{align} is the projection operator for cylinder $X_i$ that ensures the corresponding edge states are even under $\mathbb{Z}_2(i)$. 
Here $N^{\sigma X_i}$ is the winding number defined in \eqref{Laughlinwindingnumberdef} and $F^{\sigma X_i}$ measures the fermion parity of a state.
In particular, $(-1)^{F^{\sigma X_i}} \chi_i^\sigma = - \chi_i^\sigma (-1)^{F^{\sigma X_i}}$.

Now consider ``gluing" the right edge of cylinder $X_{i-1}$ to the left edge of cylinder $X_i$.
The strategy is similar to that of the Laughlin case.
In the absence of any coupling, the edge modes are described by the free, decoupled Hamiltonians associated to \eqref{MRLagrangian}.
The cylinders can be pieced together at the interfaces by the electron tunneling terms, \begin{align}
\begin{split}
H^{(1)}_{i} &=-\frac{2g}{2\pi}\int_0^Ldx \left\{\left(\psi^L_{i}\right)^\dagger\psi^R_{i-1}+h.c.\right\}\\
&=-\frac{2g}{\pi}\int_0^L dx \Big\{i\chi^L_{i}\chi^R_{i-1} \cos\left[m\left(\phi^R_{i-1}+\phi^L_{i}\right)\right]\Big\}, \end{split}
\end{align} 
where the coupling constant $g>0$ is taken to be independent of the specific interface $i$.
We treat the tunneling term in a mean-field approximation \cite{sohal-2020} in which the corresponding ground state expectation values (up to $\mathbb{Z}_2$ symmetry) of the bosonic and fermionic operators are
\begin{align}\begin{split}&\langle m(\phi^L_{i}+\phi^R_{i-1})\rangle = \mbox{0 mod $2\pi$ and }\langle i\chi^L_{i}\chi^R_{i-1}\rangle>0,\\\mbox{or }&\langle m(\phi^L_{i}+\phi^R_{i-1})\rangle = \mbox{$\pi$ mod $2\pi$ and }\langle i\chi^L_{i}\chi^R_{i-1}\rangle<0.
\end{split}\end{align} 
The overall scale of the expectation value of the fermion bilinear is absorbed into $g$.
In the $g\rightarrow \infty$ limit, we once again employ the quadratic approximation to the sine-Gordon potential and pin the bosonic fields at the corresponding minima. 
This allows only neutral charge $(N^{RX_{i-1}}+N^{LX_i} =0)$ at the interfaces. 
With these approximations, the tunneling potential becomes 
\begin{align}
\begin{split} H^{(1)}_{i}&=\int_0^L dx \Big[
\frac{v_c\lambda\pi}{2}\big(\phi^R_{i-1}+\phi^L_{i}\big)^2\\
&\;\;\;\;+v_m\tilde{g}i\chi^L_{i}\chi^R_{i-1}+\mathrm{const.}+\ldots\Big], \end{split}
\end{align} where $\tilde{g}=-\frac{2g}{v_m\pi}<0$ and $\lambda>0$. 
The ellipsis denotes higher-order terms which can be ignored as $g \rightarrow \infty$.

It remains to construct torus ground state of this simplified model.
We treat the untwisted and twisted sectors in turn.

\subsubsection{Untwisted Sector}

We construct the ground state of the quadratic Hamiltonian discussed in the previous section and then project the result to the physical Hilbert space.
Since the bosonic zero and oscillation mode Hamiltonians are the same as in the Laughlin case, the bosonic parts of the unprojected ground state are given in Eqs.~\eqref{GSLzm} and \eqref{GSLosc}.
The Hamiltonian for the fermionic oscillation modes is 
\begin{align}
H_{i,f}^{\mathrm{osc}} &= v_m\sum_{k>0}\begin{bmatrix}
(c^R_{i-1,k})^\dagger & c^L_{i,-k}
\end{bmatrix}\begin{bmatrix}
k & -i\tilde{g} \\ i\tilde{g} & -k
\end{bmatrix}\begin{bmatrix}
c^R_{i-1,k} \\ (c^L_{i,-k})^\dagger
\end{bmatrix}
\end{align} 
where $k=2\pi(j+1/2)/L$ with $j$ a non-negative integer. 
For suitable $\varphi_k$, the following transformation, 
\begin{align}
\begin{bmatrix}\tilde{\beta}_{i,k} \\ (\tilde{\gamma}_{i,k})^\dagger
\end{bmatrix} =\begin{bmatrix}
\cos\varphi_k & -i\sin\varphi_k\\
\sin\varphi_k & i\cos\varphi_k
\end{bmatrix}\begin{bmatrix}
c^R_{i-1,k} \\ (c^L_{i,-k})^\dagger
\end{bmatrix},
\end{align} 
diagonalizes the Hamiltonian to 
\begin{align}
H^{\mathrm{osc}}_{i,f} = v_m\sum_{k>0}\left((\tilde{\beta}_{i,k})^\dagger\tilde{\beta}_{i,k}+ (\tilde{\gamma}_{i,k})^\dagger\tilde{\gamma}_{i,k}-1\right).
\end{align} 
We take $\cos 2\varphi_k= k/\epsilon_k$, $\sin2\varphi_k=\tilde{g}/\epsilon_k$, and $\epsilon_k=\sqrt{k^2+\tilde{g}^2}$. 
The ground state is given by the BCS coherent state,
\begin{align}
|f^{\mathrm{osc}}_i\rangle =\prod_{k>0} e^{-i\Xi_k (c^L_{i,-k})^\dagger(c^R_{i-1,k})^\dagger}|0\rangle, 
\end{align} 
where $\tilde{\beta}_{i,k}|f^{\mathrm{osc}}_i\rangle = \tilde{\gamma}_{i,k}|f^{\mathrm{osc}}_i\rangle=0$, and $c^L_{i-1,-k}|0\rangle=c^R_{i,k}|0\rangle=0$ for all $k>0$. 
In the limit of $|k| \ll |\tilde{g}|$, $\Xi_k=\tan\varphi_k\approx \tilde{v}_e k/2$ with $\tilde{v}_e\equiv2/|\tilde{g}|$ the ``entanglement velocity" in the fermionic sector. 
Similar to $|b_i^{\rm osc} \rangle$ in \eqref{GSLosc}, the ground state can be rewritten as 
\begin{widetext}
\begin{align}
        \label{GSFosc}
        |f^{\mathrm{osc}}_{i} \rangle & = \sum_{\{\tilde n_{i,k} \in\mathbb{Z}_2\}} i^{\sum_{k>0} \tilde n_{i, k}} e^{-\sum_{k>0} {\tilde v_e k \over 2} (\tilde n_{i,k}+ 1/2)} |\{n^{RX_{i-1}}_{f, k} = \tilde n_{i,k} \}_{k>0} \rangle_{RX_{i-1}} \otimes |\{n^{LX_{i}}_{f, - k} = \tilde n_{i,k} \}_{k>0} \rangle_{LX_{i}}.
    \end{align}
    \end{widetext}
where $n^{RX_{i-1}}_{f,k}$ and $n^{LX_{i}}_{f,-k}$ are the eigenvalues of the fermion number operators $(c^R_{i-1,k})^\dagger c^R_{i-1,k}$ and $(c^L_{i,-k})^\dagger c^L_{i,-k}$.  

Because the zero mode and oscillation modes are decoupled (in the artificially extended Hilbert space), the torus ground state for the approximated Hamiltonian can be written as a tensor product of \eqref{GSLzm}, \eqref{GSLosc}, and \eqref{GSFosc}:  
\begin{align}
|\hat{\Psi}_a\rangle &=\bigotimes_i|b^{\mathrm{zero}}_{r,i}\rangle\otimes|b^{\mathrm{osc}}_i\rangle\otimes|f^{\mathrm{osc}}_i\rangle\label{ApproxUntwistedGS}.
\end{align} The corresponding physical ground state that is invariant under the internal $\mathbb{Z}_2$ symmetry \eqref{internalZ2symm} is the projection:  \begin{align}\left|\Psi_a\right\rangle&=\mathcal{P}|\hat{\Psi}_a\rangle\label{untwistedMRGS}\\&=\bigotimes_iP_{a,i}|b^{\mathrm{zero}}_{r,i}\rangle\otimes|b^{\mathrm{osc}}_i\rangle\otimes|f^{\mathrm{osc}}_i\rangle,\nonumber\end{align}
where the projection operator $\mathcal{P}$ is given in \eqref{cylinderprojection}.
In the untwisted sector, the projection operator for each cylinder $X_i$ decomposes into the product of left and right edge projection operators $P_{a, i} P_{a, i+1}$ given by
\begin{align}P_{a,i}=\frac{1}{2}\left(1+(-1)^{N_{a,i}-r/m+\sum_{k>0}\tilde{n}_{i,k}}\right).
\end{align} 
(Note that $\tilde{n}_{i,k}$ denotes one of the fermion number operators $(c^R_{i-1,k})^\dagger c^R_{i-1,k}$ or $(c^L_{i,-k})^\dagger c^L_{i,-k}$, whose eigenvalues coincide at interface $i$.)
These operators restrict the winding number and fermion parity of the ground state at interface $i$ between $X_{i-1}$ and $X_i$.

\subsubsection{Twisted Sector}

In the twisted sector, we must include the Majorana zero mode excitations, which arise from the $\pi$ flux that threads across all cylinders and results in fermionic momenta that are integrally quantized as $k=2\pi j/L$.
The contributions of the bosonic modes and fermion oscillator modes to the unprojected torus state have the same form as before and so we need only discuss the novelty presented by the Majorana zero modes.

The Majorana zero mode Hamiltonian is 
\begin{align}
H^{\mathrm{zero}}_f = i\tilde{g}\sum_i c^{R}_{i-1,0}c^{L}_{i,0}=\tilde{g}\sum_i f_i^\dagger f_i,
%\quad (\tilde{g}>0).
\label{Hfzero}
\end{align}
where $\tilde{g}>0$.
This Hamiltonian is essentially the Kitaev chain~\cite{2001PhyU...44..131K} with quantum states labeled by the eigenvalues $n'_i=0,1$ of the fermion number operators $f_i^\dagger f_i$ at the interface between cylinders $X_{i-1}$ and $X_i$. 
Here, $f_i=(c^R_{i-1,0}+ic^L_{i,0})/\sqrt{2}$ is an interface Dirac fermion. 
Suppose the torus is divided into four consecutive cylinders $X_1 \cup X_2 \cup X_3 \cup X_4$.
Then the ground state of \eqref{Hfzero} is  $|f^{\mathrm{zero}}\rangle=|0'0'0'0'\rangle$, where the primes refer to the interface basis states. 

Because the $\mathbb{Z}_2$ projection operator is not diagonal with respect to this interface basis, we need to change to an appropriate cylinder basis for the Majorana zero modes. 
To this end, we define the cylinder Dirac fermions $d_i=(c^R_{i,0}+ic^L_{i,0})/\sqrt{2}$ on $X_i$ and the corresponding occupation numbers $\gamma_{i}=0,1$ of the operators $d_i^\dagger d_i$.
Notice that $f_i^\dagger f_i$ and $d_{i'}^\dagger d_{i'}$ do not commute when $i=i'$ or $i=i'+1$.
The $F$-symbols \cite{kiaev-2006} generate the basis transformation between cylinder and interface bases. 
This basis change is depicted in Fig.~\ref{fusiondiagram}. 
\begin{figure}[t]
\centering
\includegraphics[width=0.48\textwidth]{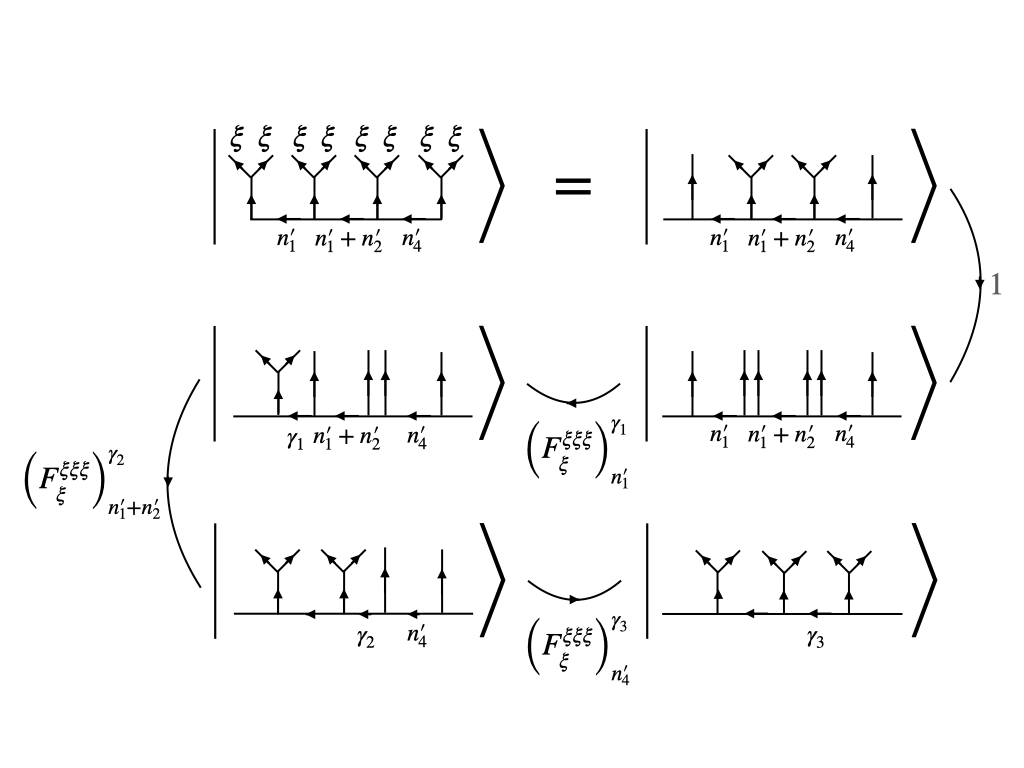}
\caption{Basis Change. Three $F$-moves are used to transform from the interface to cylinder bases.
Each $|n'_i = 0 \rangle$ in $|0'0'0'0'\rangle$ obtains by fusing two non-Abelian twist fields $\xi$ into the vacuum channel; fusion into the $\chi$ channel is denoted by ``1." The interface fermion numbers generally satisfy $n'_1 + n'_2 + n'_3 + n'_4 = 0$ (mod 2).}
\label{fusiondiagram}
\end{figure} 
For Ising topological order, the relevant $F$-move transformation is given by the $2 \times 2$ matrix,
\begin{align}
\left(F^{\xi\xi\xi}_\xi\right)_\mu^\nu=\frac{1}{\sqrt{2}}(-1)^{\mu\nu}\label{IsingFsymbol}
\end{align} 
where $\mu,\nu \in \{0,1 \}$, corresponding to the two possible fusion channels of $\xi$.
Thus, the F-moves transform $|0'0'0'0' \rangle$ to a basis written in terms of states labeled by the fusion channels of pairs of $\xi$ belonging to a particular cylinder.
The index $\mu$ is the original internal channel and $\nu$ the new internal channel after the $F$-move.
Thus, to transform the ground state $|n'_1=0,n'_2=0,n'_3=0,n'_4=0\rangle$, we use
\begin{align}
&|n_1'n_2'n_3'n_4'\rangle \nonumber\\&= \sum_{\gamma_1,\gamma_2,\gamma_3} \left(F_\xi^{\xi\xi\xi}\right)^{\gamma_3}_{n'_4}\left(F_\xi^{\xi\xi\xi}\right)^{\gamma_2}_{n'_1+n'_2} \left(F_\xi^{\xi\xi\xi}\right)^{\gamma_1}_{n'_1}\nonumber\\
&\;\;\;\;\times|\gamma_1\gamma_2\gamma_3\gamma_4\rangle,\label{general4cylinderequation}
\end{align}
where $\gamma_4 = 1+\gamma_1 + \gamma_2 + \gamma_3$ (mod 2),
with the result:
%The ground state can now be rewritten as 
\begin{align}
|f^{\mathrm{zero}}\rangle
%&=|0000\rangle'\\
&=\frac{1}{\sqrt{8}}(|0001\rangle+|0010\rangle+|0100\rangle+|0111\rangle\nonumber\\&\;\;\;\;+|1000\rangle+|1011\rangle+|1101\rangle+|1110\rangle).
\end{align} 
Each of the un-primed states in $|f^{\rm zero}\rangle$ is an eigenstate of $d_i^\dagger d_i$. 
For example, $|\gamma_1 \gamma_2 \gamma_3 \gamma_4 \rangle$ has eigenvalue $(-1)^{\gamma_i}$ under $d^\dagger_i d_i$.
Here, the total fermion parity in the cylinder basis, $\sum_i\gamma_i$, is odd, while the total parity in the interface basis, $\sum_in'_i$, is even. This is because the two total parities are exactly opposite, \begin{align}\prod_i(-1)^{f_i^\dagger f_i}&=(-1)^M2^{2M}\prod_ic^R_{i-1,0}c^L_{i,0}\nonumber\\&=-(-1)^M2^{2M}\prod_ic^R_{i,0}c^L_{i,0}\nonumber\\&=-\prod_i(-1)^{d_i^\dagger d_i}.\label{totalparitycomparison}\end{align}

For $2M$ cylinders, this result generalizes to
\begin{align}
\label{twistedzero}
 | f^{\rm zero} \rangle = \frac{1}{\sqrt{2^{2M-1}}}\sum_{\vec \gamma \in \{0,1\}^{2M}|_C}|\gamma_1 \gamma_2 \cdots \gamma_{2M}\rangle.
\end{align}
Here, $\{0,1 \}^{2M}|_C$ indicates that $\vec \gamma$ takes values in $\{0,1\}^{2M}$ subject to the constraint $\sum_{i=1}^{2M}\gamma_i \equiv 1\  ({\rm mod}\ 2)$;
the overall normalization comes from the fact that there are $2^{2M-1}$ solutions to this constraint.
Physically, this constraint on $\vec \gamma$ means the overall topological charge of the $2M$ Majorana fermions on the torus is in the vacuum channel.
Similar to Eq.~\eqref{general4cylinderequation}, it will sometimes be convenient to take the sum in \eqref{twistedzero} to be over unconstrained $\gamma_i$ for $i \in \{1, \ldots, 2M-1\}$ with $\gamma_{2M}$ implicitly determined by the constraint.

Thus, the unprojected ground state for the twisted sector is \begin{align}
|\hat{\Psi}_a\rangle &=\left(\bigotimes_i|b^{\mathrm{zero}}_{r,i}\rangle\otimes|b^{\mathrm{osc}}_i\rangle\otimes|f^{\mathrm{osc}}_i\rangle\right)\otimes|f^{\mathrm{zero}}\rangle.\label{ApproxtwistedGS}\end{align} 
Here $a$ refers to the Ising twist field $\xi^{r+1/2}=\xi e^{i(r+1/2)\phi}$, for $r=0,1,\ldots,m-1$, and therefore the winding number $N_{r,i}$ of $|b^{\mathrm{zero}}_{r,i}\rangle$ takes values in $(r+1/2)/m+\mathbb{Z}$. 
The physical ground state that is invariant under the internal $\mathbb{Z}_2$ symmetry \eqref{internalZ2symm} is the projection: \begin{align}|\Psi_a\rangle=\mathcal{P}|\hat{\Psi}_a\rangle=\prod_i P_{a,X_i}|\hat{\Psi}_a\rangle\label{twistedMRGS}\end{align} where the projection operator $P_{a,X_i}$ on cylinder $X_i$ is defined by 
\begin{align}
\label{eq:twistSecProjector0}
P_{a,X_i}\nonumber&=\frac{1}{2}\left(1+(-1)^{N_{a,X_i}+\sum_{k>0}(\tilde{n}_{X_i,k})+d_i^\dagger d_i+u_i}\right),\nonumber\\
N_{a,X_i}&\equiv -N_{a,i}+N_{a,i-1}, \quad \tilde{n}_{X_i,k}\equiv\tilde{n}_{i-1,k}+\tilde{n}_{i,k},\nonumber\\
u_i&=\delta_{2M,i}=\left\{\begin{array}{*{20}l}0,&\mbox{if $i=1,\ldots,2M-1$}\\1,&\mbox{if $i=2M$}\end{array}\right..
\end{align}
Here, the additional $u_{2M}$ accounts for the total odd parity in the zero mode sector $\prod_i(-1)^{d_i^\dagger d_i}=(-1)^{\sum_iu_i}=-1$ [see \eqref{totalparitycomparison}].

\section{Entanglement Negativity}
\label{sectionthree}

We now study the entanglement negativity of the Laughlin and Moore-Read states at filling fraction $\nu = 1/m$ on the torus, constructed in the previous section.

\subsection{\texorpdfstring{$\nu = 1/m$}{v=1/m} Laughlin State}

\subsubsection{Torus Geometry}

We begin with the Laughlin state at filling fraction $\nu = 1/m$ and the torus geometry [e.g., Fig.~\ref{4cylinders}(a)].
The unnormalized torus ground state in sector $a \in \{0, \ldots, m-1\}$ factorizes as
\begin{align}
    |\Psi_a \rangle = \bigotimes_{i=1}^{2M} | \Psi_{a, i}\rangle,
\end{align}
where $i$ refers to the interface between cylinders $X_{i-1}$ and $X_i$ and 
\begin{align}
    | \Psi_{a,i} \rangle = |b^{\mathrm{zero}}_{a, i} \rangle \otimes |b^{\mathrm{osc}}_{i} \rangle.
    \end{align}
    The bosonic zero mode $|b^{\mathrm{zero}}_{a, i}\rangle$ and oscillator $|b^{\mathrm{osc}}_{i} \rangle$ states are given in Eqs.~\eqref{GSLzm} and \eqref{GSLosc}.
    Introducing the collective mode numbers,
    \begin{align}
        {\cal N}^{LX_{i}} & \equiv \big(-N^{LX_{i}}, \{n^{LX_{i}}_{b,-k} \}_{k>0} \big), \\
        {\cal N}^{RX_{i}} & \equiv \big(N^{RX_{i}}, \{n^{RX_{i}}_{b,k} \}_{k>0} \big), \\
        \label{generalN}
        {\cal N}_{a,i} & \equiv \big(N_{a,i}, \{ n_{i,k} \}_{k>0} \big),
    \end{align}
  with domains defined in Eqs.~\eqref{GSLzm} and \eqref{GSLosc},
    we write 
    \begin{widetext}
    \begin{align}
        | \Psi_{a, i} \rangle = \sum_{{\cal N}_{a,i}} \lambda({\cal N}_{a,i}) |{\cal N}^{RX_{i-1}} = {\cal N}_{a,i} \rangle_{RX_{i-1}} \otimes |{\cal N}^{LX_{i}} = {\cal N}_{a,i} \rangle_{LX_{i}},
    \end{align}
    \end{widetext}
    where
      \begin{align}
      \label{laughlinhalfeigenvalues}
        \lambda({\cal N}_{a,i}) = \exp\left[- {v_e \pi m \over 2 L} N_{a,i}^2  - \sum_{k>0} {v_e k \over 2} \left(n_{i,k} + \frac{1}{2}\right) \right].
    \end{align}

Assembling the preceding together, we have
    \begin{align}\label{eq-psi-a-4cylin}
        |\Psi_a \rangle = \bigotimes_{i=1}^{2M} \sum_{{\cal N}_{a,i}} \lambda({\cal N}_{a,i}) |{\cal N}_{a,i} \rangle_{RX_{i-1}} \otimes |{\cal N}_{a,i} \rangle_{LX_{i}}.
      %  \bigotimes_{i=1}^{2M} \sum_{{\cal N}_{a,i}} P({\cal N}_{a,i}) \lambda({\cal N}_{a,i}) |{\cal N}_{a,i} \rangle_{RX_{i-1}} \otimes |{\cal N}_{a,i} \rangle_{LX_i}.
    \end{align}
    Equation \eqref{eq-psi-a-4cylin} shows how a product of cylinder states glue together to form the unnormalized torus state in sector $a$.
      The norm-squared of $|\Psi_a \rangle$ is
    \begin{align}
        (Z_a)^{2M} = \left(\sum_{{\cal N}_a} \lambda^2({\cal N}_a)    \right)^{2M},
        %\Big(\sum_{{\cal N}_a} P({\cal N}_{a}) \lambda^2({\cal N}_a)    \Big)^{2M}
    \end{align}
    with mode number ${\cal N}_a$ defined as in \eqref{generalN}.
    We identify $Z_a$ as the partition function in sector $a$ at inverse ``temperature" $\beta = 1$ of the entanglement Hamiltonian $H_a$,
    \begin{align}
    \label{laughlinsectorapartitionfunction}
        Z_a(\beta) = \tr e^{- \beta H_a},
    \end{align}
    with entanglement spectrum equal to $-2 \log \lambda({\cal N}_a)$.

  We use Lemma \ref{lemma1} below to calculate the entanglement negativity of the general torus state $|\Psi \rangle = \sum_a \psi_a |\overline{\Psi}_a \rangle$ with respect to the torus partition [e.g.,  Fig.~\ref{4cylinders}(a)], where the normalized sector $a$ state is
    \begin{align}
        | \overline{\Psi}_a \rangle = Z_a^{-M} |\Psi_a \rangle.
    \end{align}
    While the details of the proof of the analogous lemma for the non-Abelian Moore-Read state differs slightly due to the presence of fermionic zero modes (in the twisted sector), it turns out that the result of Lemma 1 continues to apply. 
    Readers uninterested in the details of the straightforward, but tedious proof of Lemma 1, may safely skip it and use the result. 
       \begin{lemma}
  \label{lemma1}
  The entanglement negativity of $\rho = |\Psi \rangle \langle \Psi |$ with respect to the torus partition $X = X_{\rm odd} \cup X_{\rm even}$ (e.g., Fig.~\ref{4cylinders}) and with partial transposition on odd cylinders $X_{\rm odd} = X_1 \cup X_3 \cup \cdots \cup X_{2M - 1}$ equals
  \begin{align}
  \label{resultoflemma1}
  {\cal E}_{X_{\rm odd}:X_{\rm even}} = 2 \log \sum_a |\psi_a| \left( {Z_a(1 /2) \over \sqrt{Z_a(1)}}\right)^{2M},
  \end{align}
  where $X_{\rm even} = X_2 \cup X_4 \cup \cdots \cup X_{2M}$ and $Z_a(\beta)$ is defined in \eqref{laughlinsectorapartitionfunction}.
  \end{lemma}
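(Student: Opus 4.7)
The plan is to reduce $\lVert\rho^{T_{\mathrm{odd}}}\rVert_1$ to a product of single-interface trace norms by exploiting two facts: (i) each normalized sector state $|\overline{\Psi}_a\rangle$ factorizes across the $2M$ interfaces, with each interface carrying a Schmidt-like pair between an odd-cylinder edge and an even-cylinder edge; and (ii) the boundary-condition quantization $N_{a,i}\in\mathbb{Z}-a/m$ makes the subspaces $V_a$ of allowed edge mode numbers mutually orthogonal for distinct sectors $a$. Expanding $\rho=\sum_{a,b}\psi_a\psi_b^*|\overline{\Psi}_a\rangle\langle\overline{\Psi}_b|$ and using \eqref{eq-psi-a-4cylin}, I would note that because every interface $i$ couples one odd edge to one even edge, $T_{\mathrm{odd}}$ transposes exactly one side of each interface, yielding the factorization
\[(|\overline{\Psi}_a\rangle\langle\overline{\Psi}_b|)^{T_{\mathrm{odd}}} = Z_a^{-M}Z_b^{-M}\bigotimes_{i=1}^{2M} P_{ab}^{(i)},\]
with single-interface operator $P_{ab}^{(i)}=\sum_{\mathcal{N}_a,\mathcal{N}_b}\lambda(\mathcal{N}_a)\lambda(\mathcal{N}_b)\,|\mathcal{N}_b\rangle\langle\mathcal{N}_a|_{\mathrm{odd}(i)}\otimes|\mathcal{N}_a\rangle\langle\mathcal{N}_b|_{\mathrm{even}(i)}$.

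Next, I would organize $\rho^{T_{\mathrm{odd}}}$ by its row/column winding-number supports. Since $V_c$ and $V_d$ are mutually orthogonal for $c\neq d$, each cross-sector piece $\mathcal{O}_{cd}\equiv(|\overline{\Psi}_c\rangle\langle\overline{\Psi}_d|)^{T_{\mathrm{odd}}}$ maps $\mathcal{H}_{(c,d)}\to\mathcal{H}_{(d,c)}$, where $\mathcal{H}_{(c,d)}$ denotes the subspace with odd-cylinder winding $c$ and even-cylinder winding $d$. Thus $\rho^{T_{\mathrm{odd}}}$ decomposes into diagonal blocks on $\mathcal{H}_{(c,c)}$ and hermitian off-diagonal blocks of the form $\bigl(\begin{smallmatrix}0 & A^\dagger \\ A & 0\end{smallmatrix}\bigr)$ on $\mathcal{H}_{(c,d)}\oplus\mathcal{H}_{(d,c)}$ with $c\neq d$ and $A=\psi_c\psi_d^*\mathcal{O}_{cd}$. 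A direct calculation shows that $P_{cd}^{(i)}$ acts on $V_c\otimes V_d$ by $|\alpha\beta\rangle\mapsto\lambda(\alpha)\lambda(\beta)|\beta\alpha\rangle$, so its singular values are the products $\lambda(\alpha)\lambda(\beta)$ and
\[\lVert P_{cd}^{(i)}\rVert_1 = \Bigl(\sum_{\mathcal{N}_c}\lambda(\mathcal{N}_c)\Bigr)\Bigl(\sum_{\mathcal{N}_d}\lambda(\mathcal{N}_d)\Bigr) = Z_c(1/2)\,Z_d(1/2).\]
Multiplying the $2M$ interface factors and dividing by the normalization gives $\lVert\mathcal{O}_{cd}\rVert_1=\zeta_c^{2M}\zeta_d^{2M}$ with $\zeta_a=Z_a(1/2)/\sqrt{Z_a(1)}$.

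Finally, I would sum the block contributions: the diagonal block at $(c,c)$ contributes $|\psi_c|^2\zeta_c^{4M}$, while each hermitian off-diagonal block at $\{c,d\}$ with $c\neq d$ has eigenvalues $\pm\sigma$ for every singular value $\sigma$ of $A$, hence contributes $2|\psi_c||\psi_d|\zeta_c^{2M}\zeta_d^{2M}$. Combining,
\[\lVert\rho^{T_{\mathrm{odd}}}\rVert_1 = \sum_{c,d}|\psi_c||\psi_d|\,\zeta_c^{2M}\zeta_d^{2M} = \Bigl(\sum_a|\psi_a|\zeta_a^{2M}\Bigr)^2,\]
and taking the logarithm yields \eqref{resultoflemma1}. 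The main obstacle is the careful verification of the block structure in the middle step: one must check that the different $(c,d)$ blocks of $\rho^{T_{\mathrm{odd}}}$ have orthogonal row and column supports, so their singular values add independently and the cross-sector contributions complete the square. The remainder is bookkeeping, recognizing $\sum_{\mathcal{N}_a}\lambda(\mathcal{N}_a)=Z_a(1/2)$ and $\sum_{\mathcal{N}_a}\lambda(\mathcal{N}_a)^2=Z_a(1)$ from \eqref{laughlinhalfeigenvalues} and \eqref{laughlinsectorapartitionfunction}.
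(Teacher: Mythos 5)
Your proof is correct, and it reaches \eqref{resultoflemma1} by a genuinely different route than the paper. The paper brute-forces the trace norm: it writes out $(\rho^{T_{\rm odd}})^\dagger \rho^{T_{\rm odd}}$ as a quadruple sum over sectors and mode numbers, uses orthonormality of the basis states to show this operator is diagonal with entries $|c_a(\vec{\mathcal N}_a)|^2|c_{a'}(\vec{\mathcal N}'_{a'})|^2$, and then reads off $\tr\sqrt{(\rho^{T_{\rm odd}})^\dagger\rho^{T_{\rm odd}}}=\big(\sum_a\sum_{\vec{\mathcal N}_a}|c_a|\big)^2$. You instead exploit structure: (i) the tensor-product factorization of $(|\overline\Psi_a\rangle\langle\overline\Psi_b|)^{T_{\rm odd}}$ across the $2M$ interfaces together with multiplicativity of $\|\cdot\|_1$ under $\otimes$ reduces everything to the singular values $\lambda(\alpha)\lambda(\beta)$ of a single swap-type interface operator, and (ii) the mutual orthogonality of the winding-number sectors organizes $\rho^{T_{\rm odd}}$ into a diagonal block per sector plus Hermitian off-diagonal blocks $\bigl(\begin{smallmatrix}0&A^\dagger\\A&0\end{smallmatrix}\bigr)$ whose $\pm\sigma$ spectra supply the cross terms that complete the square. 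Both mechanisms (sector orthogonality and orthonormality of mode-number states) are the same ones the paper uses implicitly, but your presentation makes visible \emph{where} the negative eigenvalues of $\rho^{T_{\rm odd}}$ live, which the paper's computation of $\sqrt{(\rho^{T})^\dagger\rho^{T}}$ obscures. The trade-off is generalizability: the paper's template carries over almost verbatim to the Moore--Read state, whereas in the twisted sector the $\mathbb{Z}_2$ projectors $P_{a,X_i}$ couple adjacent interfaces and spoil the clean interface-by-interface tensor factorization your step (i) relies on, so your argument would need the block analysis redone at the level of whole cylinders there. Within the scope of Lemma~\ref{lemma1} as applied to the Laughlin state, your argument is complete once you note (as you do) that distinct blocks have orthogonal row and column supports, so their trace norms add.
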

  \begin{proof}
  We directly evaluate $|| \rho^{T_{{\rm odd}}} ||_1 = \tr \sqrt{(\rho^{T_{{\rm odd}}})^\dagger \rho^{T_{{\rm odd}}}}$ to compute ${\cal E}_{X_{\rm odd}: X_{\rm even}} = \log || \rho^{T_{{\rm odd}}} ||_1$, where $\rho^{T_{{\rm odd}}}$ denotes the partial transpose of $\rho$ with respect to $X_{\rm odd}$.
  Define $\{\vec{{\cal N}}_a \} \equiv \{(\mathcal{N}_{a,1},\mathcal{N}_{a,2},\ldots,\mathcal{N}_{a,2M})\}$ with ${\cal N}_{a,i}$ in \eqref{generalN} and $c_a(\vec{{\cal N}}_a) = \psi_a \prod_{i = 1}^{2M} \lambda({\cal N}_{a, i})/\sqrt Z_a$ (no sum over $a$) with $\lambda({\cal N}_{a,i})$ given in \eqref{laughlinhalfeigenvalues}; we will sometimes denote $c_a = c_a(\vec{{\cal N}}_a)$ for brevity.
  Then we may write
\begin{align}
\begin{split}
    \rho&= \sum_{a,a'} \sum_{\vec{{\cal N}}_a, \vec{{\cal N}}'_{a'}} c^\ast_a(\vec{{\cal N}}_{a})c_{a'}(\vec{{\cal N}}'_{a})|\mathcal{N}'_{a',1}\mathcal{N}'_{a',2}\rangle\langle \mathcal{N}_{a,1}\mathcal{N}_{a,2}|_{X_1}\\
    &\;\;\;\;\otimes\ldots\\
    &\;\;\;\;\otimes 
    |\mathcal{N}'_{a',2M-1}\mathcal{N}'_{a',2M}\rangle\langle \mathcal{N}_{a,2M-1}\mathcal{N}_{a,2M}|_{X_{2M-1}}\\
    &\;\;\;\;\otimes 
    |\mathcal{N}'_{a',2M}\mathcal{N}'_{a',1}\rangle\langle \mathcal{N}_{a,2M}\mathcal{N}_{a,1}|_{X_{2M}}.
\end{split}
\end{align} 
Note that $|\mathcal{N}'_{a',i}\mathcal{N}'_{a',i+1}\rangle\langle \mathcal{N}_{a,i}\mathcal{N}_{a,i+1}|_{X_i}$ denotes the outer product of states on the edges of cylinder $X_i$: the first entry of each ket or bra refers to states on the left edge of $X_i$, while the second entry refers to states on the right edge of $X_i$.
Taking the partial transpose with respect to $X_{\rm odd}$, we have 
\begin{align}
    \rho^{T_{{\rm odd}}}&= \sum_{a,a'} \sum_{\vec{{\cal N}}_a, \vec{{\cal N}}'_{a'}} c^\ast_a c_{a'} |\mathcal{N}_{a,1}\mathcal{N}_{a,2}\rangle\langle\mathcal{N}'_{a',1}\mathcal{N}'_{a',2}|_{X_1}\cr
    &\;\;\;\;\otimes\ldots\cr 
    &\;\;\;\;\otimes|\mathcal{N}'_{a',2M-2}\mathcal{N}'_{a',2M-1}\rangle\langle \mathcal{N}_{a,2M-2}\mathcal{N}_{a,2M-1}|_{X_{2M-2}}\cr
    &\;\;\;\;\otimes 
    |\mathcal{N}_{a,2M-1}\mathcal{N}_{a,2M}\rangle\langle \mathcal{N}'_{a',2M-1}\mathcal{N}'_{a',2M}|_{X_{2M-1}}\cr
    &\;\;\;\;\otimes 
    |\mathcal{N}'_{a',2M}\mathcal{N}'_{a',1}\rangle\langle \mathcal{N}_{a,2M}\mathcal{N}_{a,1}|_{X_{2M}}.
\end{align} 
 Next we evaluate
 \begin{widetext}
 \begin{align}
   (\rho^{T_{\text{odd}}})^\dagger \rho^{T_{\rm odd}} & = \sum_{a,a', a^{''}, a^{'''}} \sum_{\vec{{\cal N}}_a, \vec{{\cal N}}'_{a'}, \vec{{\cal N}}^{''}_{a^{''}}, \vec{{\cal N}}^{'''}_{a^{'''}}} c^\ast_{a^{'''}} c_{a^{''}} c^\ast_a c_{a'} \cr 
    &\;\;\;\;\times
    |\mathcal{N}'''_{a''',1}\mathcal{N}'''_{a''',2}\rangle\langle\mathcal{N}''_{a'',1}\mathcal{N}''_{a'',2}|\mathcal{N}_{a,1}\mathcal{N}_{a,2}\rangle\langle\mathcal{N}'_{a',1}\mathcal{N}'_{a',2}|_{X_1}\cr
    &\;\;\;\;\otimes\ldots\cr
    &\;\;\;\;\otimes
    |\mathcal{N}''_{a'',2M-2}\mathcal{N}''_{a'',2M-1}\rangle\langle\mathcal{N}'''_{a''',2M-2}\mathcal{N}'''_{a''',2M-1}|\mathcal{N}'_{a',2M-2}\mathcal{N}'_{a',2M-1}\rangle\langle\mathcal{N}_{a,2M-2}\mathcal{N}_{a,2M-1}|_{X_{2M-2}}\cr
    &\;\;\;\;\otimes
    |\mathcal{N}'''_{a''',2M-1}\mathcal{N}'''_{a''',2M}\rangle\langle\mathcal{N}''_{a'',2M-1}\mathcal{N}''_{a'',2M}|\mathcal{N}_{a,2M-1}\mathcal{N}_{a,2M}\rangle\langle\mathcal{N}'_{a',2M-1}\mathcal{N}'_{a',2M}|_{X_{2M-1}}\cr
    &\;\;\;\;\otimes
    |\mathcal{N}''_{a'',2M}\mathcal{N}''_{a'',1}\rangle\langle\mathcal{N}'''_{a''',2M}\mathcal{N}'''_{a''',1}|\mathcal{N}'_{a',2M}\mathcal{N}'_{a',1}\rangle\langle\mathcal{N}_{a,2M}\mathcal{N}_{a,1}|_{X_{2M}}.
     \end{align}
 \end{widetext}
Using the orthonormality of states with different quantum numbers in the above overlaps,
\begin{align}
\begin{split}
  a''' & = a', \quad  \mathcal{N}'''_{a''',i} =\mathcal{N}'_{a',i}, \\
  a'' & = a, \quad \mathcal{N}''_{a'',i} =\mathcal{N}_{a,i},
  %  \mathcal{N}'''_{a''',2i}&=\mathcal{N}'_{a',2i}, \quad\quad\;\;\;\; \mathcal{N}''_{a'',2i}=\mathcal{N}_{a,2i},
\end{split}
\end{align} 
for all $1 \leq i \leq 2M$, we find that $(\rho^{T_{{\rm odd}}})^\dagger \rho^{T_{{\rm odd}}}$ is diagonal with entries given by $|c_{a'}(\Vec{\mathcal{N}}'_{a'})|^2 |c_a(\vec{\mathcal{N}}_a)|^2$. 
Thus, 
\begin{align}
    \begin{split}
        \sqrt{(\rho^{T_{\text{odd}}})^\dagger \rho^{T_{\rm odd}}} & =
        \sum_{a,a'}\sum_{\vec{\mathcal{N}}_{a},\vec{\mathcal{N}}'_{a'}}
        |c_{a'}||c_a|\\
        &\;\;\;\;\times |\mathcal{N}'_{a',1} \mathcal{N}'_{a',2}\rangle\langle\mathcal{N}'_{a',1} \mathcal{N}'_{a',2}|_{X_1}\\
        &\;\;\;\;\otimes |\mathcal{N}_{a,2} \mathcal{N}_{a,3}\rangle\langle\mathcal{N}_{a,2} \mathcal{N}_{a,3}|_{X_2}\\
        &\;\;\;\;\otimes\cdots\\
        &\;\;\;\;\otimes|\mathcal{N}_{a,2M} \mathcal{N}_{a,1}\rangle\langle\mathcal{N}_{a,2M} \mathcal{N}_{a,1}|_{X_{2M}}
        \label{extrastep}
    \end{split}
\end{align}
and
$\tr\sqrt{(\rho^{T_{\rm odd}})^\dagger\rho^{T_{\rm odd}}}=\sum_{a,a'}\sum_{\vec{\mathcal{N}}_{a},\vec{\mathcal{N}}'_{a'}}|c_{a'}||c_a|$.

Tracing through our definitions, we find
\begin{align}
    \begin{split}
    ||\rho^{T_{\rm odd}}||_1
     &= \Big( \sum_{a} \sum_{\vec{{\cal N}}_a} |c_a(\vec{{\cal N}}_a)| \Big)^2 \cr
     & =\left(\sum_a|\psi_a| \sum_{\vec{{\cal N}}_a} \prod_{i = 1}^{2M} {\lambda({\cal N}_{a,i}) \over \sqrt{Z_a}} \right)^2 \cr
    &=\left(\sum_a|\psi_a|\left(\frac{\sum_{\mathcal{N}_{a}}\lambda(\mathcal{N}_{a})}{\sqrt{\sum_{\mathcal{N}_{a}}\lambda^2(\mathcal{N}_{a})}}\right)^{2M}\right)^2 \cr
    &=\left(\sum_a|\psi_a|\left(\frac{Z_a(1/2)}{\sqrt{Z_a(1)}}\right)^{2M}\right)^2.
    \end{split}\label{tracenormlemma1}
\end{align}
Taking the logarithm of $||\rho^{T_{\rm odd}}||_1$, we obtain \eqref{resultoflemma1} and thereby complete the proof.
\end{proof}

   It remains to calculate $Z_a(\beta)$.
   We are specifically interested in the $L \rightarrow \infty$ limit.
   The partition function of the Laughlin edge states in sector $a$ can be written as
    \begin{align}
       Z_{a}(\beta) = \frac{\theta^{-a/m}_0(m\tau)}{\eta(\tau)}, \quad \tau=i\tau_2=\frac{i\beta v_e}{L},\label{DefineTau}
   \end{align}
where the Jacobi $\theta$ and Dedekind $\eta$ functions (see Appendix) are
\begin{align}
    \theta^{-a/m}_0(m\tau) & = \sum_{N_a\in \mathbb{Z}-\frac{a}{m}}e^{-\frac{\beta v_e\pi m}{L} N_a^2} , \\
    &=\sum_{n\in\mathbb{Z}}q^{\frac{1}{2}(n-\frac{a}{m})^2},\\
    \eta^{-1}(\tau) & = \sum_{n_{i,k}\in \mathbb{Z}^+}e^{-\sum_{k>0}\beta v_ek(n_{i,k}+1/2)}\\
    &=q^{-\frac{1}{24}}\prod_{n_{i,k}\in\mathbb{Z}^+}(1-q^{n_{i,k}})^{-1},
\end{align} 
and $q=e^{2\pi i \tau}$.
These functions have a useful transformation under the modular transformation $\tau \mapsto - 1/\tau$ that allows us to easily extract the scaling behavior of the entanglement negativity as $L \rightarrow \infty$.
Specifically,
\begin{align}
    \theta^{-a/m}_0(m\tau) & = \frac{\theta^0_{-a/m}(-1/m\tau)}{\sqrt{-im\tau}} , \\ 
    \eta(\tau)& = {\eta(-1/\tau) \over \sqrt{-i \tau}}.
\end{align}
Thus, $Z_a(\beta) = \frac{1}{\sqrt{m}}\exp\left[\frac{\pi L}{12\beta v_e}\right]$ and so
\begin{align}
    \frac{Z_a(1/2)}{\sqrt{Z_a(1)}} = \frac{1}{\sqrt[4]{m}}e^{\frac{\pi L}{8v_e}}.
\end{align}
Inserting this expression into \eqref{resultoflemma1} and taking the $L\rightarrow \infty$ limit, we find the topological entanglement negativity
\begin{align}
    {\cal E}_{X_{\rm odd}: X_{\rm even}} = M \left({\pi \over 2 v_e}\right) L - M \log m + 2 \log \sum_a |\psi_a|.
\end{align}
We see that ${\cal E}_{X_{\rm odd}:X_{\rm even}}$ receives $2M$ contributions, proportional to $\log \sqrt m$, and a single topological sector correction, equal to $2 \log \sum_a |\psi_a|$. 
Since the Laughlin phase has only Abelian quasiparticles ($d_a = 1$), ${\cal E}_{X_{\rm odd}:X_{\rm even}}$ takes the form given in \eqref{torusresult} with $\alpha = {\pi \over 2 v_e}$ and ${\cal D} = \sqrt{m}$.

\subsubsection{Cylinder Geometry}
\label{cylinderlaughlinsection}

Next we consider the entanglement negativity between subsets of $X_{\rm even}$ and $X_{\rm odd}$ when the degrees of freedom on $N \leq M$ cylinders $\bar{Y} \subset X$ have been traced over.
We denote the remaining $(2M - N)$ cylinders by $Y$ and their decomposition into ``odd" and ``even" cylinders as $Y_{\rm odd}$ and $Y_{\rm even}$.
The resulting entanglement negativity will depend on the number $R$ of shared interfaces between the remaining cylinders in $Y_{\rm odd} \cup Y_{\rm even}$.
As an example, Fig.~\ref{4cylinders}(b) represents $\Tr_{X_4} |\Psi \rangle \langle \Psi |$, i.e., the $X_1 \cup X_2 \cup X_3$ cylinder state when the degrees of freedom on $\bar{Y} = X_4$ have been traced over; we then consider the entanglement negativity of $\Tr_{X_4} |\Psi \rangle \langle \Psi |$ between degrees of freedom on $Y_{\rm odd} = X_1 \cup X_3$ and $Y_{\rm even} = X_2$ with a result that depends on $R = 2$.

Our calculation of the entanglement negativity will apply Lemma \ref{lemma2} below to the torus ground state from the previous section.
This lemma applies to both the Abelian Laughlin and non-Abelian Moore-Read states.
We will summarize the appropriate generalization of its proof in the non-Abelian case in a later section.
   
   \begin{lemma}\label{lemma2}
   Consider the reduced density matrix $\rho_{Y} = \Tr_{\bar{Y}} | \Psi \rangle \langle \Psi |$, where $|\Psi \rangle = \sum_a \psi_a |\overline{\Psi}_a \rangle$ is a general state on the torus $X = Y \cup \bar{Y}$.
   Then the entanglement negativity of $\rho_{Y}$ equals
   \begin{align}
   \label{lemma2result}
       {\cal E}_{Y_{\rm odd}: Y_{\rm even}} = \log \sum_a \left(|\psi_a| \left( {Z_a(1/2) \over \sqrt{Z_a(1)}}\right)^{R}  \right)^2,
   \end{align}
   where $R$ equals the number of interfaces shared between the remaining cylinders $Y = Y_{\rm odd} \cup Y_{\rm even}$ and $Z_a(\beta)$ is defined in \eqref{laughlinsectorapartitionfunction}.
\end{lemma}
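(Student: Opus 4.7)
The plan is to follow the structure of the proof of Lemma~\ref{lemma1}, but insert the partial trace over $\bar{Y}$ before computing the partial transpose and trace norm. The central observation is that tracing over any cylinder in $\bar{Y}$ collapses the double sum $\sum_{a,a'}\psi_a^\ast\psi_{a'}\cdots$ appearing in $\rho$ down to a single sum $\sum_a|\psi_a|^2\cdots$, because different topological sectors live in orthogonal zero-mode sectors of each cylinder. This is why \eqref{lemma2result} has no cross terms between distinct sectors, in sharp contrast with \eqref{resultoflemma1}.

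Concretely, I would begin by writing $\rho=|\Psi\rangle\langle\Psi|$ in the interface-mode basis with coefficients $c_a(\vec{\mathcal{N}}_a)=\psi_a\prod_{i=1}^{2M}\lambda(\mathcal{N}_{a,i})/\sqrt{Z_a}$ as in the proof of Lemma~\ref{lemma1}. Tracing over a cylinder $X_k\subset\bar{Y}$ produces an inner product $\langle\mathcal{N}'_{a',k}\mathcal{N}'_{a',k+1}\,|\,\mathcal{N}_{a,k}\mathcal{N}_{a,k+1}\rangle$ on the two edges of $X_k$, which forces equality of all mode numbers there. Because the allowed winding numbers satisfy $N_{a,i}\in-a/m+\mathbb{Z}$, matching them on even a single cylinder already forces $a'=a$. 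Hence $\rho_Y=\sum_a|\psi_a|^2\rho_{Y,a}$, where each $\rho_{Y,a}=\Tr_{\bar{Y}}|\overline{\Psi}_a\rangle\langle\overline{\Psi}_a|$ is a normalized density matrix supported on mutually orthogonal subspaces of $\mathcal{H}_Y$.

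Next I would take the partial transpose on $Y_{\rm odd}$ and exploit the orthogonality of the blocks to write $\|\rho_Y^{T_{\rm odd}}\|_1=\sum_a|\psi_a|^2\|\rho_{Y,a}^{T_{\rm odd}}\|_1$. Within a fixed sector, the interfaces separate into three groups: interfaces interior to $\bar{Y}$ are fully traced and yield $\sum_{\mathcal{N}}\lambda^2(\mathcal{N})/Z_a=1$; each of the boundary interfaces between $Y$ and $\bar{Y}$ gives a diagonal density matrix on the $Y$ side with unit trace norm, so it contributes a factor of $1$; and each of the $R$ shared interfaces within $Y$ is treated exactly as in Lemma~\ref{lemma1}. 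For each such shared interface, $(\rho_{Y,a}^{T_{\rm odd}})^\dagger\rho_{Y,a}^{T_{\rm odd}}$ is diagonal, its square root is readily taken, and the trace collapses to $\bigl(\sum_{\mathcal{N}}\lambda(\mathcal{N})\bigr)^2/Z_a=\bigl(Z_a(1/2)/\sqrt{Z_a(1)}\bigr)^2$. Multiplying over the $R$ shared interfaces and taking the logarithm yields \eqref{lemma2result}.

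The main bookkeeping obstacle is verifying that the $M$-fold normalization $1/Z_a^M$ of $|\overline{\Psi}_a\rangle$ distributes cleanly so that only the $R$ shared interfaces produce nontrivial factors while the rest collapse to unity; this is really a consequence of the Schmidt-like form at every interface together with the identity $\sum_{\mathcal{N}}\lambda^2(\mathcal{N})=Z_a$. A secondary subtlety, relevant when the same strategy is applied to the Moore-Read state later, is that in the twisted sector the Majorana-zero-mode state \eqref{twistedzero} entangles all cylinders through the global parity constraint and does not factorize over interfaces; however, tracing out any cylinder in $\bar{Y}$ still forces matching of the per-cylinder occupation numbers and of the bosonic cosets, so the same block-diagonal-in-$a$ structure survives and the per-sector arithmetic that produces $(Z_a(1/2)/\sqrt{Z_a(1)})^{2R}$ goes through unchanged.
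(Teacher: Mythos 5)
Your proposal is correct and follows essentially the same route as the paper's proof: the trace over $\bar{Y}$ forces $a'=a$ (eliminating cross-sector terms), the computation then factorizes interface by interface, traced interfaces contribute $\sum_{\mathcal{N}}\lambda^2(\mathcal{N})/Z_a=1$, and each of the $R$ shared interfaces contributes $Z_a(1/2)^2/Z_a(1)$ exactly as in Lemma~\ref{lemma1}. Your packaging of this as block-diagonality of $\rho_Y$ in the sector label plus additivity of the trace norm over orthogonal blocks is a slightly cleaner organization than the paper's explicit case-by-case evaluation of $(\rho_Y^{T_{\rm odd}})^\dagger\rho_Y^{T_{\rm odd}}$, but the substance is the same.
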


\begin{proof} 
We use notation introduced in Lemma \ref{lemma1}.
There are four cases to consider.

(Case I) We remove cylinder $\bar{Y} = X_{2k}$ where $1<2k\leq2M$ by tracing over its left and right edge states.
Thus, $Y_{\rm odd} = X_{\rm odd}$, $Y_{\rm even} = X_2 \cup \cdots \cup X_{2k - 2} \cup X_{2k + 2} \cdots X_{2M}$, and $R = 2M - 2$.

We begin with the torus ground state $|\Psi \rangle \langle \Psi|$:
\begin{align}
    \begin{split}
        \rho &=\sum_{a,a'}\sum_{\vec{\mathcal{N}}_a,\vec{\mathcal{N}}'_{a'}}c_a(\vec{\mathcal{N}}_a) c^*_{a'}(\vec{\mathcal{N}}'_{a'})\times\cdots\\ 
        &\;\;\;\;\otimes 
        |\mathcal{N}_{a,2k-2}\mathcal{N}_{a,2k-1}\rangle\langle \mathcal{N}'_{a',2k-2}\mathcal{N}'_{a',2k-1}|_{X_{2k-2}}\\
        &\;\;\;\;\otimes 
        |\mathcal{N}_{a,2k-1}\mathcal{N}_{a,2k}\rangle\langle \mathcal{N}'_{a',2k-1}\mathcal{N}'_{a',2k}|_{X_{2k-1}}\\
        &\;\;\;\;\otimes 
        |\mathcal{N}_{a,2k}\mathcal{N}_{a,2k+1}\rangle\langle \mathcal{N}'_{a',2k}\mathcal{N}'_{a',2k+1}|_{X_{2k}}\\
        &\;\;\;\;\otimes 
        |\mathcal{N}_{a,2k+1}\mathcal{N}_{a,2k+2}\rangle\langle \mathcal{N}'_{a',2k+1}\mathcal{N}'_{a',2k+2}|_{X_{2k+1}}\\
        &\;\;\;\;\otimes 
        |\mathcal{N}_{a,2k+2}\mathcal{N}_{a,2k+3}\rangle\langle \mathcal{N}'_{a',2k+2}\mathcal{N}'_{a',2k+3}|_{X_{2k+2}}\\
        &\;\;\;\;\otimes \cdots.
    \end{split}
\end{align}
Tracing over degrees of freedom on $X_{2k}$ sets 
\begin{align}
\label{onetraceconstraint}
   a'=a, \quad \mathcal{N}'_{a',2k} = \mathcal{N}_{a,2k}, \quad \mathcal{N}'_{a',2k+1} = \mathcal{N}_{a,2k+1},
\end{align} 
and removes the corresponding outer products involving states on $X_{2k}$.
The first condition above $(a = a')$ removes any ``interference" in $\Tr_{X_{2k}} (\rho)$ between states in different topological sectors. 
Using \eqref{onetraceconstraint} and the definition of $c_a(\vec{{\cal N}}_a)$, the partial transpose of $\Tr_{X_{2k}} (\rho)$ with respect to $Y_{\rm odd}$ is
\begin{align}
    \begin{split}
        \rho^{T_{\rm odd}}_{Y}&=\sum_{a}\sum_{\vec{\mathcal{N}}_a,\vec{\mathcal{N}}'_{a}}c_a(\vec{\mathcal{N}}_a) c^*_{a}(\vec{\mathcal{N}}'_{a})\times\cdots\\ 
        &\;\;\;\;\otimes 
        |\mathcal{N}_{a,2k-2}\mathcal{N}_{a,2k-1}\rangle\langle \mathcal{N}'_{a,2k-2}\mathcal{N}'_{a,2k-1}|_{X_{2k-2}}\\
        &\;\;\;\;\otimes 
        |\mathcal{N}'_{a,2k-1}\mathcal{N}_{a,2k}\rangle\langle \mathcal{N}_{a,2k-1}\mathcal{N}_{a,2k}|_{X_{2k-1}}\\
        &\;\;\;\;\otimes 
        |\mathcal{N}_{a,2k+1}\mathcal{N}'_{a,2k+2}\rangle\langle \mathcal{N}_{a,2k+1}\mathcal{N}_{a,2k+2}|_{X_{2k+1}}\\
        &\;\;\;\;\otimes|\mathcal{N}_{a,2k+2}\mathcal{N}_{a,2k+3}\rangle\langle\mathcal{N}'_{a,2k+2}\mathcal{N}'_{a,2k+3}|_{X_{2k+2}}\\
        &\;\;\;\;\otimes \cdots ,
    \end{split}
\end{align}  
where
\begin{align}
\begin{split}
    c_a(\vec{\mathcal{N}}_a) c^*_{a}(\vec{\mathcal{N}}'_{a}) &= {|\psi_a|^2 \over Z_a^{2M}}\lambda(\mathcal{N}_{a,1})\lambda(\mathcal{N}'_{a,1})\times\ldots\\
    &\;\;\;\;\times\lambda(\mathcal{N}_{a,2k-1})\lambda(\mathcal{N}'_{a,2k-1})\\
    &\;\;\;\;\times\lambda^2(\mathcal{N}_{a,2k})\lambda^2(\mathcal{N}_{a,2k+1}) \\
    &\;\;\;\;\times\lambda(\mathcal{N}_{a,2k+2})\lambda(\mathcal{N}'_{a,2k+2})\times\ldots\\
    &\;\;\;\;\times\lambda(\mathcal{N}_{a,2M})\lambda(\mathcal{N}'_{a,2M}).
    \end{split}\label{ProductCC}
\end{align} 
There is no dependence on ${\cal N}'_{a, 2k}$ or ${\cal N}'_{a, 2k + 1}$ above because of \eqref{onetraceconstraint}.
In what follows, it will be implicitly understood that ${\cal N}'_{a, 2k}, {\cal N}'_{a, 2k +1}$ are removed in sums over $\vec{{\cal N}}'_{a}$.
The remainder of the proof follows that of Lemma \ref{lemma1}. 
Specifically, we compute
\begin{widetext}
\begin{align}
    \begin{split}
    (\rho^{T_{\rm odd}}_{Y})^\dagger \rho^{T_{\rm odd}}_{Y} &= \sum_{a,a'}
    \sum_{\vec{\mathcal{N}}_a,\vec{\mathcal{N}}'_{a},\vec{\mathcal{N}}''_{a'},\vec{\mathcal{N}}'''_{a'}} c^*_{a'}(\vec{\mathcal{N}}''_{a'})
    c_{a'}(\vec{\mathcal{N}}'''_{a'})
    c_a(\vec{\mathcal{N}}_a) c^*_{a}(\vec{\mathcal{N}}'_{a})\times\cdots\\ 
    &\;\;\;\;\otimes
    |\mathcal{N}'''_{a',2k-2}\mathcal{N}'''_{a',2k-1}\rangle\langle\mathcal{N}''_{a',2k-2}\mathcal{N}''_{a',2k-1}|\mathcal{N}_{a,2k-2}\mathcal{N}_{a,2k-1}\rangle\langle\mathcal{N}'_{a,2k-2}\mathcal{N}'_{a,2k-1}|_{X_{2k-2}}\\
    &\;\;\;\;\otimes
    |\mathcal{N}''_{a',2k-1}\mathcal{N}''_{a',2k}\rangle\langle\mathcal{N}'''_{a',2k-1}\mathcal{N}''_{a',2k}|\mathcal{N}'_{a,2k-1}\mathcal{N}_{a,2k}\rangle\langle\mathcal{N}_{a,2k-1}\mathcal{N}_{a,2k}|_{X_{2k-1}}\\
     &\;\;\;\;\otimes
    |\mathcal{N}''_{a',2k+1}\mathcal{N}''_{a',2k+2}\rangle\langle\mathcal{N}''_{a',2k+1}\mathcal{N}'''_{a',2k+2}|\mathcal{N}_{a,2k+1}\mathcal{N}'_{a,2k+2}\rangle\langle\mathcal{N}_{a,2k+1}\mathcal{N}_{a,2k+2}|_{X_{2k+1}}\\
    &\;\;\;\;\otimes
    |\mathcal{N}'''_{a',2k+2}\mathcal{N}'''_{a',2k+3}\rangle\langle\mathcal{N}''_{a',2k+2}\mathcal{N}''_{a',2k+3}|\mathcal{N}_{a,2k+2}\mathcal{N}_{a,2k+3}\rangle\langle\mathcal{N}'_{a,2k+2}\mathcal{N}'_{a,2k+3}|_{X_{2k+2}}\\
    &\;\;\;\;\otimes\cdots.
    \end{split}\label{PartialSpectrumProduct}
\end{align} 
\end{widetext}
Note that ${\cal N}'_{a, 2k}, {\cal N}'_{a, 2k +1}$ and ${\cal N}^{'''}_{a, 2k}, {\cal N}^{'''}_{a, 2k +1}$ are absent in the sums over $\vec{{\cal N}}'_a$ and $\vec{{\cal N}}^{'''}_a$.
The above overlaps fix $a = a'$ and 
\begin{align}
    \begin{split}
        \mathcal{N}'''_{a',i}&=\mathcal{N}'_{a,i}, \quad    \mathcal{N}''_{a',i}=\mathcal{N}'_{a,i}, 
    %    \mathcal{N}''_{a',2i}&=\mathcal{N}_{a,2i}, \quad    \mathcal{N}''_{a',2i+1}=\mathcal{N}_{a,2i+1},
    \end{split}\label{RemainedCylindersCond2}
\end{align}
for $1 \leq i \leq 2M$.
Analogous to \eqref{extrastep}, we may now read off $|| \rho^{T_{\rm odd}}_{Y} ||_1 = \tr \sqrt{(\rho_{Y}^{T_{{\rm odd}}})^\dagger \rho_{Y}^{T_{{\rm odd}}}}$ to find
\begin{align}
|| \rho^{T_{\rm odd}}_{Y} ||_1 & = \sum_a \sum_{\vec{{\cal N}}_a, \vec{{\cal N}}'_a} |c_a(\vec{{\cal N}}_a)| |c_a(\vec{{\cal N}}'_a)|  \cr
& = \sum_a |\psi_a|^2 \bigg|\frac{Z^{2M - 2}_a(1/2)}{Z^{M - 2}_a(1)} \bigg|\bigg|\frac{Z^{2M - 2}_a(1/2)}{Z^M_a(1)} \bigg| \cr
& = \sum_a \left( |\psi_a| \left(\frac{Z_a(1/2)}{\sqrt{Z_a(1)}}\right)^{2M-2}\right)^2.
\end{align}
The second equality follows from \eqref{ProductCC} and the definition of the $c_a(\vec{{\cal N}}_a)$.
Taking the logarithm of $|| \rho^{T_{\rm odd}}_{Y} ||_1$, we obtain \eqref{lemma2result} and thereby complete the proof of the lemma when $\bar{Y} = X_{2k}$, i.e., a single even cylinder has been removed.

This dependence of the entanglement negativity on the number $R$ of shared interfaces of the remaining cylinders $Y$ not traced over continues in the other cases.

(Case II) If $X_{2k}$ and an additional ``even" cylinder $X_{2k'}$ are traced out, then the generalization of \eqref{onetraceconstraint} will also remove any dependence on ${\cal N}'_{a, 2k'}$ and ${\cal N}'_{a, 2k' + 1}$.
Proceeding with the remaining steps outlined for Case 1, we find $R = 2M - 4$, reflective of the number of remaining shared interfaces.
In the special case when $k' = k+1$, degrees of freedom on $X_{2k +1}$ become disconnected from those on the remaining cylinders; because $X_{2k +1}$ has no shared interface with the remaining cylinders, we conclude that it effectively makes no contribution to the entanglement negativity.

(Case III) The proof proceeds identically if instead $\bar{Y} = X_{2k-1}$.
Then \eqref{onetraceconstraint} removes the dependence on ${\cal N}'_{a, 2k'-1}$ and ${\cal N}'_{a, 2k'}$, and the remainder of the proof proceeds as before, obtaining $R = 2M -2$ in this case.

It is straightforward to generalize the above reasoning to the situation when more than two nonconsecutive cylinders, e.g., a subset of the ``even" cylinders, are removed.
In this situation, the generalization of the above arguments gives $R = 2M - 2Q$, where $Q$ equals the number of cylinders removed.

(Case IV) The remaining case to discuss involves the removal (by trace) of two consecutive cylinders, say, $X_{2k-1}$ and $X_{2k}$.
In this case, \eqref{onetraceconstraint} removes the dependence on $\mathcal{N}'_{a,2k-1},\mathcal{N}'_{a,2k},$ and $\mathcal{N}'_{a,2k+1}$.
There are only three mode numbers in this case because states at the interface between $X_{2k-1}$ and $X_{2k}$ share $\mathcal{N}'_{a,2k}$.
Proceeding then as above we obtain $R=2M - 3$.

The above reasoning can then be suitably generalized, as needed, to show \eqref{lemma2result} with $R$ equal to the number of shared interfaces in $Y$.
This completes our proof of Lemma \ref{lemma2}.
\end{proof}

By Lemma \ref{lemma2}, the calculation of the entanglement negativity between $Y_{\rm odd}$ and $Y_{\rm even}$ reduces to the calculation of the ratio of entanglement Hamiltonian partition functions.
Making use of the partition function results from the previous section, we find
\begin{align}
        {\cal E}_{Y_{\rm odd}: Y_{\rm even}} & = {R \over 2} \left({\pi \over 2 v_e} \right) L - {R \over 2} \log m + \log \sum_a |\psi_a|^2 \cr
        & = {R \over 2} \left({\pi \over 2 v_e} \right) L - {R \over 2} \log m,
        \end{align}
where in the last line we used $\sum_a |\psi_a|^2 = 1$.
This verifies \eqref{cylinderresult} with $\alpha = \pi/2v_e$ and ${\cal D} = \sqrt m$. 
Similar to the torus geometry, there are $R$ contributions, equal to $\log \sqrt{m}$.
However, the topological sector correction is absent: the trace that is used to construct the cylinder state removes the correlations between different topological sectors when the state is Abelian ($d_a = 1$).
We show in the next section that a topological sector correction is present for the Moore-Read state.

\subsection{\texorpdfstring{$\nu = 1/m$}{v=1/m} Moore-Read State}

\subsubsection{Torus Geometry}

We now consider the entanglement negativity of the Moore-Read state on the torus geometry [e.g., Fig.~\ref{4cylinders}(a)].
At filling fraction $\nu = 1/m$, there are $2m$ untwisted anyon sectors $a=I^r=e^r$ or $\chi^r=\chi e^r$ and $m$ twisted anyon sectors $a = \xi^{r+1/2} =  \xi e^{r+1/2}$ topological sectors for $r=0,1,\ldots,m$ (see Table~\ref{tab:MRanyons}).
For the untwisted sector, much of our presentation will mirror that of the Laughlin state; in the twisted sector, there are some differences associated to Majorana fermion zero modes that we will highlight and discuss as they arise.

We begin with the untwisted sectors.
The un-normalized torus ground state in sector $a$ can be factorized as [see \eqref{untwistedMRGS}]
\begin{align}
\label{interfacefactorization}
    |\Psi_a \rangle = 
    P_a | \hat \Psi_a \rangle = 
    \bigotimes_{i=1}^{2M} P_{a,i} | \hat \Psi_{a, i}\rangle,
\end{align}
where $i$ labels the interface between cylinders $X_i$ and $X_{i+1}$, $P_a = \otimes_i P_{a,i}$ is the decomposition of the sector $a$ projection operator into projection operators local to each interface, and the unprojected state:
\begin{align}
\label{untwistedunprojected}
    | \hat \Psi_{a,i} \rangle = |b^{\mathrm{zero}}_{r, i} \rangle \otimes |b^{\mathrm{osc}}_{i} \rangle \otimes |f^{\rm osc}_i \rangle.
\end{align}
The bosonic zero mode and oscillator states are given in Eqs.~\eqref{GSLzm} and \eqref{GSLosc}; the fermionic oscillator states are given in Eq.~\eqref{GSFosc}.
The collective mode numbers are now
    \begin{align}
        {\cal N}^{LX_{i}} & \equiv \big(- N^{LX_{i}}, \{n^{LX_{i}}_{b,- k} \}_{k>0}, \{n^{LX_{i}}_{f, -k} \}_{k>0} \big), \\
        {\cal N}^{RX_{i}} & \equiv \big(N^{RX_{i}}, \{n^{RX_{j}}_{b, k} \}_{k>0}, \{n^{RX_{i}}_{f, k} \}_{k>0} \big), \\
        \label{generalNwithfermionosc}
        {\cal N}_{a,i} & \equiv \big(N_{a,i}, \{ n_{i,k} \}_{k>0}, \{\tilde n_{i,k} \}_{k>0} \big),
\end{align}
with domains defined in Eqs.~\eqref{GSLzm}, and \eqref{GSLosc}, \eqref{GSFosc}.
When acting on $|\hat \Psi_{a, i} \rangle$, we may replace $P_{a,i}$ with its eigenvalue $P_a({\cal N}_{a,i})$, using
\begin{widetext}
 \begin{align}
  \label{untwistedprojectioneigenvalues}
        P_{a,i} \left( |{\cal N}_{a,i} \rangle_{RX_{i-1}} \otimes | {\cal N}_{a,i} \rangle_{LX_i} \right) & = {1 \over 2} \left(1 + (-1)^{N_{a,i}- {r \over m} + \sum_{k>0} \tilde n_{i,k} }\right)  |{\cal N}_{a,i} \rangle_{RX_{i-1}} \otimes | {\cal N}_{a,i} \rangle_{LX_i} \cr
        & \equiv P_a({\cal N}_{a,i}) |{\cal N}_{a,i} \rangle_{RX_{i-1}} \otimes | {\cal N}_{a,i} \rangle_{LX_i}.
 \end{align}
\end{widetext}
Putting this all together, we have
\begin{widetext}
    \begin{align}\label{untwistedtorusstate}
        |\Psi_a \rangle =  \bigotimes_{i=1}^{2M} \sum_{{\cal N}_{a,i}} P_a({\cal N}_{a,i}) \lambda({\cal N}_{a,i}) |{\cal N}_{a,i} \rangle_{RX_{i-1}} \otimes |{\cal N}_{a,i} \rangle_{LX_i},
    \end{align}
    where
    \begin{align}
    \label{halfeigenvalues}
        \lambda({\cal N}_{a,i}) = \exp\left[- {v_e \pi m \over 2 L} N_{a,i}^2  - \sum_{k>0} {v_e k \over 2} \left(n_{i,k} + \frac{1}{2}\right) - \sum_{k>0} {\tilde v_e k \over 2}\left(\tilde n_{i,k} + {1 \over 2}\right) \right].
    \end{align}\label{lambdaForMRsectors}
    \end{widetext}
    The norm-squared of $|\Psi_a \rangle$ is 
    \begin{align}
    \label{untwistednormsquared}
        \left(Z^{\rm untwisted}_a\right)^{2M} = \left(\sum_{{\cal N}_a} P({\cal N}_{a}) \lambda^2({\cal N}_a)\right)^{2M}.
    \end{align}
Similar to the Laughlin case, $Z^{\rm untwisted}_a$ defines the untwisted sector $a$ partition function at inverse ``temperature" $\beta = 1$ of the Moore-Read entanglement Hamiltonian $H_a$,
    \begin{align}
    \label{sectorapartitionfunction}
        Z^{\rm untwisted}_a(\beta) = \tr e^{- \beta H_a},
    \end{align}
    with untwisted entanglement spectrum equal to $-2 \log \lambda({\cal N}_a)$, subject to the condition on allowed states imposed by the projection operator eigenvalues in \eqref{untwistedprojectioneigenvalues}.
    (As we have already done above, we will continue to abuse notation below; however, we will make sure to specify whether we are dealing with the untwisted or twisted topological sectors.)

We next turn to the twisted sectors.
The unprojected sector $a$ torus state is
\begin{align}
    |\hat \Psi_a \rangle = \Big(\bigotimes_{i=1}^{2M} | \hat \Psi_{a,i} \rangle \Big) \otimes | f^{\rm zero} \rangle,
\end{align}
where $| \hat \Psi_{a,i} \rangle$ takes the form in \eqref{untwistedunprojected} and the Majorana zero mode state $| f^{\rm zero} \rangle$ is given in \eqref{twistedzero}.
Note that the domain of $N_{a,i} \in \mathbb{Z} + (r+1/2)/m$ and oscillator fermion momenta are shifted by a half integer.

The presence of Majorana zero modes makes the decomposition of the torus ground states into cylinder states more delicate.
In particular, the twisted sector $a$ projection operator $P_{a}$ does not factorize in terms of independent projection operators local to each interface (as in, e.g., \eqref{interfacefactorization}); instead, we can at most decompose $P_a = \otimes_{i = 1}^{2M} P_{a,X_i}$, where $P_{a,X_i}$ is the projection operator for cylinder $X_i$.
When acting on a cylinder state $|\hat \Psi_{a,i}\rangle \otimes |\gamma_i \rangle \in (\otimes_i |\hat \Psi_{a,i} \rangle) \otimes |f^{\rm zero} \rangle$, we may replace the projection operator $P_{a,X_i}$ with its eigenvalue $P_a({\cal N}_{a,i}, {\cal N}_{a,i+1}, \gamma_i)$ using,
  \begin{widetext}
  \begin{align}
  \label{twistedprojectioneigenvalues}
        P_{a,X_i} | {\cal N}_{a,i}, {\cal N}_{a,i+1}, \gamma_i \rangle|_{X_i} & = {1 \over 2} \Big(1 + (-1)^{-N_{a,i} + N_{a, i+1} + \sum_{k>0} \big( \tilde n_{i,k} + \tilde n_{i+1,k} \big) + \gamma_i} \Big) | {\cal N}_{a,i}, {\cal N}_{a,i+1}, \gamma_i \rangle|_{X_i} \cr
        & \equiv P_a({\cal N}_{a,i}, {\cal N}_{a,i+1}, \gamma_i) | {\cal N}_{a,i}, {\cal N}_{a,i+1}, \gamma_i \rangle|_{X_i}.
    \end{align}
    \end{widetext} 
Using these eigenvalues, the product of $2M$ projection operators can be reduced to a product of $(2M-1)$ operators, e.g.,
\begin{align}
\label{projectorsimplification}
    \prod_{i=1}^{2M} P_{a, X_i} | \hat \Psi_a \rangle = \prod_{i=1}^{2M-1} P_{a, X_i} | \hat \Psi_a \rangle.
\end{align}
Thus, the norm squared of twisted sector $a$ state $|\Psi_a \rangle = P_a | \hat \Psi_a \rangle$ [see \eqref{twistedMRGS}] equals
\begin{widetext}
\begin{align}
\label{twistednormsquared}
   \left (Z^{\rm twisted}_a\right)^{2M} & = {1 \over 2^{2M-1}}\sum_{\vec {\cal N}_a} \lambda^2({\cal N}_{a,2M}) \sum_{\gamma_1, \ldots, \gamma_{2M-1} \in \{0,1\}} \prod_{i = 1}^{2M - 1} P_a({\cal N}_{a,i}, {\cal N}_{a,i+1}, \gamma_i) \lambda^2({\cal N}_{a,i}) \cr
    & = {1 \over 2^{2M-1}} \left( \sum_{{\cal N}_a} \lambda^2({\cal N}_a) \right)^{2M},
\end{align}
\end{widetext}
where $\lambda({\cal N}_{a,i})$ is given in \eqref{halfeigenvalues} and we have used $\sum_{\gamma_i \in \{0,1\}} P_a({\cal N}_{a,i}, {\cal N}_{a,i+1}, \gamma_i) = 1$.
As before, we may interpret $Z_a^{\rm twisted}$ in terms of a twisted sector partition function $Z_a^{\rm twisted}(\beta)$ of a Hamiltonian $H_a$ with spectrum $- 2 \log \lambda({\cal N}_{a})$ at inverse ``temperature" equal to one.
In contrast to the untwisted sector, the projection operator eigenvalues do not appear in the norm squared of the twisted sector state $| \Psi_a \rangle$ or the corresponding partition function.

To calculate the entanglement negativity of $| \Psi \rangle = \sum_a \psi_a |\overline{\Psi}_a \rangle$, where the sum is over all topological sectors and the normalized sector $a$ state is 
\begin{align}
    \label{torusMRstate}
    |\overline{\Psi}_a \rangle = Z_a^{-M} | \Psi_a \rangle.
\end{align} 
We will again use Lemma \ref{lemma1}.
[Here and in the generalized proof below we drop the untwisted/twisted superscripts for the normalization factors in ~\eqref{untwistednormsquared} and \eqref{twistednormsquared}.]
The proof that we previously gave of this lemma was special to the Laughlin state; below we will sketch how the proof generalizes for the Moore-Read state. 
\begin{proof}[Generalized Proof of Lemma 1]
As before, we will directly evaluate $||\rho^{T_{\rm odd}}||_1 = \tr \sqrt{(\rho^{T_{\rm odd}})^\dagger \rho^{T_{\rm odd}}}$.
We begin by writing the torus state as
\begin{align}
\label{torusgeneral}
    |\Psi\rangle &=\sum_a \sum_{\vec{{\cal X}_{a}}}c_a(\vec{\mathcal{N}}_a)P_a(\vec{{\cal X}}_{a}) \bigotimes_{i=1}^{2M} |{\cal X}_{a,i}\rangle, \\
    %\times |{\cal X}_{2,a}\rangle \times \cdots |{\cal X}_{2M,a}\rangle,\\
    {\cal X}_{a,i} &\equiv (\mathcal{N}_{a,i},\mathcal{N}_{a,i+1}, s_a\gamma_i),\label{calXdef}\\
    \label{cgeneraldef}
    c_a(\vec{\mathcal{N}}_a) & \equiv \psi_a \prod_{i=1}^{2M}\frac{\lambda(\mathcal{N}_{a,i})}{\sqrt{Z_{a}}},\\
    \label{projectorgeneral}
    P_a(\vec{{\cal X}_{a}}) & \equiv \prod_{i=1}^{2M} P_a({\cal X}_{a,i})
    %\quad P_{a,i}(X_i) = \frac{1}{2}\left(1+(-1)^{\mathcal{N}_{a,i}+\mathcal{N}_{a,i+1}+s_a\gamma_i}\right).
\end{align}
%\end{widetext}
where $s_a=0$ if $a$ belongs to an untwisted sector and $s_a=1$ if $a$ belongs to a twisted sector.
The sum over $\vec{{\cal X}_{a}}$ is understood to be a sum over $\vec{\cal N}_{a}$ and, when $s_a = 1$, the Majorana fermion parity eigenvalues $\vec \gamma$, (i.e., $\sum_{\vec{{\cal X}}_{a}} = \sum_{\vec{{\cal N}}_{a}, s_a \vec \gamma}$).
$P_a({\cal X}_{a,i}) = P_a({\cal N}_{a,i}, {\cal N}_{a, i+1}, s_a \gamma_i)$ is defined in \eqref{twistedprojectioneigenvalues} for twisted sector $a$ where $s_a = 1$, this eigenvalue is also valid for untwisted $a$, in which case $s_a = 0$.

The density matrix $\rho = |\Psi\rangle\langle\Psi|$ and its partial transpose with respect to $X_{\rm odd}$ are then
\begin{align}
    \begin{split}
    \label{torusdensitymatrixMR}
        \rho &=\sum_{a,a'}\sum_{\vec{{\cal X}_{a}}, \vec{{\cal X}}'_{a'}} c_a(\vec{\mathcal{N}}_a)c^*_{a'}(\vec{\mathcal{N}}'_{a'})\\ 
        &\;\;\;\;\times P_{a}({\cal X}_{a,1})|{\cal X}_{a,1}\rangle\langle {\cal X}'_{a',1}|P_{a'}({\cal X}'_{a',1})\\
        &\;\;\;\;\otimes P_{a}({\cal X}_{a,2})|{\cal X}_{a,2}\rangle\langle {\cal X}'_{a',2}|P_{a'}({\cal X}'_{a',2})\\
        &\;\;\;\;\otimes \cdots\\
        &\;\;\;\;\otimes P_{a}({\cal X}_{a, 2M})|{\cal X}_{a, 2M}\rangle\langle {\cal X}'_{a', 2M}|P_{a'}({\cal X}'_{a', 2M}),
    \end{split} \\
    \begin{split}
        \rho^{T_{\rm odd}} 
        &=\sum_{a,a'}\sum_{\vec{{\cal X}_{a}}, \vec{{\cal X}}'_{a'}} c_a(\vec{\mathcal{N}}_a)c^*_{a'}(\vec{\mathcal{N}}'_{a'})\\ 
        &\;\;\;\;\times P_{a'}({\cal X}'_{a',1})|{\cal X}'_{a', 1}\rangle\langle {\cal X}_{a, 1}|P_{a}({\cal X}_{a,1})\\
        &\;\;\;\;\otimes P_{a}({\cal X}_{a,2})|{\cal X}_{a,2}\rangle\langle {\cal X}'_{a',2}|P_{a'}({\cal X}'_{a',2})\\
        &\;\;\;\;\otimes \cdots\\
        &\;\;\;\;\otimes P_{a}({\cal X}_{a,2M})|{\cal X}_{a, 2M}\rangle\langle {\cal X}'_{a', 2M}|P_{a'}({\cal X}'_{a', 2M}).
    \end{split}
\end{align}
\begin{widetext}
Thus,
\begin{align}
\begin{split}
    (\rho^{T_{\rm odd}})^\dagger\rho^{T_{\rm odd}} &=
    \sum_{a,a',a'',a'''}
    \sum_{\vec{{\cal X}_{a}}} \sum_{\vec{{\cal X}}'_{a'}}
    \sum_{\vec{{\cal X}}^{''}_{a^{''}}} \sum_{\vec{{\cal X}}^{'''}_{a^{'''}}}c_a(\vec{\mathcal{N}}_a)c^*_{a'}(\vec{\mathcal{N}}'_{a'})c_{a'''}(\vec{\mathcal{N}}'''_{a'''})c^*_{a''}(\vec{\mathcal{N}}''_{a''})\\
    &\;\;\;\;\times P_{a'''}({\cal X}^{'''}_{a''', 1})P_{a''}({\cal X}^{''}_{a'', 1})|{\cal X}^{''}_{a'', 1}\rangle\langle {\cal X}^{'''}_{a''',1}|{\cal X}'_{a', 1}\rangle\langle {\cal X}_{a, 1}|P_{a}({\cal X}_{a, 1})P_{a'}({\cal X}'_{a', 1})\\
    &\;\;\;\;\otimes P_{a''}({\cal X}^{''}_{a'', 2})P_{a'''}({\cal X}^{'''}_{a''', 2})|{\cal X}^{'''}_{a''', 2}\rangle\langle {\cal X}^{''}_{a'', 2}|{\cal X}_{a, 2}\rangle\langle {\cal X}'_{a', 2}|P_{a'}({\cal X}'_{a', 2})P_{a}({\cal X}_{a, 2})\\
    &\;\;\;\;\otimes P_{a'''}({\cal X}^{'''}_{a''', 3})P_{a''}({\cal X}^{''}_{a'', 3})|{\cal X}^{''}_{a'', 3}\rangle\langle {\cal X}^{'''}_{a''', 3}|{\cal X}'_{a', 3}\rangle\langle {\cal X}_{a, 3}|P_{a}({\cal X}_{a, 3})P_{a'}({\cal X}'_{a', 3})\\
    &\;\;\;\;\otimes \ldots\\
    &\;\;\;\;\otimes P_{a''}({\cal X}^{''}_{a'', 2M})P_{a'''}({\cal X}^{'''}_{a''', 2M})|{\cal X}^{'''}_{a''', 2M}\rangle\langle {\cal X}^{''}_{a'', 2M}|{\cal X}_{a, 2M}\rangle\langle {\cal X}'_{a', 2M}|P_{a'}({\cal X}'_{a', 2M})P_{a}({\cal X}_{a, 2M}).
\end{split}\label{RhodaggerrhoAgain}
\end{align} 
The inner products identify:
\begin{align}
    \begin{split}
        a'''&=a', \quad {\cal X}^{'''}_{a''', 2k-1} = {\cal X}'_{a', 2k-1}, \quad 
        P_{a'''}({\cal X}'''_{a''', 2k-1}) = P_{a'}({\cal X}'_{a', 2k-1}),\\
        a''&=a, \quad \;\;\;\;\; {\cal X}^{''}_{a'', 2k} = {\cal X}_{a, 2k}, \quad 
        \;\;\;\;\;\;\;\; P_{a''}({\cal X}''_{a'', 2k}) = P_{a}({\cal X}_{a, 2k}),
    \end{split}
\end{align} 
for $k \in \{1, \ldots, M\}$.
From \eqref{calXdef}, we see that  ${\cal X}^{'''}_{a''', 2k-1} = {\cal X}'_{a', 2k-1}$ means that $\mathcal{N}'''_{a''',i}=\mathcal{N}'_{a',i}$ for all $i=1,2,\ldots, 2M$ and $s_{a'''} \gamma'''_{2k-1} = s_{a'} \gamma'_{2k-1}$ for $k = 1, 2, \ldots, M$.
If the projection operators $P_{a'''}({\cal X}'''_{a''', 2k})$ and $P_{a'}({\cal X}'_{a', 2k})$ are to be nonzero simultaneously for fixed $\vec{\mathcal{N}}'''_{a'''}= \vec{\mathcal{N}}'_{a'}$, then $s_{a'''} \gamma'''_{2k} = s_{a'} \gamma'_{2k}$ for $k = 1, 2, \ldots, M$.
Thus, we conclude the above inner products identify $P_{a'''}(\mathcal{X}'''_{a''',i})=P_{a'}(\mathcal{X}'_{a',i})$ and ${\cal X}^{'''}_{a''', i} = {\cal X}'_{a', i}$ for {\it all} $i$.
Using similar logic, we likewise find ${\cal X}^{''}_{a'', i} = {\cal X}_{a, i}$ and $P_{a''}({\cal X}''_{a'', i}) = P_{a}({\cal X}_{a, i})$ for all $i$. 
Thus, $(\rho^{T_{\rm odd}})^\dagger\rho^{T_{\rm odd}}$ is again diagonal and 
\begin{align}
\label{diagonalgeneralrhodaggerrho}
    \begin{split}
        \sqrt{\left(\rho^{T_{\rm odd}}\right)^\dagger\rho^{T_{\rm odd}}}&=    \sum_{a,a'}\sum_{\vec{{\cal X}_{a}}}\sum_{\vec{{\cal X}'_{a'}}} P_a(\vec{{\cal X}_{a}}) P_{a'}(\vec{{\cal X}}'_{a'})
        |c_a(\vec{\mathcal{N}}_a)||c_{a'}(\vec{\mathcal{N}}'_{a'})|\\
        &\times\;\;\;\;|{\cal X}_{a, 1}\rangle\langle {\cal X}_{a, 1}|\otimes |{\cal X}'_{a', 2}\rangle\langle {\cal X}'_{a', 2}|\otimes\cdots\otimes|{\cal X}'_{a', 2M}\rangle\langle {\cal X}'_{a', 2M}|.
    \end{split}
\end{align}
Consequently,
\begin{align}
     \tr\sqrt{\left(\rho^{T_{\rm odd}}\right)^\dagger\rho^{T_{\rm odd}}}&= \sum_{a,a'}\sum_{\vec{{\cal X}_{a}}}\sum_{\vec{{\cal X}'_{a'}}} P_a(\vec{{\cal X}_{a}}) P_{a'}(\vec{{\cal X}}'_{a'})
        |c_a(\vec{\mathcal{N}}_a)||c_{a'}(\vec{\mathcal{N}}'_{a'})|\\
        &=\left(\sum_a \sum_{\vec{{\cal X}_{a}}} P_a(\vec{{\cal X}_{a}}) |c_a(\vec{\mathcal{N}}_a)|\right)^2.
\end{align}
Following the same logic as in \eqref{tracenormlemma1} and using the definitions in Eqs.~\eqref{untwistednormsquared} and \eqref{twistednormsquared} as well as the identity $\sum_{\gamma_i \in \{0,1\}} P({\cal X}_{a, i}) = 1$ when $s_a = 1$, we find
\begin{align}
   ||\rho^{T_{\rm odd}}||_1 & = \left(\sum_{a \in {\rm untwisted}}|\psi_a|\left(\frac{Z^{\rm untwisted}_a(1/2)}{\sqrt{Z^{\rm untwisted}_a(1)}}\right)^{2M} + \sum_{a \in {\rm twisted}}|\psi_a|\left(\frac{Z^{\rm twisted}_a(1/2)}{\sqrt{Z^{\rm twisted}_a(1)}}\right)^{2M}\right)^2.
\end{align}
\end{widetext}
We obtain Lemma \ref{lemma1} upon taking the logarithm of $||\rho^{T_{\rm odd}}||_1$.
\end{proof}

To finish the computation of the entanglement negativity, we need to evaluate the untwisted and twisted partition functions in the $L \rightarrow \infty$ limit.
The untwisted sector $a$ partition function at inverse ``temperature" $\beta$ can be written as \cite{sohal-2020}
\begin{align}\begin{split}
Z^{\rm untwisted}_a(\beta) &=\frac{1}{2}\chi^{\rm Ising}_0(\tilde{q})\left[\chi^+_{r/m}(q)+\chi^{-}_{r/m}(q)\right]\\
&\;\;\;\;+\frac{1}{2}\chi^{\rm Ising}_{1/2}(\tilde{q})\left[\chi^+_{r/m}(q)-\chi^-_{r/m}(q)\right],
\end{split}
\end{align} 
where the characters $\chi$ are
\begin{align}\begin{split}
\chi^{\rm Ising}_0(\tilde{q}) &=\frac{1}{2}\tilde{q}^{-\frac{1}{48}}
\left[ \prod_{j>0}\left(1+\tilde{q}^{j+\frac{1}{2}}\right) + \prod_{j>0}\left(1-\tilde{q}^{j+\frac{1}{2}}\right)\right]\\
\chi^{\rm Ising}_{1/2}(\tilde{q}) &=\frac{1}{2}\tilde{q}^{-\frac{1}{48}}
\left[ \prod_{j>0}\left(1+\tilde{q}^{j+\frac{1}{2}}\right) - \prod_{j>0}\left(1-\tilde{q}^{j+\frac{1}{2}}\right)\right]\\
\chi^{\pm}_{r/m}(q)&=\left(
\sum_{n\in\mathbb{Z}}(\pm1)^nq^{m\left(n-\frac{r}{m}\right)^2/2}
\right)\\
&\;\;\;\;\times q^{-\frac{1}{24}}\prod_{j>0}(1-q^j)^{-1}.
\end{split}\label{MRCharacters}
\end{align} 
Recall that $q=e^{2\pi i\tau}$ and $\tau$ is defined in \eqref{DefineTau}.
In addition, we have a new pair of modular parameters $\tilde{\tau}$ and $\tilde{q}$ defined by $\tilde{\tau}=\tau/2=i\tilde{\tau}_2$ and $\tilde{q}=e^{2\pi i \tilde{\tau}}$.
The characters can be rewritten in terms of modular functions (Appendix~\ref{ModularFunctions}) as 
\begin{align}
    \begin{split}
        \chi^{\rm Ising}_0(\tilde{q}) &=\frac{1}{2}\sqrt{\frac{\theta^0_0(\tilde{\tau})}{\eta(\tilde{\tau})}} +\frac{1}{2}\sqrt{\frac{\theta^0_{1/2}(\tilde{\tau})}{\eta(\tilde{\tau})}},\\
        \chi^{\rm Ising}_{1/2}(\tilde{q}) &=\frac{1}{2}\sqrt{\frac{\theta^0_0(\tilde{\tau})}{\eta(\tilde{\tau})}} -\frac{1}{2}\sqrt{\frac{\theta^0_{1/2}(\tilde{\tau})}{\eta(\tilde{\tau})}},\\
        \chi^{\pm}_{r/m}(q)&=\frac{\theta^{-r/m}_0(m\tau)}{\eta(\tau)}, \quad \text{or}\quad \frac{e^{\frac{i\pi r}{m}}\theta^{-r/m}_{1/2}(m\tau)}{\eta(\tau)}.
    \end{split}
\end{align}
$\theta^0_{1/2}(\tilde\tau)$ goes to zero in the $L\rightarrow\infty$ limit. 
Therefore, the untwisted $1$ and $\chi$ sector partition functions both reduce to 
\begin{align}
        Z^{\rm untwisted}_a(\beta) = \frac{1}{2}\sqrt{\frac{\theta^0_0(\tilde{\tau})}{\eta(\tilde{\tau})}}\frac{\theta^{-r/m}_0(m\tau)}{\eta(\tau)}.
        \label{PartitionFuncUntwisted}
\end{align} 
In the $L\rightarrow\infty$ limit, 
\begin{align}
    \frac{Z^{\rm untwisted}_a(1/2)}{\sqrt{Z^{\rm untwisted}_a(1)}}
    &=\frac{1}{\sqrt{2\sqrt{m}}}\exp\left[\frac{\pi L}{8}\left(\frac{1}{v_e}+\frac{1}{2\tilde{v}_e}\right)\right].
    \label{PartitionRatioUntwisted}
\end{align}

The twisted sector $a$ partition function is 
\begin{align}
    \begin{split}
        Z^{\text{twisted}}_a &= \chi^{\text{Ising}}_{1/16}(\tilde{q})\chi^+_{(r+1/2)/m}(q),
    \end{split}
\end{align}
where the characters are
\begin{align}
\begin{split}
\chi^{\text{Ising}}_{1/16}(\tilde{q}) &= 
\sum_{n_{i,k}\in\mathbb{Z}^+}e^{-\sum_{k>0}\beta\tilde{v}_e k (n_{i,k}+\frac{1}{2})}\\
&=\tilde{q}^{\frac{1}{24}}\prod_{j=1}^\infty (1+\tilde{q}^j),\quad k=\frac{2\pi j}{L}, j\in\mathbb{Z}^+\\
\chi_{(r+1/2)/m}(q) &= q^{-\frac{1}{24}}\prod_{j=1}^\infty(1-q^j)^{-1}\\
&\;\;\;\;\times\left(
\sum_{n\in\mathbb{Z}}q^{m\left(n-\frac{r+1/2}{m}\right)^2/2}
\right).\label{PartitionRatioTwisted}
\end{split}
\end{align} 
$\chi^{\text{Ising}}_{1/16}$ produces the $d_a=\sqrt{2}$ quantum dimension associated to the Majorana quasiparticle of the Moore-Read state.
In terms of modular functions (Appendix~\ref{ModularFunctions}), the characters are
\begin{align}
    \begin{split}
        \chi^{\text{Ising}}_{1/16}(\tilde{q}) &=\sqrt{\frac{\theta^{1/2}_0(\tilde{\tau})}{2\eta(\tilde{\tau})}},\\
        \chi_{(r+1/2)/m}(q)&=\frac{\theta_0^{-(r+1/2)/m}(m\tau)}{\eta(\tau)}.
    \end{split}
\end{align}
Thus, \begin{align}
    \begin{split}
        Z^{\text{twisted}}_a &= \sqrt{\frac{\theta^{1/2}_0(\tilde{\tau})}{2\eta(\tilde{\tau})}}\frac{\theta_0^{-(r+1/2)/m}(m\tau)}{\eta(\tau)},
    \end{split}
\end{align}
and for $L \rightarrow \infty$,
\begin{align}
\begin{split}
\label{PartitionRatiotwisted}
 \frac{Z^{\rm twisted}_a(1/2)}{\sqrt{Z^{\rm twisted}_a(1)}}&=\frac{1}{\sqrt[4]{2m}}\exp\left[\frac{\pi L}{8}\left(\frac{1}{v_e}+\frac{1}{2\tilde{v}_e}\right)\right].
\end{split}
\end{align}

Plugging these untwisted and twisted partition function ratios \eqref{PartitionRatioUntwisted} and \eqref{PartitionRatiotwisted} into Lemma \ref{lemma1}, we find
\begin{widetext}
\begin{align}
    \begin{split}
     ||\rho^{T_{\rm odd}}||_1 &=
     \left(
     \sum_{a\in {\rm untwisted}}|\psi_a|\left[
     \frac{Z_a^{\rm untwisted}(1/2)}{\sqrt{Z_a^{\rm untwisted}(1)}}\right]^{2M}+
     \sum_{a\in {\rm twisted}}|\psi_a|\left[
     \frac{Z_a^{\rm twisted}(1/2)}{\sqrt{Z_a^{\rm twisted}(1)}}\right]^{2M}
     \right)^2\\
     &=  \left( \left(\frac{1}{2\sqrt{m}}\right)^M\exp\left[\frac{M\pi L}{4}\left(\frac{1}{v_e}+\frac{1}{2\tilde{v}_e}\right)\right]   \sum_a |\psi_a|(d_a)^M\right)^2;
    \end{split} \\
    \begin{split}
      \mathcal{E}_{X_{\rm odd}: X_{\rm even}} &=M\left[\frac{\pi}{2}\left(\frac{1}{v_e}+\frac{1}{2\tilde{v}_e}\right)\right]L - 2 M\log \sqrt{4 m} + 2\log \sum_a|\psi_a|(d_a)^M.  \label{eq:LogNegMRTori}
    \end{split}
\end{align}
\end{widetext}  
In the second identity above, we recovered the quantum dimensions of the quasiparticles associated to each sector: for the $\{1,\chi\}$ untwisted sectors, $d_a=1$; while for the $\xi$ twisted sectors, $d_a=\sqrt{2}$. 
In addition to the $2M$ contributions, proportional to $\log\sqrt{4 m}$, in the entanglement negativity, there is a topological sector correction equal to $2\log\sum_a|\psi_a|(d_a)^M$. 
This recovers \eqref{torusresult} with the non-universal constant given by $\alpha = \frac{\pi}{2}\left(\frac{1}{v_e}+\frac{1}{2\tilde{v}_e}\right)$ and the total quantum dimension equal to $\mathcal{D}= \sqrt{4m}$.

\subsubsection{Cylinder Geometry}
\label{cylinderMRsection}

Next we calculate the entanglement negativity between subsets of $X_{\rm even}$ and $X_{\rm odd}$ when the degrees of freedom on $N \leq M$ cylinders $\bar{Y} \subset X$ of the Moore-Read state (constructed in the previous section) have been traced over [e.g., Fig.~\ref{4cylinders}(b)].
As in \S \ref{cylinderlaughlinsection}, we denote the remaining $(2M - N)$ cylinders by $Y$ and their decomposition into ``odd" and ``even" cylinders as $Y_{\rm odd}$ and $Y_{\rm even}$.
The resulting entanglement negativity will depend on the number $R$ of shared interfaces between the remaining cylinders in $Y_{\rm odd} \cup Y_{\rm even}$.

To find the entanglement negativity in this cylinder geometry, we will apply Lemma \ref{lemma2}.
Before doing so, we describe how the proof of this lemma generalizes to the Moore-Read state.
\begin{proof}[Generalized Proof of Lemma 2] 
The argument follows almost exactly the proof for the Laughlin case upon updating the notation to the Moore-Read state with the replacements: $c_a(\vec{{\cal N}}_a) \rightarrow P_a(\vec{{\cal X}}_{a}) c_a(\vec{{\cal N}}_a)$ and $|{\cal N}_{a,i}, {\cal N}_{a,i+1} \rangle \rightarrow |{\cal X}_{a,i} \rangle$.
Because of this, we will only discuss one case (Case I) below; the remaining cases (Cases II$-$IV) follow straightforwardly using the same logic as in the Laughlin case and the manipulations outlined for the generalized proof of Lemma \ref{lemma1}.
We will suppress the untwisted/twisted superscripts on $Z_a$ when convenient.

(Case I) We remove cylinder $\bar{Y} = X_{2k}$ where $1<2k \leq 2M$ by tracing over its left and right edge states.
Thus, $Y_{\rm odd} = X_{\rm odd}$, $Y_{\rm even} = X_2 \cup \cdots \cup X_{2k - 2} \cup X_{2k + 2} \cdots X_{2M}$, and $R = 2M - 2$.

The density matrix of the Moore-Read torus state in \eqref{torusgeneral} is
\begin{align}
\begin{split}
    \rho &=\sum_{a,a'}\sum_{\vec{{\cal X}_{a}}, \vec{{\cal X}}'_{a'}} c_a(\vec{\mathcal{N}}_a)c^*_{a'}(\vec{\mathcal{N}}'_{a'}) \times \cdots \\ 
        &\;\;\;\;\times P_{a}({\cal X}_{a,2k-1})|{\cal X}_{a,2k-1}\rangle\langle {\cal X}'_{a',2k-1}|P_{a'}({\cal X}'_{a',2k-1})\\
        &\;\;\;\;\otimes P_{a}({\cal X}_{a,2k})|{\cal X}_{a,2k}\rangle\langle {\cal X}'_{a',2k}|P_{a'}({\cal X}'_{a',2k})\\
        &\;\;\;\;\otimes P_{a}({\cal X}_{a, 2k+1})|{\cal X}_{a, 2k+1}\rangle\langle {\cal X}'_{a', 2k+1}|P_{a'}({\cal X}'_{a', 2k+1})\\
        &\;\;\;\;\otimes \cdots.
        \end{split}
\end{align}
The trace over degrees of freedom on cylinder $\bar{Y} = X_{2k}$ sets
\begin{align}
\label{generalconstraint}
    a' = a, \quad {\cal X}'_{a',2k} = {\cal X}_{a,2k}, \quad P_{a'}({\cal X}'_{a',2k}) = P_{a}({\cal X}_{a,2k})
\end{align}
and removes the corresponding outer products involving states on $X_{2k}$
In particular, $\Tr_{\bar{Y}} (\rho)$ is a direct sum over untwisted and twisted topological sectors.
\begin{widetext}
The partial transpose of $\rho_Y = \Tr_{\bar{Y}} (\rho)$ with respect to $Y_{\rm odd}$ is
\begin{align}
\label{generalpartialtranspose}
\begin{split}
    \rho_Y^{T_{\rm odd}} &=\sum_{a}\sum_{\vec{{\cal X}_{a}}, \vec{{\cal X}}'_{a}} c_a(\vec{\mathcal{N}}_a)c^*_{a}(\vec{\mathcal{N}}'_{a}) P^2_{a}({\cal X}_{a,2k}) \times \cdots \\ 
            &\;\;\;\;\times P_{a}({\cal X}_{a,2k-2})|{\cal X}_{a,2k-2}\rangle\langle {\cal X}'_{a,2k-2}|P_{a}({\cal X}'_{a,2k-2})\\
        &\;\;\;\;\otimes P_{a}({\cal X}_{a,2k-1})|{\cal N}'_{a,2k-1}, {\cal N}_{a,2k}, s_a \gamma'_{i-1}\rangle\langle {\cal X}_{a,2k-1}|P_{a}({\cal N}'_{a,2k-1}, {\cal N}_{a,2k}, s_a \gamma'_{i-1})\\
        &\;\;\;\;\otimes P_{a}({\cal X}_{a, 2k+1})|{\cal N}_{a, 2k+1}, {\cal N}'_{a, 2k+2}, s_a \gamma'_{2k+1}\rangle\langle {\cal X}_{a, 2k+1}|P_{a}({\cal N}_{a, 2k+1}, {\cal N}'_{a, 2k+2}, s_a \gamma'_{2k+1})\\
                &\;\;\;\;\otimes P_{a}({\cal X}_{a, 2k+2})|{\cal X}_{a, 2k+2}\rangle\langle {\cal X}'_{a, 2k+2}|P_{a}({\cal X}'_{a, 2k+2})\\
        &\;\;\;\;\otimes \cdots,
        \end{split}
\end{align}
where $c_a(\vec{\mathcal{N}}_a) c^*_{a}(\vec{\mathcal{N}}'_{a})$ takes the same form as in \eqref{ProductCC}.
There is no dependence on ${\cal X}'_{a, 2k}$ above because of \eqref{generalconstraint}.
In \eqref{generalpartialtranspose}, we have expanded out the arguments of two of the projection operator eigenvalues and kets involved in the outerproducts associated to cylinders $X_{2k-1}$ and $X_{2k+1}$ to make the identifications \eqref{generalconstraint} manifest.
Next, we compute
\begin{align}
    \begin{split}
    (\rho^{T_{\rm odd}}_{Y})^\dagger \rho^{T_{\rm odd}}_{Y} &= \sum_{a,a'}
    \sum_{\vec{\mathcal{X}}_a,\vec{\mathcal{X}}'_{a},\vec{\mathcal{X}}''_{a'},\vec{\mathcal{X}}'''_{a'}} c^*_{a'}(\vec{\mathcal{N}}''_{a'})
    c_{a'}(\vec{\mathcal{N}}'''_{a'})
    c_a(\vec{\mathcal{N}}_a) c^*_{a}(\vec{\mathcal{N}}'_{a}) P^2_{a}({\cal X}_{a,2k}) P^2_{a'}({\cal X}''_{a,2k}) \times\cdots\\ 
    &\;\;\;\;\otimes
    P_{a'}({\cal X}'''_{a,2k-2}) P_{a'}({\cal X}''_{a,2k-2}) P_{a}({\cal X}_{a,2k-2}) P_{a}({\cal X}'_{a,2k-2}) |\mathcal{X}'''_{a',2k-2}\rangle\langle\mathcal{X}''_{a',2k-2}|\mathcal{X}_{a,2k-2}\rangle\langle\mathcal{X}'_{a,2k-2}|\\
    &\;\;\;\;\otimes
    P_{a'}({\cal X}'''_{a,2k-1}) P_{a'}({\cal X}''_{a,2k-1}) P_{a}({\cal X}_{a,2k-1}) P_{a}({\cal X}'_{a,2k-1}) |\mathcal{X}''_{a',2k-1}\rangle\langle\mathcal{X}'''_{a',2k-1}|\mathcal{X}'_{a,2k-1}\rangle\langle\mathcal{X}_{a,2k-1}|\\
     &\;\;\;\;\otimes
   P_{a'}({\cal X}'''_{a,2k+1}) P_{a'}({\cal X}''_{a,2k+1}) P_{a}({\cal X}_{a,2k+1}) P_{a}({\cal X}'_{a,2k+1}) |\mathcal{X}''_{a',2k+1}\rangle\langle\mathcal{X}''_{a',2k+1}|\mathcal{X}_{a,2k+1}\rangle\langle\mathcal{X}_{a,2k+1}|\\
    &\;\;\;\;\otimes
   P_{a'}({\cal X}'''_{a,2k+2}) P_{a'}({\cal X}''_{a,2k+2}) P_{a}({\cal X}_{a,2k+2}) P_{a}({\cal X}'_{a,2k+2}) |\mathcal{X}'''_{a',2k+2}\rangle\langle\mathcal{X}''_{a',2k+2}|\mathcal{X}_{a,2k+2}\rangle\langle\mathcal{X}'_{a,2k+2}|\\
    &\;\;\;\;\otimes\cdots.
    \end{split}\label{PartialSpectrumProductgeneral}
\end{align} 
\end{widetext}
 \label{projectorgeneral2}
Note that ${\cal X}'_{a, 2k}$ and ${\cal X}^{'''}_{a, 2k}$ are absent in the sums over $\vec{{\cal X}}'_a$ and $\vec{{\cal X}}^{'''}_a$ and it is to be understood that we have imposed \eqref{generalconstraint} and the analogous constraints for ${\cal X}^{'''}_{a, 2k}$ above.
The above overlaps fix $a' = a$ and identify
\begin{align}
\begin{split}
        \mathcal{X}'''_{a',2i-1}&=\mathcal{X}'_{a,2i-1}, \quad    \mathcal{X}''_{a',2i}=\mathcal{X}_{a,2i}, \\
        P_{a'}({\cal X}'''_{2i-1}) & = P_{a}({\cal X}'_{2i-1}), \quad P_{a'}({\cal X}''_{2i}) = P_{a}({\cal X}_{2i}),
    \end{split}\label{RemainedCylindersCond}
\end{align}
for $1 < 2i \leq 2M$.
(Recall that $2M+1 \equiv 1$.)
As in the generalized proof of Lemma \ref{lemma1}, these identifications imply $\mathcal{X}'''_{a',i}=\mathcal{X}'_{a,i}$, $\mathcal{X}'_{a',i}=\mathcal{X}_{a,i}$, and equate corresponding projection operators for all $i$.  
Thus, $(\rho^{T_{\rm odd}}_{Y})^\dagger \rho^{T_{\rm odd}}_{Y}$ is again diagonal and, analogous to \eqref{diagonalgeneralrhodaggerrho}, we may read off $\tr \sqrt{(\rho^{T_{\rm odd}}_{Y})^\dagger \rho^{T_{\rm odd}}_{Y}}$ to find
\begin{align}
|| \rho^{T_{\rm odd}}_{Y} ||_1 & = \sum_a \sum_{\vec{\mathcal{X}}_a, \vec{\mathcal{X}}'_a} P_a(\vec{\mathcal{X}}_a) |c_a(\vec{{\cal N}}_a)| P_a(\vec{\mathcal{X}}'_a) |c_a(\vec{{\cal N}}'_a)|  \cr
& = \sum_a \left( |\psi_a| \left(\frac{Z_a(1/2)}{\sqrt{Z_a(1)}}\right)^{2M-2}\right)^2,
\end{align}
where the sum is over all topological sectors and the identifications in \eqref{generalconstraint} are understood and the corresponding sums are removed.
Taking the logarithm of $|| \rho^{T_{\rm odd}}_{Y} ||_1$, we complete the proof of Lemma \ref{lemma2} for the Moore-Read state when $X^{''} = X_{2k}$.
As remarked above, the remaining cases follow similarly.
\end{proof}

By Lemma 2, the entanglement negativity between $Y_{\rm odd}$ and $Y_{\rm even}$ reduces to the calculation of entanglement Hamiltonian partition functions for the Moore-Read state.
Using the partition functions calculated in the previous section, we find
\begin{align}
        {\cal E}_{Y_{\rm odd}: Y_{\rm even}} & = {R \over 2} \left(\frac{\pi}{2}\left(\frac{1}{v_e}+\frac{1}{2\tilde{v}_e}\right)\right) L - R \log \sqrt{4 m} \cr & + \log \sum_a |\psi_a|^2 d_a^R,\label{eq:LogNegCylinder}
\end{align}
where the sum is over $a$ in the last term is over all topological sectors and the quantum dimensions $d_a = 1$ for the $\{1,\chi\}$ untwisted sectors and $d_a=\sqrt{2}$ for the $\xi$ twisted sectors.
This verifies \eqref{cylinderresult} with $\alpha = \frac{\pi}{2}\big(\frac{1}{v_e}+\frac{1}{2\tilde{v}_e}\big)$ and ${\cal D} = \sqrt{4 m}$. 
Similar to the torus geometry, there are $R$ terms each proportional to $\log {\cal D}$.
In addition and in contrast to the Abelian case (where $d_a = 1$), there is a topological sector correction equal to $\log |\psi_a|^2 d_a^R$.

\section{Disentangling}
\label{sectionfour}

In this section, we discuss how the topological order of the Laughlin and Moore-Read states affects the spatial structure of their manybody wavefunctions. 
Specifically, we determine when the disentangling condition,
\begin{align}
\label{eq-disent-logneg}
  \mathcal{E}_{A:BC}(\rho) = \mathcal{E}_{A:B}(\rho_{AB}),
\end{align}
holds, for suitable choices of cylinder subsets $A$, $B$, and $C$ of the torus, and the implication of \eqref{eq-disent-logneg} for the manybody wavefunction.
We focus on two decompositions of the torus:  
\begin{enumerate}
\item $AB_1CB_2$ geometry: The torus is divided into four consecutive
  cylinders $A$, $B_1$, $C$, and $B_2$ with disjoint $B=B_1\cup B_{2}$.
  In this case, $\rho_{ABC} = | \Psi_{ABC} \rangle \langle \Psi_{ABC} |$ with $|\Psi_{ABC} \rangle$ a pure ground state on the torus.
\item $ABCD$ geometry: The torus is divided into four consecutive
  cylinders $A$, $B$, $C$, and $D$.
  In this case, $\rho_{ABC} = \tr_D |\Psi_{ABCD} \rangle \langle \Psi_{ABCD} |$ is a mixed state on cylinder $A\cup B\cup C$ and $|\Psi_{ABCD} \rangle$ is a pure torus ground state.
\end{enumerate}
The entanglement negativity results in the previous section can be used to determine when the disentangling condition \eqref{eq-disent-logneg} is satisfied.
Applying Eqs.~\eqref{torusresult} with $M = 1$ and \eqref{cylinderresult} with $R=2$ to the $AB_1CB_2$ geometry, we find
\begin{align}
\label{longrangeentanglementgeneralrepeat}
{\cal E}_{A:B C}(\rho_{ABC}) - {\cal E}_{A:B}(\rho_{AB}) = \log {\big(\sum_a |\psi_a| d_a \big)^2 \over \sum_a |\psi_a|^2 d_a^2}.
\end{align}
Consequently, only torus states in a specific topological sector, i.e., $\psi_a = 1$ for a single $a$ with all other amplitudes vanishing, satisfy the disentangling condition.
For the $ABCD$ geometry, \eqref{cylinderresult} with $R=1$ implies any mixed cylinder state on $A\cup B\cup C$ satisfies \eqref{eq-disent-logneg}.

Generally, for a tripartite Hilbert space ${\cal H_A} \otimes {\cal H_B} \otimes {\cal H_C}$, the degrees of freedom in subsystems $A$ and $C$ have no quantum correlations in states that satisfy the disentangling condition.
This allows their corresponding wavefunctions to be disentangled in the following sense.
For pure states $\ket{\Psi_{ABC}}$, He and Vidal \cite{he-vidal-2015} showed that \eqref{eq-disent-logneg} implies that there exists a
decomposition of the Hilbert space of region $B$ as
$\mathcal{H}_B=\mathcal{H}_{B_L}\otimes \mathcal{H}_{B_R}$ such that
the state can be factorized as
\begin{align}\label{eq-disent-decomp-pure-2}
  \ket{\Psi_{ABC}} = \ket{\Psi_{AB_L}}\otimes \ket{\Psi_{B_RC}}.
\end{align}
The reverse statement is also valid: The disentangling condition \eqref{eq-disent-logneg} is implied by states satisfying \eqref{eq-disent-decomp-pure-2}.
For mixed states, Gour and Guo \cite{gour-guo-2018} demonstrated that the disentangling condition \eqref{eq-disent-logneg}  is satisfied for all states that saturate the strong subadditivity of the entanglement
entropy. 
The structure of these states follows 
\begin{align}\label{eq-ssa-structrepeat}
\rho_{ABC} = \sum_j p_j \rho_{AB_L^j}\otimes \rho_{B_R^j C},
\end{align}
where the Hilbert space of $B$ decomposes into $\mathcal{H}_B=\bigoplus_j \mathcal{H}_{B_L^j}\otimes\mathcal{H}_{B_R^j}$
and $\set{p_j}$ are probabilities.
The reverse statement of this case is not necessarily true: not all mixed states that satisfy the disentangling condition have the structure of \eqref{eq-ssa-structrepeat}.

To what extent does \eqref{eq-disent-logneg} constrain the structure of the manybody ground state of a topological phase?
More specifically, how are the disentangling condition \eqref{eq-disent-logneg} and the decompositions \eqref{eq-disent-decomp-pure-2} and \eqref{eq-ssa-structrepeat} related, if at all, for the Laughlin and Moore-Read ground states?
Because the relevant 
ground state of a topological phase is generally a direct sum over distinct topological sectors, $\ket{\Psi}=\sum_a\psi_a\ket{\Psi_a}$,
the applicability of the above results is less clear. For example, the degenerate ground state Hilbert space $\mathcal{H}=\bigoplus_a\mathcal{H}_a$ does not decompose into a tensor product $\mathcal{H}_A\otimes\mathcal{H}_B\otimes\mathcal{H}_C$ of local cylinder spaces. (See Ref.~\cite{PhysRevD.89.085012} for a related discussion in the context of entanglement entropy in gauge theory.)
This provides an {\it a priori} explanation for the necessity of restriction to a single topological sector $a$ to disentangle a ground state.

In this section, we will show that the Abelian Laughlin and untwisted sector Moore-Read states can be decomposed according to Eqs.~\eqref{eq-disent-decomp-pure-2} in the $AB_1CB_2$ geometry and \eqref{eq-ssa-structrepeat}  in the $ABCD$ geometry. 
On the other hand, the twisted sector Moore-Read states fail to decompose according to \eqref{eq-disent-decomp-pure-2} or \eqref{eq-ssa-structrepeat}. In other words, even when the disentangling condition \eqref{eq-disent-logneg} is satisfied, the ground state $\ket{\Psi_a}$ cannot be disentangled, if $a=\xi^{r+1/2}$ is an Ising twist field. 

The general failure of the Moore-Read state to disentangle stems from the non-Abelian nature of the twisted sectors. The Ising twist field carries non-trivial quantum dimension $d_a=\sqrt{2}>1$, associated to each Majorana zero mode $c^\sigma_{i,0}$. The ground state fixes the fermion parity $(-1)^{n'_i}=ic^R_{i-1,0}c^L_{i,0}$ of the pair of zero modes to be even at any given interface. However, the two zero modes do not belong to the same cylinder. When decomposing the ground state into a tensor product of local cylinder states, the ground state becomes a superposition of states with different cylinder fermion parities $(-1)^{\gamma_i}=ic^R_{i,0}c^L_{i,0}$. Since $(-1)^{n'_i}$ and $(-1)^{\gamma_i}$ do not commute, they do not share simultaneous eigenstates and the basis transformations between the two bases are generated by the non-diagonal  $F$-symbol in \eqref{IsingFsymbol}. Consequently, the zero mode part of the ground state, $\ket{f^{\rm zero}}$ in \eqref{twistedzero}, is a maximally entangled state where the cylinder fermion parities $(-1)^{\gamma_i}$ are scrambled. This state $\ket{f^{\rm zero}}$ does not decompose because the total fermion parity $\sum_i\gamma_i$ has a fixed value [see \eqref{totalparitycomparison}].

The main results of this section are summarized as the follows. The ground state of a fixed Abelian (Laughlin or untwisted Moore-Read) sector $|\Psi_a\rangle$ admits the factorization \eqref{Factorization1} in the $AB_1CB_2$ geometry. The reduced density matrix $\Tr_D|\Psi_a\rangle\langle\Psi_a|$ in the $ABCD$ geometry also factorizes according to \eqref{Factorization2}. These results are in agreement with the factorizability \eqref{eq-disent-decomp-pure-2} and \eqref{eq-ssa-structrepeat} (from \cite{he-vidal-2015} and \cite{gour-guo-2018}) as the ground state $|\Psi_a\rangle$ obeys the disentangling condition \eqref{eq-disent-logneg}. On the other hand, we show that the ground state $|\Psi_a\rangle$ [see \eqref{twistedGSZM}] of a non-Abelian twisted sector $a=\xi^{r+1/2}$ of the Moore-Read state fails to decompose. We demonstrate this by focusing on the zero mode sector and seeing that (i) the (partially traced) reduced density matrix \eqref{eq:prooftwistGSnondecomposed} is a mixed state and therefore the ground state must be entangled and (ii) the (partially transposed) density matrix \eqref{twistedrhodecomposition} in the $AB_1CB_2$ geometry does not factorize. Furthermore, we show that (iii) the reduced density matrix \eqref{ZMtwistedreduceddensitymatrix} in the $ABCD$ geometry also fails to disentangle. These results serve as concrete examples where \eqref{eq-disent-decomp-pure-2} and \eqref{eq-ssa-structrepeat} both fail to hold even though the disentangling condition \eqref{eq-disent-logneg} is satisfied.

\subsection{\texorpdfstring{$A B_1 C B_2$}{AB1CB2} Geometry}

We first consider the $AB_1CB_2$ torus geometry with $X_1=A$, $X_2=B_1$, $X_3=C$, and $X_4=B_2$.
Our discussion below will apply to both the Laughlin and Moore-Read states, with the understanding that Majorana fermion labels and projection operators are dropped for the Laughlin and untwisted Moore-Read states.

Since we are interested in measuring the entanglement $\mathcal{E}_{A:BC}(\rho_{ABC})$ between $A$ and its complement in \eqref{eq-disent-logneg}, we first show how the corresponding four-cylinder state can be viewed as a two-cylinder state on cylinders $A$ and $\bar A = B_1 \cup C \cup B_2$.
The torus state is given by 
\begin{align}
\begin{split}
    |\Psi\rangle 
    %&=
    %\sum_a \sum_{\vec{\mathcal{N}_{a}},s_a\vec{\gamma}} \psi_a P_a(\vec{\mathcal{N}}_a,s_a\vec{\gamma})c_a(\vec{\mathcal{N}_a})\bigotimes_{i=1}^{4}|\mathcal{X}_{a,i}\rangle\\
    &=\sum_aP_a \psi_a\sum_{\mathcal{N}_{a,1}, \mathcal{N}_{a,2}}
   \frac{\lambda(\mathcal{N}_{a,1})\lambda(\mathcal{N}_{a,2})}{\sqrt{Z_{a,1}Z_{a,2}}}\\
   &\;\;\;\;\times|0_1\rangle'_{s_a}\otimes|\mathcal{N}_{a,1}\mathcal{N}_{a,2}\rangle_{X_1}\otimes|0_2\rangle'_{s_a}\\
   &\;\;\;\;\otimes|\mathcal{N}_{a,2}\rangle_{LX_2} \otimes |\hat{\Psi}_{\rm bulk}\rangle\otimes|\mathcal{N}_{a,1}\rangle_{RX_4}.
\end{split}\label{EquvilentTopoStateMR}
\end{align} 
The partition functions $Z_{a,i} = Z_a$ for all $i$ with $Z_a$ defined in \eqref{laughlinsectorapartitionfunction} for the Laughlin state and in \eqref{sectorapartitionfunction} and \eqref{twistednormsquared} for the untwisted and twisted sectors of the Moore-Read state; the additional $i$ indices are bookkeeping devices that associate these factors to their corresponding cylinders $X_i$.
The (unprojected) ``bulk" state is
\begin{align}
    \begin{split}
    |\hat{\Psi}_{\rm bulk}\rangle &=
    \sum_{\mathcal{N}_{a,3},\mathcal{N}_{a,4}}\prod_{i=3,4}\frac{\lambda(\mathcal{N}_{a,i})}{\sqrt{Z_{a,i}}}|\mathcal{N}_{a,3}\rangle_{RX_2}\otimes |0_3\rangle'_{s_a}\\
    &\;\;\;\;\otimes|\mathcal{N}_{a,3}\mathcal{N}_{a,4}\rangle_{X_3}
    \otimes|0_4\rangle'_{s_a}\otimes|\mathcal{N}_{a,4}\rangle_{LX_4}.\label{eq:defnBulk}
    \end{split}
\end{align} 
It is normalized: $\langle\hat{\Psi}_{\rm bulk}|\hat{\Psi}_{\rm bulk}\rangle=1$. 
Because of summing over all the internal indexes labeled by 
$\mathcal{N}_{a,3}$, $\mathcal{N}_{a,4}$, $|0_1\rangle'_{s_a}$, and $|0_4\rangle'_{s_a}$ in \eqref{eq:defnBulk}, $|\hat{\Psi}_{\rm bulk}\rangle\langle\hat{\Psi}_{\rm bulk}|$ acts as an identity operator when computing $\rho=|\Psi\rangle\langle\Psi|$. Therefore, $|\hat{\Psi}_{\mathrm{bulk}}\rangle$ makes no contribution to the measured entanglement.
This is the key observation for relating the four-cylinder and two-cylinder states. 
Note that $|0_i\rangle'_{s_a}=|0_i\rangle'$ when $s_a=1$ in the twisted sector where $|0_i\rangle'$ denotes the parity of the Majorana zero mode states at the interfaces between cylinders.
These states appear before the $F$-symbol basis change to states labeled by the parity of Majorana zero mode states on a given cylinder.
$|0_i\rangle'_{s_a}=1$ when $s_a=0$ in an Abelian or untwisted sector. 
One can then perform a basis transformation using the $F$-symbols and shift the labeling of Majorana fermion parity from the interfaces to the cylinders. 
The second identity in \eqref{EquvilentTopoStateMR} shows that the above four-cylinder torus state is equivalent to the the two-cylinder torus state. 
Thus, we may safely apply the results of the previous section for the entanglement negativity to conclude that only pure states is a specific topological sector, i.e., those states without long-range entanglement, satisfy the disentangling condition \eqref{eq-disent-logneg}.

In a specific sector $a$, the unprojected state $|\hat{\Psi}_a\rangle$ can be factorized as
\begin{align}
\begin{split}
    |\hat{\Psi}_a\rangle &=|\hat{\Psi}_{X_1(LX_2RX_4),a}\rangle\otimes|\hat{\Psi}_{(RX_2LX_4)X_3,a}\rangle,
    \label{Factorization1}
\end{split}
\end{align} 
where
\begin{align}
    \begin{split}
        &|\hat{\Psi}_{X_1(LX_2RX_4),a}\rangle=\\ &\;\;\;\;\sum_{\mathcal{N}_{a,1},\mathcal{N}_{a,2}}\frac{\lambda(\mathcal{N}_{a,1})}{\sqrt{Z_{a,1}}}\frac{\lambda(\mathcal{N}_{a,2})}{\sqrt{Z_{a,2}}}
        \hat{\Psi}_{X_1,a}^{(\mathcal{N}_{a,1}\mathcal{N}_{a,2})}
        \hat{\Psi}_{LX_2RX_4}^{\mathcal{N}_{a,2}\mathcal{N}_{a,1}},\\
        &|\hat{\Psi}_{(RX_2LX_4)X_3,a}\rangle=\\
        &\;\;\;\;\sum_{\mathcal{N}_{a,3},\mathcal{N}_{a,4}}\frac{\lambda(\mathcal{N}_{a,3})}{\sqrt{Z_{a,3}}}\frac{\lambda(\mathcal{N}_{a,4})}{\sqrt{Z_{a,4}}}
        \hat{\Psi}_{X_3,a}^{(\mathcal{N}_{a,3}\mathcal{N}_{a,4})}\hat{\Psi}_{RX_2LX_4}^{\mathcal{N}_{a,3}\mathcal{N}_{a,4}},
    \end{split}
\end{align} 
and
\begin{align}
\begin{split}
    \hat{\Psi}_{X_1,a}^{(\mathcal{N}_{a,1}\mathcal{N}_{a,2})}&= |0_1\rangle'_{s_a}\otimes 
    |\mathcal{N}_{a,1}\mathcal{N}_{a,2}\rangle_{X_1}\otimes|0_2\rangle'_{s_a},\\
    \hat{\Psi}_{X_3,a}^{(\mathcal{N}_{a,3}\mathcal{N}_{a,4})}&= |0_3\rangle'_{s_a}\otimes 
    |\mathcal{N}_{a,3}\mathcal{N}_{a,4}\rangle_{X_3}\otimes|0_4\rangle'_{s_a},\\
    \hat{\Psi}_{LX_2RX_4}^{\mathcal{N}_{a,2}\mathcal{N}_{a,1}} &=|\mathcal{N}_{a,1}\rangle_{RX_4}\otimes|\mathcal{N}_{a,2}\rangle_{LX_2},\\
    \hat{\Psi}_{RX_2LX_4}^{\mathcal{N}_{a,3}\mathcal{N}_{a,4}} &=|\mathcal{N}_{a,3}\rangle_{RX_2}\otimes|\mathcal{N}_{a,4}\rangle_{LX_4}.
\end{split}
\end{align} 
Here, $|\hat{\Psi}_{X_1(LX_2RX_4),a}\rangle$ and $|\hat{\Psi}_{(RX_2LX_4)X_3,a}\rangle$ are normalized.
\eqref{Factorization1} is the desired factorization for the Laughlin state, where there is no projection operator.
For an untwisted sector Moore-Read state, the projection operator $P_a$ can be decomposed into cylinder state projection operators $P_{a,X_i}$, which in turn decompose into left and right edge projection operators as $P_{a,X_i} = P_{a, i}P_{a,i + 1}$.
Including these factorized projection operators, we find the untwisted Moore-Read ground state wavefunction in a specific sector $a$ can be disentangled.

In the twisted sector of the Moore-Read phase, the projection operator $P_{X_i}$ does not factorize into left and right edge components.
Further division of a given cylinder into sub-cylinders does not appear to help, as the resulting Hilbert space is not a simple tensor product.
Thus, the corresponding manybody wavefunction does not factorize as \eqref{eq-disent-decomp-pure-2}.
Although the corresponding pure state density matrix can be written in a form similar to \eqref{eq-ssa-structrepeat}, we will show in the following that the factorization of $\mathcal{H}_B=\mathcal{H}_{B_L}\otimes\mathcal{H}_{B_R}$ fails.

By splitting cylinder $B_1$ into sub-cylinders $X_2, X_3$ and $B_2$ into sub-cylinders $X_5, X_6$ (see Fig.~\ref{fig:AB1CB2geom}), the torus ground state
of a fixed twisted sector $a=\xi^{r+1/2}$ is 
\begin{align}
    |\Psi_a\rangle &=\sum_{\vec{\mathcal{N}}_a,\vec{\gamma}}c_a(\vec{\mathcal{N}}_a) P_a(\vec{\mathcal{N}}_a,\vec{\gamma}) \bigotimes_{i=1}^6
    |\mathcal{N}_{a,i}\mathcal{N}_{a,i+1}\gamma_i\rangle_{X_i},\label{eq:AB1CB2torusState}
\end{align}
where $c_a(\vec{\mathcal{N}}_a) \equiv \prod_{i=1}^{6}\frac{\lambda(\mathcal{N}_{a,i})}{\sqrt{Z_{a}}}$ and the projection operator eigenvalues $P_a(\vec{\mathcal{N}}_a,\vec{\gamma}) = \prod_{i=1}^6  P_a(\mathcal{N}_{a,i}, \gamma_i)$ are defined in \eqref{eq:twistSecProjector0}.
\begin{figure}[h]
\centering
\includegraphics[width=0.4\textwidth]{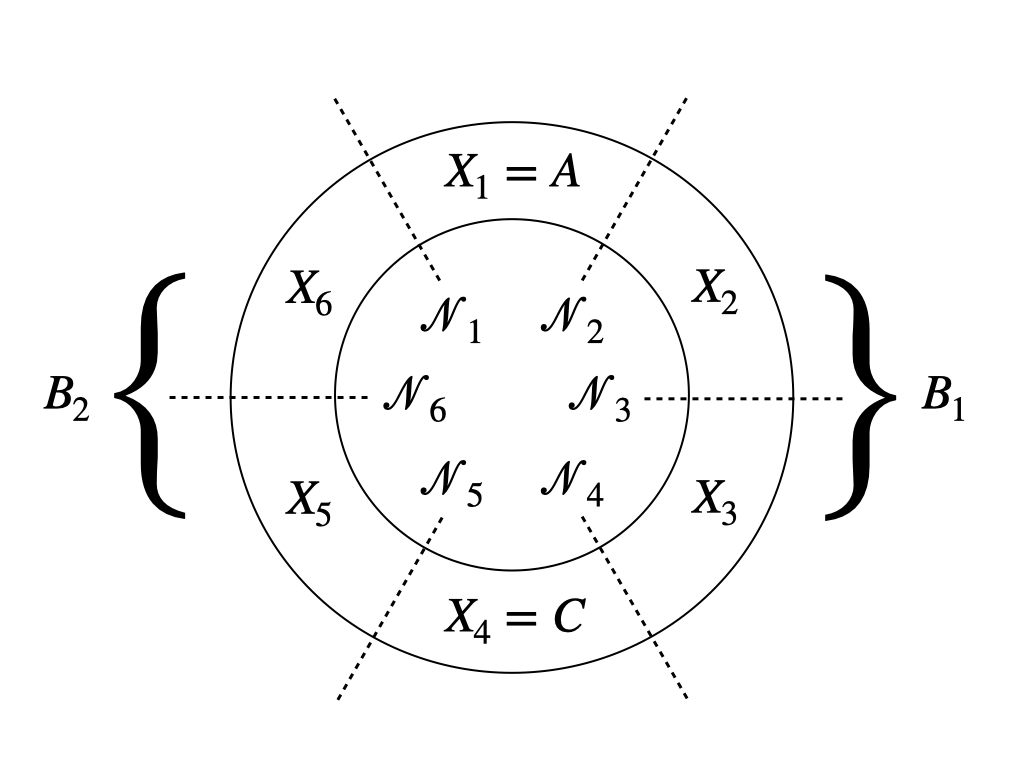}
\caption{A torus is divided by the dashed lines into the $AB_1CB_2$ geometry: Region $A$ is the $X_1$ cylinder while region $C$ the $X_4$ cylinder. Region $B_1$ is the union of $X_2$, and $X_3$; and $B_2$ the union of $X_5$ and $X_6$. The collective modes $\mathcal{N}_{i}$ are defined on the $i^{\mathrm{th}}$ interface as in \eqref{generalN} or \eqref{generalNwithfermionosc}.}\label{fig:AB1CB2geom}
\end{figure} 
We find that under the restricted sum $\sum_{i=1}^6\gamma_i=1$ mod 2 [see \eqref{totalparitycomparison}], we can split and restrict the sum over $\gamma_1, \gamma_2, \gamma_6$ and the sum over $\gamma_3, \gamma_4, \gamma_5$ according to the parity \begin{align}s=-(-1)^{\gamma_1+\gamma_2+\gamma_6}=(-1)^{\gamma_3+\gamma_4+\gamma_5}=\pm.\label{eq:parityindex}\end{align} The projections $P_{a,X_1}P_{a,X_2}P_{a,X_6}$ and $P_{a,X_3}P_{a,X_4}P_{a,X_5}$ from \eqref{eq:twistSecProjector0} both require $s=(-1)^{\mathcal{N}_{a,3}+\mathcal{N}_{a,6}}$.

The density matrix $\rho^{\rm twist}=|\Psi_a\rangle\langle\Psi_a|$ from the twisted sector $a=\xi^{r+1/2}$ can now be factorized as
\begin{align}
    \begin{split}
        \rho^{\rm twist} &= \sum_j p_j\rho_{X_1(X_2X_6)}^j\otimes \rho_{(X_3X_5)X4}^j.\label{eq:non-localindexintwistedSec}
    \end{split}
\end{align} 
Here the summation index $j$ is an abbreviation for the collection of quantities \begin{align}j=\left\{\left.\begin{array}{*{20}l}\mathcal{N}_{a,3},\mathcal{N}_{a,6},s\\\mathcal{N}'_{a,3},\mathcal{N}'_{a,6},s'\end{array}\right|\begin{array}{*{20}l} s=(-1)^{\mathcal{N}_{a,3},\mathcal{N}_{a,6}}\\ s'=(-1)^{\mathcal{N}'_{a,3},\mathcal{N}'_{a,6}}\end{array}\right\}.\end{align}
The probabilities in the density matrix are \begin{align}p_j&\equiv \prod_{i=3,6}\frac{\lambda(\mathcal{N}_{a,i})\lambda(\mathcal{N'}_{a',i})}{\sqrt{Z_{a,i}Z_{a,i}}}\end{align} so $\sum_jp_j=1$. The density matrix components are
\begin{widetext}
\begin{align}
    \begin{split}
        \rho^j_{X_1(X_2X_6)}&=\sum_{\{h\}}
        \prod_{i=1,2}\frac{\lambda(\mathcal{N}_{a,i})\lambda(\mathcal{N'}_{a,i})}{\sqrt{Z_{a,i}Z_{a,i}}}
        \sum_{\substack{\gamma_1,\gamma_2,\gamma_6\\s=-(-1)^{\gamma_1+\gamma_2+\gamma_6}}}
        \sum_{\substack{\gamma'_1,\gamma'_2,\gamma'_6\\s'=-(-1)^{\gamma'_1+\gamma'_2+\gamma'_6}}}
        \\
        &\;\;\;\;\times
        P_{a,X_6}|\mathcal{N}_{a,6}\mathcal{N}_{a,1}\gamma_6\rangle\langle\mathcal{N}'_{a,6},\mathcal{N}'_{a,1},\gamma'_6|P_{a,X_6}\\
        &\;\;\;\;\otimes
        P_{a,X_1}|\mathcal{N}_{a,1}\mathcal{N}_{a,2}\gamma_1\rangle\langle
        \mathcal{N}'_{a,1}\mathcal{N}'_{a,2}\gamma_1'|_{X_1}P_{a,X_1}\\
        &\;\;\;\;\otimes
        P_{a,X_2}|\mathcal{N}_{a,2}\mathcal{N}_{a,3}\gamma_2\rangle\langle\mathcal{N}'_{a,2},\mathcal{N}'_{a,3},\gamma'_2|P_{a,X_2}
    \end{split}
\end{align} 
with $\{h\}\equiv\{\mathcal{N}_{a,1}\mathcal{N}'_{a,1},\mathcal{N}_{a,2},\mathcal{N}'_{a,2}\}$, and         
\begin{align}
    \begin{split}       
        \rho^j_{(X_3X_5)X_4}&=\sum_{\{e\}}
        \prod_{i=4,5}\frac{\lambda(\mathcal{N}_{a,i})\lambda(\mathcal{N'}_{a,i})}{\sqrt{Z_{a,i}Z_{a,i}}}
        \sum_{\substack{\gamma_3,\gamma_4,\gamma_5\\s=(-1)^{\gamma_3+\gamma_4+\gamma_5}}}
        \sum_{\substack{\gamma'_3,\gamma'_4,\gamma'_5\\s'=(-1)^{\gamma'_3+\gamma'_4+\gamma'_5}}}\\
        &\;\;\;\;\times 
        P_{a,X_3}|\mathcal{N}_{a,3}\mathcal{N}_{a,4}\gamma_3\rangle\langle\mathcal{N}'_{a,3},\mathcal{N}'_{a,4},\gamma'_3|P_{a,X_3}\\
        &\;\;\;\;\otimes
        P_{a,X_4}|\mathcal{N}_{a,4}\mathcal{N}_{a,5}\gamma_4\rangle\langle
        \mathcal{N}'_{a,4}\mathcal{N}'_{a,5}\gamma_4'|P_{a,X_4}\\
        &\;\;\;\;\otimes
        P_{a,X_5}|\mathcal{N}_{a,5}\mathcal{N}_{a,6}\gamma_5\rangle\langle\mathcal{N}'_{a,6},\mathcal{N}'_{a,6},\gamma'_5|P_{a,X_5},
    \end{split}
\end{align} 
with $\{e\}\equiv\{\mathcal{N}_{a,4}\mathcal{N}'_{a,4},\mathcal{N}_{a,5},\mathcal{N}'_{a,5}\}$.
All density matrix components $\rho$'s are Hermitian and have unit trace.

We notice that the ground state \eqref{eq:AB1CB2torusState} and the density matrix \eqref{eq:non-localindexintwistedSec} of any of the twisted sectors are {\em not} factorizable and cannot be expressed in \eqref{eq-disent-decomp-pure-2} and \eqref{eq-ssa-structrepeat}. This is because the summation index $j$ involves the parities $s,s'$, which specify the fermion parity of half of the torus $AB_{\rm top}=X_1(X_2X_6)$ and $B_{\rm bottom}C=(X_3X_5)X_4$. These parity indices cannot be absorbed entirely into $B_{\rm top}$ and $B_{\rm bottom}$, and therefore the Hilbert space decomposition $\mathcal{H}_B=\bigoplus_j\mathcal{H}_{B_{\rm top}^j}\otimes\mathcal{H}_{B_{\rm bottom}^j}$ is {\em not} satisfied. We show this failure of decomposition of the ground state below by focusing on the zero mode sector \begin{align}|{\rm GS}\rangle = \frac{1}{4\sqrt{2}} \sum_{\{\gamma_i\}|_C}|\gamma_1\gamma_2\gamma_3\gamma_4\gamma_5\gamma_6\rangle\end{align} where the constraint $C$ requires that $\sum_{i=1}^6\gamma_i =1$ mod 2.

\begin{proof}
 We first see that the ground state can be re-expressed as \begin{align}|{\rm GS}\rangle=\frac{1}{\sqrt{2}}\sum_{s=\pm}\left(\frac{1}{2}\sum_{\substack{\gamma_1,\gamma_2,\gamma_6\\s=-(-1)^{\gamma_1+\gamma_2+\gamma_6}}}|\gamma_1\gamma_2\gamma_6\rangle\right)\otimes\left(\frac{1}{2}\sum_{\substack{\gamma_3,\gamma_4,\gamma_5\\s=(-1)^{\gamma_3+\gamma_4+\gamma_5}}}|\gamma_3\gamma_4\gamma_5\rangle\right)\label{twistedGSZM}.\end{align} To show that the above ground state does not decompose according to \eqref{eq-disent-decomp-pure-2}, we assume the contrary that $|{\rm GS}\rangle=|{\rm GS}\rangle_{AB_{\rm top}}\otimes|{\rm GS}\rangle_{B_{\rm bottom}C}$. This would imply the reduced density matrix $\rho_{AB_{\rm top}}=\Tr_{B_{\rm bottom}C}(\rho)$ is a pure state, where $\rho=|{\rm GS}\rangle\langle {\rm GS}|$. However, from \eqref{twistedGSZM}, \begin{align}\Tr_{B_{\rm bottom}C}|{\rm GS}\rangle\langle {\rm GS}|=\frac{1}{8}\sum_{\substack{(-1)^{\gamma_1+\gamma_2+\gamma_6}\\=(-1)^{\gamma'_1+\gamma'_2+\gamma'_6}}}|\gamma_1\gamma_2\gamma_6\rangle\langle\gamma'_1\gamma'_2\gamma'_6|\label{eq:prooftwistGSnondecomposed}
 \end{align} has spectrum $\{1/2,1/2,0,0,0,0,0,0\}$ and is a mixed state. Therefore the assumption $|{\rm GS}\rangle=|{\rm GS}\rangle_{AB_{\rm top}}\otimes|{\rm GS}\rangle_{B_{\rm bottom}C}$ must be false, and the ground state does not disentangle according to \eqref{eq-disent-decomp-pure-2}.

Furthermore, we consider the density matrix $\rho=|{\rm GS}\rangle\langle {\rm GS}|$, 
%with transposition on $A$,
\begin{align}\begin{split}
        \rho &= \frac{1}{32} \sum_{\vec{\gamma}|_C}\sum_{\vec{\gamma}'|_C}
        |\gamma_1\gamma_2\gamma_6\rangle
        \langle\gamma'_1\gamma'_2\gamma'_6|\otimes
        |\gamma_3\gamma_4\gamma_5\rangle
        \langle\gamma'_3\gamma'_4\gamma'_5|\\
        &=\frac{1}{32}\sum_{s,s'=\pm}\left(
        \sum_{\substack{s=-(-1)^{\gamma_1+\gamma_2+\gamma_6}\\
        s'=-(-1)^{\gamma'_1+\gamma'_2+\gamma'_6}
        }}
        |\gamma_1\gamma_2\gamma_6\rangle
        \langle\gamma'_1\gamma'_2\gamma'_6|\right)\otimes\left(
        \sum_{\substack{s=(-1)^{\gamma_3+\gamma_4+\gamma_5}\\
        s'=(-1)^{\gamma'_3+\gamma'_4+\gamma'_5}
        }}
        |\gamma_3\gamma_4\gamma_5\rangle
        \langle\gamma'_3\gamma'_4\gamma'_5|\right).
    \end{split}\label{twistedrhodecomposition}
\end{align} We define the density matrix components
\begin{align}\begin{split}
    \rho_{AB_{\rm top}}^{ss'} = \frac{1}{4}\sum_{\substack{s=-(-1)^{\gamma_1+\gamma_2+\gamma_6}\\
    s'=-(-1)^{\gamma'_1+\gamma'_2+\gamma'_6}
    }}|\gamma_1\gamma_2\gamma_6\rangle\langle\gamma'_1\gamma'_2\gamma'_6|,\\ \rho_{B_{\rm bottom}C}^{ss'} = \frac{1}{4}\sum_{\substack{s=(-1)^{\gamma_3+\gamma_4+\gamma_5}\\
        s'=(-1)^{\gamma'_3+\gamma'_4+\gamma'_5}
        }}
        |\gamma_3\gamma_4\gamma_5\rangle
        \langle\gamma'_3\gamma'_4\gamma'_5|.
\end{split}\label{twistedrhocomponents}\end{align} These components have unit trace only when $s=s'$, and have vanishing trace when $s\neq s'$. Therefore, \eqref{twistedrhodecomposition} does {\em not} admit a density matrix decomposition  \eqref{eq-ssa-structrepeat}. Moreover, even when $s=s'$, the parity index cannot be absorbed entirely in $B_{\rm top}$ and $B_{\rm bottom}$. To see this, we assume the contrary that the density matrix components decompose,  $\rho_{AB_{\rm top}}^{s}=\rho_{A}\otimes\rho_{B_{
\rm top}}^{s}$, where $\rho_A$ and $\rho_{B_{\rm top}}$ have unit trace. This implies $\rho_{A} = \Tr_{B_{\rm top}}(\rho_{AB_{\rm top}}^{s})$ and $\rho_{B_{\rm top}}^{s} = \Tr_{A}(\rho_{AB_{\rm top}}^{s})$. By taking the partial traces in \eqref{twistedrhocomponents},
\begin{align}
    \begin{split}
         \rho_{A}  =\Tr_{B_{\rm top}}(\rho_{AB_{\rm top}}^{s})=\frac{1}{2}\sum_{\gamma_1=0,1}
         |\gamma_1\rangle\langle\gamma_1|\quad\mbox{and}\quad
         \rho_{B_{\rm top}}^s =\Tr_{A}(\rho_{AB_{\rm top}}^{s})=\frac{1}{4}\sum_{\substack{(-1)^{\gamma_2+\gamma_6}\\=(-1)^{\gamma'_2+\gamma'_6}}}|\gamma_2\gamma_6\rangle\langle\gamma'_2\gamma'_6|.
    \end{split}
\end{align}
The product $\rho_A\otimes\rho_{B_{\rm top}}^s=\Tr_{B_{\rm top}}(\rho_{AB_{\rm top}}^{s})\otimes\Tr_{A}(\rho_{AB_{\rm top}}^{s})$ is \begin{align}\frac{1}{8}\sum_{\gamma_1=0,1}\sum_{\substack{(-1)^{\gamma_2+\gamma_6}\\=(-1)^{\gamma'_2+\gamma'_6}}}|\gamma_1\gamma_2\gamma_6\rangle\langle\gamma'_1\gamma'_2\gamma'_6|,
\label{eq:AB1CB2desityproofondecomposed}
\end{align} which contradicts \eqref{twistedrhocomponents}. Therefore, the assumption $\rho_{AB_{\rm top}}^{s}=\rho_{A}\otimes\rho_{B_{
\rm top}}^{s}$ must be false. Similarly, $\rho_{B_{\rm bottom}C}^{s}$ is also not factorizable.

\end{proof}

\end{widetext}

\subsection{\texorpdfstring{$ABCD$}{ABCD} Geometry}.

As we found in the proof of Lemma \ref{lemma2}, the trace over degrees of freedom in cylinder $D$ results in a reduced density matrix $\rho_{ABC}$ that is a direct sum over each of the topological sectors.
We may therefore consider the decomposition \eqref{eq-ssa-structrepeat} for the Laughlin and untwisted Moore-Read states separately from that of the twisted Moore-Read states. 
We will show how the latter set of states admits a refinement of the decomposition \eqref{eq-ssa-structrepeat}.
In both cases, the reduced density matrices $\rho_{ABC}$ saturate the strong subadditivity relation of the entanglement entropy.

\subsubsection{Laughlin and Untwisted Sector Moore-Read States}

We begin with a fixed pure torus state with anyon flux $a$,
\begin{align}
\begin{split}
    |\Psi\rangle &= \sum_{\vec{\mathcal{N}}_a}P_a(\vec{\mathcal{N}}_a)
    \prod_{i=1}^{4}\frac{\lambda(\mathcal{N}_{a,i})}{\sqrt{Z_{a, i}}}\\
    &\;\;\;\;\times|\mathcal{N}_{a,1}\mathcal{N}_{a,2}\rangle_{X_1}\otimes
    |\mathcal{N}_{a,2}\mathcal{N}_{a,3}\rangle_{X_2}\\
    &\;\;\;\;\otimes|\mathcal{N}_{a,3}\mathcal{N}_{a,4}\rangle_{X_3}\otimes
    |\mathcal{N}_{a,4}\mathcal{N}_{a,1}\rangle_{X_4}.
\end{split}
\end{align} 
Here we are taking $X_1=A$, $X_2=B$, $X_3=C$, and $X_4=D$.
$P_a(\vec{\mathcal{N}}_a)$ is the product of projection operator eigenvalue for the four cylinders. 
For the Laughlin state, $P_a(\vec{\mathcal{N}}_a) = 1$.
For the untwisted sectors of the Moore-Read state, a cylinder projection operator eigenvalue can be factorized into a product of left and right edge projection operator eigenvalues for each cylinder, i.e., $P_{a,X_i}({\cal N}_{a,i}, {\cal N}_{a,i+1})=P_{a,i}({\cal N}_{a,i}) P_{a,i+1}({\cal N}_{a,i+1})$ with $P_{a,i}({\cal N}_{a,i})$ given in \eqref{untwistedprojectioneigenvalues}.
The density matrix $\rho_{ABC} = \tr_{X_4}|\Psi\rangle\langle\Psi|$ obtained by tracing out $X_4$ is
\begin{align}
    \begin{split}
        \rho_{ABC}&=
        \sum_{\mathcal{N}_{a,1},\mathcal{N}_{a,2},\mathcal{N}'_{a,2}}
        \frac{\lambda^2(\mathcal{N}_{a,1})}{Z_{a,1}}\frac{\lambda(\mathcal{N}_{a,2})}{\sqrt{Z_{a,2}}}\frac{\lambda(\mathcal{N}'_{a,2})}{\sqrt{Z_{a,2}}}\\
        &\;\;\;\;\times P_{a,X_1}|\mathcal{N}_{a,1}\mathcal{N}_{a,2}\rangle\langle\mathcal{N}_{a,1}\mathcal{N}'_{a,2}|_{X_1}P_{a,X_1}\\
        &\;\;\;\;\otimes 
        P_{a,2}|\mathcal{N}_{a,2}\rangle\langle
        \mathcal{N}'_{a,2}|_{LX_2}P_{a,2}\\
        &\;\;\;\;\otimes
        \sum_{\mathcal{N}_{a,3},\mathcal{N}'_{a,3},\mathcal{N}_{a,4}}
        \frac{\lambda^2(\mathcal{N}_{a,4})}{Z_{a,4}}\frac{\lambda(\mathcal{N}_{a,3})}{\sqrt{Z_{a,3}}}\frac{\lambda(\mathcal{N}'_{a,3})}{\sqrt{Z_{a,3}}}\\
        &\;\;\;\;\times P_{a,3}|\mathcal{N}_{a,3}\rangle\langle
        \mathcal{N}'_{a,3}|_{RX_2}P_{a,3}\\
        &\;\;\;\;\otimes
        P_{a,X_3}|\mathcal{N}_{a,3}\mathcal{N}_{a,4}\rangle\langle\mathcal{N}_{a,3}\mathcal{N}'_{a,4}|_{X_1}P_{a,X_3}\label{ABCdensitymatrixuntwist},
    \end{split}
\end{align} 
where the partition functions $Z_{a,i} = Z_a$ for all $i$ (for the uniform states we consider) with $Z_a$ defined in \eqref{laughlinsectorapartitionfunction} for the Laughlin state and in \eqref{sectorapartitionfunction} for the untwisted sectors of the Moore-Read state. 
By inspection, this admits the decomposition \eqref{eq-ssa-structrepeat}:
\begin{align}
    \rho_{ABC}=\rho_{A(LB)}^a\otimes\rho_{(RB)C}^a\label{Factorization2},
\end{align}
The density matrices associate with \eqref{Factorization2} are,
\begin{align}
    \begin{split}
        &\rho_{A(LB)}^a=\\
        &\;\;\;\;\sum_{\mathcal{N}_{a,2},\mathcal{N}'_{a,2}}
        \frac{\lambda(\mathcal{N}_{a,2})\lambda(\mathcal{N}'_{a,2})}{Z_{a,2}}\rho^{\mathcal{N}_{a,2}\mathcal{N}'_{a,2}}_{A,a}\rho^{\mathcal{N}_{a,2}\mathcal{N}'_{a,2}}_{LB,a}
    \end{split}\label{ZMtwistedreduceddensitymatrix}
\end{align}
where
\begin{align}
    \begin{split}
        &\rho_{A,a}^{\mathcal{N}_{a,2}\mathcal{N}'_{a,2}}=\\
        &\;\;\;\;\sum_{\mathcal{N}_{a,1}}\frac{\lambda^2(\mathcal{N}_{a,1})}{Z_{a,1}}P_{a,X_1}|\mathcal{N}_{a,1}\mathcal{N}_{a,2}\rangle\langle\mathcal{N}_{a,1}\mathcal{N}'_{a,2}|_{X_1}P_{a,X_1},\\
        &\rho^{\mathcal{N}_{a,2}\mathcal{N}'_{a,2}}_{LB,a}=
        P_{a,2}|\mathcal{N}_{a,2}\rangle\langle\mathcal{N}'_{a,2}|_{LX_2}P_{a,2}.
    \end{split}
\end{align}
$\rho_{(RB)C,a}$ has a similar decomposition,
\begin{align}
    \begin{split}
        &\rho_{(RB)C}^a=\\
        &\;\;\;\;\sum_{\mathcal{N}_{a,3},\mathcal{N}'_{a,3}}\frac{\lambda(\mathcal{N}_{a,3})\lambda(\mathcal{N}'_{a,3})}{Z_{a,3}}\rho^{\mathcal{N}_{a,3}\mathcal{N}'_{a,3}}_{RB,a}\rho^{\mathcal{N}_{a,3}\mathcal{N}'_{a,3}}_{C,a}.
    \end{split}
\end{align}

\subsubsection{Twisted Sector Moore-Read States}
For the twisted sectors of the Moore-Read state, we need to split $B$ into two consecutive cylinders $B_I$ and $B_{II}$. Specifically, we take $X_1=A$, $X_2=B_I$, $X_3=B_{II}$, $X_4=C$, $X_5=D_1$ and $X_6=D_2$ (see Fig.~\ref{fig:ABCDgeom}) and begin with the generic twisted sector pure state,
\begin{figure}[h]
\centering
\includegraphics[width=0.4\textwidth]{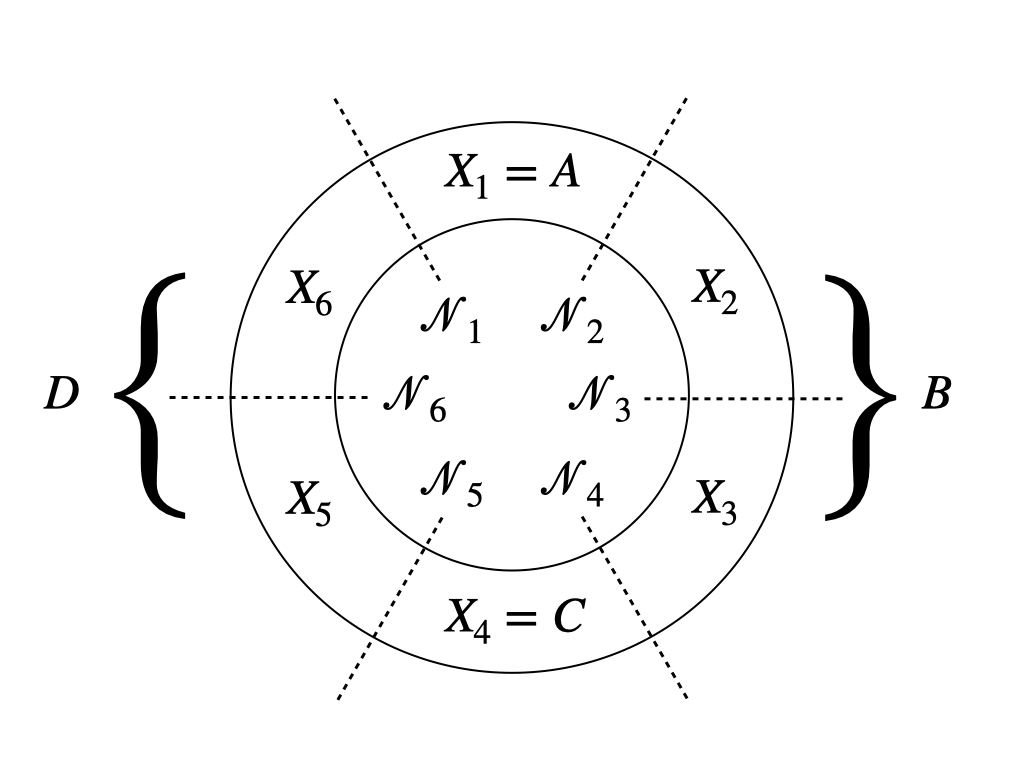}
\caption{A torus is divided by the dashed lines into the $ABCD_1D_2$ geometry: Region $A$ is the $X_1$ cylinder while region $C$ the $X_4$ cylinder. Region $B$ is the union of $X_2$ and $X_3$, and $D$ the union of $X_5$ and $X_6$.}\label{fig:ABCDgeom}
\end{figure} 

The ground state of a fixed twisted sector $a=\xi^{r+1/2}$ is again described by \eqref{eq:AB1CB2torusState}.
We will show the reduced density matrix $\rho^{\rm twist}_{ABC}$, after tracing out subsystem $D$, cannot factorize according \eqref{eq-ssa-structrepeat}. Similar to the previous $AB_1CB_2$ geometry in the last subsection, it suffice to focus on the zero mode sector. The ground state in the zero mode sector is $|{\rm GS}\rangle=\sum_{\{\gamma_i\}|_C}|\gamma_1\ldots\gamma_6\rangle/(4\sqrt{2})$, where the sum is restricted by $\sum_{i=1}^6\gamma_i=1$ mod 2. 

The reduced density matrix is \begin{align}\begin{split}\rho^{\rm twist}_{ABC}&=\Tr_D\left(|{\rm GS}\rangle\langle {\rm GS}|\right)\\&=\frac{1}{16}\sum_{\substack{\gamma_1+\ldots+\gamma_4\\=\gamma'_1+\ldots+\gamma'_4}}|\gamma_1\ldots\gamma_4\rangle\langle\gamma'_1\ldots\gamma'_4|.\label{twistedABCDrdm}\end{split}\end{align} To show that it does not decompose, we follow a similar procedure to before and assume the contrary that $\rho^{\rm twist}_{ABC}=\sum_j p_j \rho_{AB_I^j}\otimes\rho_{B_{II}^jC}$, where $p_j$ are probabilities satisfying $\sum_jp_j=1$. 
Tracing over subsystem $B$, the factorization would imply $\Tr_B(\rho_{ABC}^{\rm twist})=\rho_A\otimes\rho_C$. At the same time, from \eqref{twistedABCDrdm}, \begin{align}\begin{split}
&\Tr_B(\rho_{ABC}^{\rm twist})=\frac{1}{4}\sum_{\substack{\gamma_1+\gamma_4\\=\gamma'_1+\gamma'_4}}|\gamma_1\gamma_4\rangle\langle\gamma'_1\gamma'_4|,\\
&\rho_A=\Tr_{BC}\left(\rho^{\rm twist}_{ABC}\right)=\frac{1}{2}\sum_{\gamma_1=0,1}|\gamma_1\rangle\langle\gamma_1|,\\&\rho_C=\Tr_{AB}\left(\rho^{\rm twist}_{ABC}\right)=\frac{1}{2}\sum_{\gamma_4=0,1}|\gamma_4\rangle\langle\gamma_4|.\end{split}\end{align} However, this would lead to a contradiction because $\rho_A\otimes\rho_C=\frac{1}{4}\sum_{\gamma_1\gamma_4}|\gamma_1\gamma_4\rangle\langle\gamma_1\gamma_4|$, which disagrees with $\Tr_B(\rho_{ABC}^{\rm twist})$ in the equation above. Therefore, the assumption that the reduced density matrix decomposes, $\rho^{\rm twist}_{ABC}=\sum_j p_j \rho_{AB_I^j}\otimes\rho_{B_{II}^jC}$, must be false.

\section{Discussion and Conclusion}
\label{sectionfive}

In this paper, we studied multipartite entanglement in the Laughlin and Moore-Read ground state wavefunctions.
Our main results for the entanglement negativity of these states are summarized in Eqs.~\eqref{torusresult} and \eqref{cylinderresult}.
From these entanglement negativities, we constructed a disentangling condition \eqref{longrangeentanglementgeneral} for whether states can be disentangled, i.e., decomposed according to either \eqref{eq-disent-decomp-pure} or \eqref{eq-ssa-struct}.
The disentangling condition is only satisfied by states in a definite topological sector.
We found the disentangling condition to be a necessary and sufficient condition to disentangle the Laughlin and untwisted sector Moore-Read states.

Despite satisfying the disentangling condition, a twisted sector Moore-Read ground state wavefunction on the torus cannot be disentangled.
The obstruction is due to the lack of a tensor product decomposition of the twisted sector torus Hilbert space into appropriate subspaces. It would be interesting to find a generalization of the disentangling condition, perhaps one that involves the partial time-reversal \cite{2017PhRvB..95p5101S} or anyonic partial transpose \cite{2020arXiv201202222S}, that is sensitive to this particular obstruction to wavefunction disentanglement.

Our results rely on the cut and glue construction of topological ground states.
In this approach, the correlation length is zero.
With finite correlation length, we expect exponentially suppressed corrections to appear in the disentangling condition. 
It would also be interesting to consider the disentangling condition at phase transitions where the correlation length is infinite.

We focused on the Laughlin and Moore-Read topological states.
We expect that our entanglement negativity results hold for more general topological states in $2+1$ dimensions, such as those phases hosting metaplectic anyons \cite{2013PhRvB..87p5421H} and Fibonacci anyons \cite{ChetanSimonSternFreedmanDasSarma}. It is unclear to us whether the corresponding wavefunctions for such states might disentangle, as the fusion rule structure of general states is more intricate than the Laughlin and Moore-Read states.
Fracton orders in $3+1$ dimensions \cite{doi:10.1146/annurev-conmatphys-031218-013604} have similar entanglement signatures as their lower-dimensional ``conventional" topologically ordered counterparts \cite{2018arXiv180310426S}.
Recent work has shown how certain types of fracton order obtain from coupled-wire constructions \cite{PhysRevResearch.3.023123,PhysRevB.103.205301} or from infinite-component (2+1)-dimensional Chern-Simons gauge theory \cite{2020arXiv201008917M}.
The multipartite entanglement characteristics of this order are yet to be understood.

\section*{Acknowledgments}

We thank Sudip Chakravarty for useful conversations and correspondence.
This material is based upon work supported by the U.S. Department of Energy, Office of Science, Office of Basic Energy Sciences under Award No. DE-SC0020007. J.C.Y.T is supported by the National Science Foundation under Grant No. DMR-1653535.

\appendix

\section{Modular and Character Functions}\label{ModularFunctions}
\subsection{Modular functions} 
For all the modular functions in this paper, we will follow directly the notation from Sohal {\it et~al.} \cite{sohal-2020}.
The nome $q$ is defined by 
\begin{align}
    q=e^{2\pi\tau}.
\end{align}
For the fictitious inverse temperature $\beta=1/T$, the modular parameter $\tau$ for our physical systems is defined by 
\begin{align}
    \tau = i\tau_2 = \frac{i\beta v_e}{L},
\end{align} where $\tau \in \mathbb{C}$.
The Dedekind's $\eta$ function is defined as 
\begin{align}
    \eta(\tau)=q^{1/24}\prod_{n=1}^\infty (1-q^n).\label{EtaFunctionForm}
\end{align} The modular transformations (with $\tau\rightarrow -1/\tau$ and $\tau\rightarrow \tau +1$) of the eta function gives us the following relations
\begin{align}
    \eta(-1/\tau) &= \sqrt{-i\tau}\eta(\tau),\\
    \eta(\tau+1)&=e^{i\pi/12}\eta(\tau).
\end{align}
The more general Jacobi theta functions are defined by 
\begin{align}
    \theta^\alpha_\beta(\tau)&=\sum_{n\in\mathbb{Z}}q^{\frac{1}{2}(n+\alpha)^2}e^{2\pi i(n+\alpha)\beta}.\label{AlmostGeneralThetaFunc}
\end{align} Under modular transformations, the theta functions satisfy the following relations
\begin{align}
    \theta^\alpha_\beta(-1/\tau) &= \sqrt{-i\tau}e^{2\pi i\alpha\beta}\theta^\beta_{-\alpha}(\tau),\\
    \theta^\alpha_\beta(\tau+1) &= e^{-\pi i\alpha(\alpha-1)}\theta^\alpha_{\alpha+\beta-\frac{1}{2}}(\tau).
\end{align}
Recall that $\tau = i\tau_2$ where $\tau_2\in\mathbb{R}^+$. As $\tau_2\rightarrow \infty$, the modular functions approach asymptomatic values
\begin{align}
    \lim_{\tau_2\rightarrow\infty}\eta(i/\tau_2)&=q^{1/24},\\
    \lim_{\tau_2\rightarrow\infty}\theta^\alpha_\beta(i/\tau_2)&=\delta_{\alpha,0}.
\end{align}

The standard theta functions can be written in terms of the above general theta functions in the following form,
\begin{align}
    \theta_2(\tau)&=\sum_{n\in\mathbb{Z}}q^{\left(n+\frac{1}{2}\right)^2}/2=\theta^{1/2}_0(\tau),\\
    \theta_3(\tau)&=\sum_{n\in\mathbb{Z}}q^{n^2/2}=\theta^0_0(\tau),\\
    \theta_4(\tau)&=\sum_{n\in\mathbb{Z}}(-1)^nq^{n^2/2}=\theta^0_{1/2}(\tau).
\end{align}

Include here also three product representation of theta functions \begin{align}
    \theta_2(\tau)&=2\sqrt[4]{q}\prod_{j>0}(1-q^{2j})(1+q^{2j})^2\label{Theta2},\\
    \theta_3(\tau)&=\prod_{j>0}(1-q^{2j})(1+q^{2j-1})^2,\label{ProductTheta3}\\
    \theta_4(\tau)&=\prod_{j>0}(1-q^{2j})(1-q^{2j-1})^2\label{ProductTheta4}.
\end{align}

\subsection{Character functions} 
From the entanglement Hamiltonian for fermions, its partition function under anti-periodic boundary condition ($k=\frac{2\pi}{L}(j+\frac{1}{2})$), the fermionic partition function is instead
\begin{align}
    \begin{split}
        &\;\;\;\;\sum_{n_{i,k}=0,1}e^{-\sum_{k>0}\beta \tilde{v}_ek(n_{i,k}+1/2)}\\
          &=\tilde{q}^{\frac{1}{24}}\prod_{j>0}\sum_{n_{i,k}=0,1}(\tilde{q}^{j+1/2})^{n_{i,k}}\\
        &=\tilde{q}^{-\frac{1}{48}}\prod_{j>0}(1+\tilde{q}^{j+1/2}).
    \end{split}
\end{align}
Under the action of parity $(-1)^{\sum_{k>0}n_{i,k}}$, the partition functions now become \begin{align}
    \begin{split}
    &\;\;\;\;\sum_{n_{i,k}=0,1}(-1)^{\sum_{k>0}n_{i,k}}e^{-\sum_{k>0}\beta \tilde{v}_ek(n_{i,k}+1/2)}\\
    &=
    \tilde{q}^{\frac{1}{24}}\prod_{j>0}
    \sum_{n_{i,k}=0,1}e^{i\pi n_{i,k}}
    e^{-\frac{2\pi i\beta\tilde{v}_e }{L}\left(j+\frac{1}{2}\right)n_{i,k}}\\
    &=\tilde{q}^{-\frac{1}{48}}\prod_{j>0}\left(1-\tilde{q}^{j+1/2}\right).
    \end{split}
\end{align}
Using \eqref{ProductTheta3}, the partition function can be recast in terms of modular function as \begin{align}
    \begin{split}
        &\;\;\;\;\tilde{q}^{-\frac{1}{48}}\prod_{j>0}\left(1+\tilde{q}^{j+1/2}\right)\\ &=\sqrt{\tilde{q}^{-\frac{1}{24}}\prod_{j>0}(1+\tilde{q}^{\frac{1}{2}(2j-1)})^2}\\
        &=\sqrt{\frac{
        \prod_{j>0}\left(1+\tilde{q}^{\frac{1}{2}(2j-1)}\right)^2\left(1-\tilde{q}^{\frac{1}{2}(2j)}\right)}{
        \tilde{q}^{\frac{1}{24}}\prod_{j>0}(1-\tilde{q}^j)}}\\
        &=\sqrt{\frac{\theta^0_0(\tilde{\tau})}{\eta(\tilde{\tau})}}.
    \end{split}
\end{align} Similarly using \eqref{ProductTheta4}, 
\begin{align}
    \begin{split}
        &\;\;\;\;\tilde{q}^{-\frac{1}{48}}\prod_{j>0}\left(1-\tilde{q}^{j+1/2}\right)\\
        &=\sqrt{\tilde{q}^{-\frac{1}{24}}\prod_{j>0}\left(1-\tilde{q}^{\frac{1}{2}\left(2j-1\right)}\right)^2}\\
        &=\sqrt{\frac{
        \prod_{j>0}\left(1-\tilde{q}^{\frac{1}{2}(2j-1)}\right)^2\left(1-\tilde{q}^{\frac{1}{2}(2j)}\right)}{
        \tilde{q}^{\frac{1}{24}}\prod_{j>0}(1-\tilde{q}^j)}}\\
        &=\sqrt{\frac{\theta^0_{1/2}(\tilde{\tau})}{\eta(\tilde{\tau})}}.
    \end{split}
\end{align} Thus the character functions in the untwisted sector have the forms
\begin{align}
    \begin{split}
        \chi^{\mathrm{Ising}}_0(\tilde{q}) &=\frac{1}{2}\sqrt{\frac{\theta^0_0(\tilde{\tau})}{\eta(\tilde{\tau})}} +\frac{1}{2}\sqrt{\frac{\theta^0_{1/2}(\tilde{\tau})}{\eta(\tilde{\tau})}},\\
        \chi^{\mathrm{Ising}}_{1/2}(\tilde{q}) &=\frac{1}{2}\sqrt{\frac{\theta^0_0(\tilde{\tau})}{\eta(\tilde{\tau})}} -\frac{1}{2}\sqrt{\frac{\theta^0_{1/2}(\tilde{\tau})}{\eta(\tilde{\tau})}}.
    \end{split}
\end{align}

On the other hand, for periodic boundary condition ($k=\frac{2\pi j}{L}, j\in \mathbb{Z})$, the fermionic partition functions is instead 
\begin{align}
    \begin{split}
        &\;\;\;\;\sum_{n_{i,k}=0,1}e^{-\sum_{k>0}\beta \tilde{v}_ek(n_{i,k}+1/2)}\\
        &=\tilde{q}^{\frac{1}{24}}\prod_{j>0}\sum_{n_{i,k}=0,1}(\tilde{q}^j)^{n_{i,k}}\\
        &=\tilde{q}^{\frac{1}{24}}\prod_{j>0}(1+\tilde{q}^j).\label{Ising16Partition}
    \end{split}
\end{align}  We now rewrite \eqref{Ising16Partition} in terms of modular function using \eqref{Theta2}:
\begin{align}
    \begin{split}
        &\;\;\;\;\tilde{q}^{\frac{1}{24}}\prod_{j>0}(1+\tilde{q}^j)\\
        &=\sqrt{\tilde{q}^{1/12}\prod_{j>0}(1+\tilde{q}^j)^2}\\
        &=\frac{2
        \tilde{q}^{\frac{1}{8}\prod_{j>0}\left(1+\tilde{q}^{\frac{1}{2}(2j)}\right)^2\big(1-\tilde{q}\big)^{\frac{1}{2}(2j)}}
        }{2\tilde{q}^{\frac{1}{24}}\prod_{j>0}(1-\tilde{q}^j)}\\
        &=\sqrt{\frac{\theta^{1/2}_0(\tilde{\tau})}{2\eta(\tilde{\tau})}}.
    \end{split}
\end{align} So the character function in the twisted sector has the form \begin{align}
  \chi^{\text{Ising}}_{1/16}(\tilde{q})=  \sqrt{\frac{\theta^{1/2}_0(\tilde{\tau})}{2\eta(\tilde{\tau})}}.
\end{align} The rest of character functions in this paper can be read off from \eqref{EtaFunctionForm} and \eqref{AlmostGeneralThetaFunc}.

%\bibliographystyle{plain}
%\bibliography{ENegbib}

\end{document}